\pdfoutput=1
\RequirePackage{ifpdf}
\ifpdf 
\documentclass[pdftex]{sigma}
\else
\documentclass{sigma}
\fi

\numberwithin{equation}{section}

\newtheorem{Theorem}{Theorem}[section]
\newtheorem{Corollary}[Theorem]{Corollary}

{ \theoremstyle{definition}
\newtheorem{Definition}[Theorem]{Definition}
\newtheorem{Example}[Theorem]{Example}
\newtheorem{Remark}[Theorem]{Remark} }

\usepackage{tikz}
\newcommand{\R}{\mathbb{R}}
\newcommand{\abs}[1]{\left\lvert #1 \right\rvert}
\newcommand{\order}[1]{\mathcal{O}\left( #1 \right)}
\DeclareMathOperator{\sgn}{sgn}
\DeclareMathOperator{\artanh}{artanh}
\DeclareMathOperator{\Ei}{Ei}

\DeclareMathOperator{\imag}{Im}
\hyphenation{self-adjoint anti-peakon anti-peakons}

\begin{document}
\allowdisplaybreaks

\newcommand{\arXivNumber}{1807.01910}

\renewcommand{\PaperNumber}{017}

\FirstPageHeading

\ShortArticleName{Ghostpeakons and Characteristic Curves for the CH, DP and Novikov Equations}

\ArticleName{Ghostpeakons and Characteristic Curves\\ for the Camassa--Holm, Degasperis--Procesi\\ and Novikov Equations}

\Author{Hans LUNDMARK and Budor SHUAIB}
\AuthorNameForHeading{H.~Lundmark and B.~Shuaib}
\Address{Department of Mathematics, Link\"oping University, SE-581 83 Link\"oping, Sweden}
\Email{\href{mailto:hans.lundmark@liu.se}{hans.lundmark@liu.se}, \href{mailto:budor.shuaib@liu.se}{budor.shuaib@liu.se}, \href{mailto:budor.m@gmail.com>}{budor.m@gmail.com}}

\ArticleDates{Received July 06, 2018, in final form February 19, 2019; Published online March 06, 2019}

\Abstract{We derive explicit formulas for the characteristic curves associated with the multipeakon solutions of the Camassa--Holm, Degasperis--Procesi and Novikov equations. Such a curve traces the path of a fluid particle whose instantaneous velocity equals the elevation of the wave at that point (or the square of the elevation, in the Novikov case). The peakons themselves follow characteristic curves, and the remaining characteristic curves can be viewed as paths of ``ghostpeakons'' with zero amplitude; hence, they can be obtained as solutions of the ODEs governing the dynamics of multi\-peakon solutions. The previously known solution formulas for multipeakons only cover the case when all amplitudes are nonzero, since they are based upon inverse spectral methods unable to detect the ghostpeakons. We show how to overcome this problem by taking a suitable limit in terms of spectral data, in order to force a selected peakon amplitude to become zero. Moreover, we use direct integration to compute the characteristic curves for the solution of the Degasperis--Procesi equation where a shockpeakon forms at a peakon--antipeakon collision. In addition to the theoretical interest in knowing the characteristic curves, they are also useful for plotting multipeakon solutions, as we illustrate in several examples.}

\Keywords{peakons; characteristic curves; Camassa--Holm equation; Degasperis--Procesi equation; Novikov equation}

\Classification{35C05; 35C08; 70H06; 37J35; 35A30}

\section{Introduction}\label{sec:intro}

The Camassa--Holm (CH) equation
\begin{gather} \label{eq:CH-kappa}
 m_t + 2 \kappa u_x + m_x u + 2 m u_x = 0 ,\qquad m = u - u_{xx} ,
\end{gather}
was proposed as a model for shallow water waves by Camassa and Holm~\cite{camassa-holm:1993:CH-orginal-paper}, with the sought function $u(x,t)$ being the horizontal fluid velocity component and $\kappa$ a positive parameter.
For more information about the role of this equation in water wave theory we refer to other works
\cite{constantin-lannes:2009:hydrodynamical-CH-DP,dullin-gottwald-holm:2001:integrable-shallow-water-linear-nonlinear-dispersion,
 dullin-gottwald-holm:2003:CH-KdV5-other-asymptotically-equivalent-shallow-water, dullin-gottwald-holm:2004:asymptotically-equivalent-shallow-water, johnson:2002:CH-KdV-water-waves, johnson:2003:classical-water-waves}.
In this article we will be concerned exclusively with its mathematical properties as a PDE and as an integrable system, and moreover we will only consider the limiting case with $\kappa=0$,
\begin{gather} \label{eq:CH}
 m_t + m_x u + 2 m u_x = 0 ,\qquad m = u - u_{xx} ,
\end{gather}
which in a weak sense admits peaked soliton solutions, \emph{peakons} for short. The $N$-peakon solution takes the simple form
\begin{gather} \label{eq:multipeakons}
 u(x,t) = \sum_{i=1}^N m_i(t) {\rm e}^{-|x-x_i(t)|} ,
\end{gather}
where the positions~$x_i(t)$ and the amplitudes~$m_i(t)$ are determined by the system of ODEs~\eqref{eq:CH-peakon-ODEs} below.

The main aim of this article is to give explicit formulas for the \emph{characteristic curves} $x = \xi(t)$ associated with the Camassa--Holm multipeakon solutions~\eqref{eq:multipeakons}. We obtain these formulas as solutions of the peakon ODEs~\eqref{eq:CH-peakon-ODEs} where some amplitudes $m_k(t)$ are identically zero; these ``ghostpeakon'' solutions, as we call them, are in turn obtained from the known explicit formulas for
ordinary peakon solutions via a relatively simple limiting procedure, thereby avoiding the problem of directly trying to integrate the ODE $\dot \xi(t) = u( \xi(t), t)$ for the characteristics. We also solve the corresponding problem for two mathematical relatives of the Camassa--Holm equation, described in more detail below, namely the Degasperis--Procesi equation~\eqref{eq:DP} and the Novikov equation~\eqref{eq:Novikov}.

Having formulas for the characteristic curves makes it easy to produce good three-dimensional plots of multipeakon solutions, and a secondary aim of the article, which partly serves an expository and pedagogical purpose,
is to provide illustrations of phenomena such as peakon--antipeakon collisions and different continuations of the solution past a collision (conservative vs. dissipative).

\subsection*{Outline of the article}

Section~\ref{sec:background} contains some background material about the Camassa--Holm equation, weak solutions in general, peakons in particular, and the role of characteristic curves when considering the problem of non-uniqueness of weak solutions, and similarly for the Degasperis--Procesi and Novikov equations. Remark~\ref{rem:plotting} describes how knowledge of the characteristic curves is useful for plotting multipeakon solutions. Remark~\ref{rem:GX} describes yet another motivation for the current study, namely to develop techniques that can be used to obtain the most general peakon solution of the two-component Geng--Xue equation~\eqref{eq:GX}.

In Section~\ref{sec:solution-formulas} we recall the known explicit solution formulas for the $N$-peakon ODEs with all amplitudes nonzero. This includes explaining the notation needed for the rest of the article. Remarks~\ref{rem:CH-threepeakon-parameters} and~\ref{rem:DP-threepeakon-parameters} give symmetric ways of writing the Camassa--Holm and Degasperis--Procesi three-peakon solutions, which are new as far as we know.

The formulas for Camassa--Holm ghostpeakons (or, equivalently, the characteristic curves of the $N$-peakon solutions) are then derived and exemplified in Section~\ref{sec:CH-ghost}. Theorem~\ref{thm:CH-ghost} and Corollary~\ref{cor:CH-add-a-ghostpeakon} contain the main results. Remarks~\ref{rem:Matsuno} and~\ref{rem:Matsuno-continued} outline alternative proofs based on direct integration. Example~\ref{ex:CH-3p-ghost} illustrates a pure three-peakon solution and its characteristic curves, Example~\ref{ex:CH-2p-1ap-ghost} does the same for a conservative solution with two peakons and one antipeakon, and Example~\ref{ex:CH-dissipative-2p-1ap} treats the dissipative case where peakons merge at collisions; here we show how to obtain explicit formulas for the whole solution (and its characteristic curves) by gluing a three-peakon solution to a two-peakon solution, and then the two-peakon to a one-peakon solution, recalculating the spectral parameters at each collision.

The corresponding results for the Degasperis--Procesi equation are given in Section~\ref{sec:DP-ghost}, which also contains a calculation (by direct integration) of the characteristics for the Degasperis--Procesi one-shockpeakon solution which forms at a peakon--antipeakon collision; the symmetric case is treated in Example~\ref{ex:DP-p-ap-symm} and the asymmetric case in Example~\ref{ex:DP-p-ap-asymm}.

Finally, the formulas for Novikov ghostpeakons are derived in Section~\ref{sec:Novikov-ghost}. The plotting technique described in Remark~\ref{rem:plotting} does not quite work for Novikov's equation, since the relation $\dot \xi = u^2$ only gives the absolute value $\abs{u}$ and not the sign of~$u$. Instead, Theorem~\ref{thm:Novikov-u-and-ughost} provides a formula for computing~$u(\xi(t,\theta),t)$ directly, which allows us to also plot peakon--antipeakon solutions of Novikov's equation; see Fig.~\ref{fig:Novikov-1p-4cluster-wave} in Example~\ref{ex:Novikov-1p-4cluster}, which illustrates a solution with $N=5$, where a cluster of four peakons and antipeakons interacts with a single peakon.

\section{Background}\label{sec:background}

Functions $u(x,t)$ of the form~\eqref{eq:multipeakons} are not classically differentiable where $x = x_i(t)$, but if the derivatives are taken in the sense of distributions, then the quantity
\begin{gather*}
 m(x,t) = u(x,t) - u_{xx}(x,t) = 2 \sum_{i=1}^N m_i(t) \delta\bigl( x - x_i(t) \bigr)
\end{gather*}
is a linear combination of Dirac delta distributions
(and this also motivates the notation~$m_i$ for the amplitudes).
Inserting this directly into the Camassa--Holm equation~\eqref{eq:CH} leads to problems
with the multiplication in the term~$m u_x$,
since $u_x$ is undefined (with a jump singularity)
at exactly the points where the Dirac deltas in~$m$ are located,
and there are similar issues with the interpretation of the term~$m_x u$.
In order to rigorously define weak solutions,
one can reformulate~\eqref{eq:CH} by writing it as
\begin{gather*}
 \big(1-\partial_x^2\big)\big[u_t+\big(\tfrac12 u^2\big)_x\big] + \big(u^2 + \tfrac12 u_x^2\big)_x = 0
\end{gather*}
and inverting the operator $1-\partial_x^2$. In the context of peakon solutions, which tend to zero as $\abs{x} \to \infty$, this is achieved through convolution with the function $\tfrac12 {\rm e}^{-\abs{x}}$, so that the equation becomes
\begin{gather} \label{eq:CH-eulerian}
 u_t + \partial_x \bigl( \tfrac12 u^2 + \tfrac12 {\rm e}^{-\abs{x}} * \bigl( u^2 + \tfrac12 u_x^2 \bigr) \bigr) = 0 .
\end{gather}
Weak solutions can then be defined by integrating this against a test function. Weak solutions of the initial-value problem for~\eqref{eq:CH-eulerian} are not unique, and some additional criterion must be imposed in order to single out the solution one is interested in, for example conservation or dissipation of the $H^1$ energy $E(t) = \int_{\R} \big(u^2 + u_x^2\big) {\rm d}x$; see Remark~\ref{rem:characteristics-review} below, where we give some references and indicate why characteristic curves are important in this context.

The ansatz~\eqref{eq:multipeakons} turns out to satisfy \eqref{eq:CH-eulerian} in the weak sense if and only if the positions~$x_k(t)$ and amplitudes~$m_k(t)$ of the peakons satisfy the canonical equations generated by the Hamiltonian
\begin{gather*}
 H(x_1,\dots,x_N,m_1,\dots,m_N) = \frac12 \sum_{i,j=1}^N m_i m_j {\rm e}^{-|x_i-x_j|} ,
\end{gather*}
namely
\begin{gather}
 \dot x_k = \phantom{-} \frac{\partial H}{\partial m_k} = \sum_{i=1}^N m_i {\rm e}^{-|x_k-x_i|} ,\nonumber\\
 \dot m_k = -\frac{\partial H}{\partial x_k} = \sum_{i=1}^N m_k m_i \sgn(x_k-x_i) {\rm e}^{-|x_k-x_i|} . \label{eq:CH-peakon-ODEs}
\end{gather}
Here we use the convention $\sgn(0)=0$, and it is also assumed that all~$x_k$ are distinct, to avoid points where $H$ is not differentiable. In fact, in this paper we will always assume that the positions are numbered in
increasing order, $x_1(t) < \dots < x_N(t)$. It is known that \emph{pure peakon solutions}, meaning solutions where all amplitudes $m_k(t)$ are positive, are defined for all~$t \in \R$ and automatically preserve this ordering condition, and likewise for \emph{pure antipeakon solutions} where all amplitudes are negative. For \emph{mixed peakon--antipeakon solutions}, with some amplitudes positive and others negative, there will however in general be singularities in the form of \emph{peakon--antipeakon collisions}, where
\begin{gather*}
 x_{k+1}(t) - x_k(t) \to 0 ,\qquad m_k(t) \to +\infty ,\qquad m_{k+1}(t) \to -\infty ,\qquad \text{as $t \nearrow t_0$} ,
\end{gather*}
for some time~$t_0$ and some index~$k$. The question of continuation beyond the singularity is quite subtle, as already the simplest example shows.

\begin{Example} The symmetric peakon--antipeakon solution with $N=2$ was given already in the original Camassa--Holm paper~\cite{camassa-holm:1993:CH-orginal-paper}:
 \begin{gather} \label{eq:CH-symmetric-conservative}
 u(x,t) = \begin{cases}
 \dfrac{c}{\tanh c t} \big({-}{\rm e}^{-\abs{x + \ln \cosh c t}} + {\rm e}^{-\abs{x - \ln \cosh c t}} \big) ,& t < 0, \\
 0 ,& t = 0, \\
 \dfrac{c}{\tanh c t} \big({-}{\rm e}^{-\abs{x + \ln \cosh c t}} + {\rm e}^{-\abs{x - \ln \cosh c t}} \big) ,& t > 0 ,
 \end{cases}
 \end{gather}
 where $c>0$. This is a so-called \emph{conservative} weak solution, where the peakon and antipeakon cancel out completely at the instant of collision ($t=0$), but then immediately reappar. However, another function which also satisfies~\eqref{eq:CH-eulerian} in the weak sense is the following \emph{dissipative} weak solution,
 where $u$ stays identically zero after the collision:
 \begin{gather}
 \label{eq:CH-symmetric-dissipative}
 u(x,t) = \begin{cases}
 \dfrac{c}{\tanh c t} \big({-}{\rm e}^{-\abs{x + \ln \cosh c t}} + {\rm e}^{-\abs{x - \ln \cosh c t}} \big) ,& t < 0, \\
 0 ,& t \ge 0.
 \end{cases}
 \end{gather}
 And yet another (rather unphysical) weak solution has a peakon--antipeakon pair spontaneously appearing out of nowhere:
 \begin{gather}
 \label{eq:CH-symmetric-nonphysical}
 u(x,t) = \begin{cases}
 0 ,& t \le 0, \\
 \dfrac{c}{\tanh c t} \big( {-}{\rm e}^{-\abs{x + \ln \cosh c t}} + {\rm e}^{-\abs{x - \ln \cosh c t}} \big) ,& t > 0 .
 \end{cases}
 \end{gather}
Clearly we must rule out this kind of behaviour, where peakon--antipeakon pairs can be created anywhere at any moment, if we want the solution to be unique. The solution~\eqref{eq:CH-symmetric-nonphysical} also has the undesirable property that the energy $E(t) = \int_{\R} \big(u^2 + u_x^2\big) {\rm d}x$ increases; it jumps from being zero for $t \le 0$ to being positive for $t > 0$. The condition which singles out the conservative solution~\eqref{eq:CH-symmetric-conservative} is that $E(t)$ is required to be constant for almost all~$t$ (in this case, all $t \neq 0$), while the condition that gives the dissipative solution~\eqref{eq:CH-symmetric-dissipative} is that $E(t)$ is required to be a non-increasing function of~$t$. Both these criteria give the same unique two-peakon solution up until the time of collision, but pick out different continuations after the collision.
\end{Example}

The $2N$ coupled nonlinear ODEs~\eqref{eq:CH-peakon-ODEs} form a completely integrable finite-dimensional Hamiltonian system, and explicit formulas for the solution $\{ x_k(t), m_k(t) \}_{k=1}^N$ were derived by Beals, Sattinger and Szmigielski~\cite{beals-sattinger-szmigielski:1999:stieltjes,beals-sattinger-szmigielski:2000:moment} using inverse spectral techniques based on the Lax pair for the Camassa--Holm equation. These formulas will be recalled in Section~\ref{sec:CH-N}. In the derivation it is assumed that all the amplitudes~$m_k(t)$ are nonzero. This is natural, since if some $m_k(t)$ is zero for some~$t$, then the ODEs~\eqref{eq:CH-peakon-ODEs}
imply that $m_k(t)$ stays zero for all~$t$, and therefore the term $m_k {\rm e}^{-|x-x_k|}$ does not contribute to the function $u(x,t)$ given by~\eqref{eq:multipeakons}, and can be disregarded. This means that the Beals--Sattinger--Szmigielski formulas indeed provide the general multipeakon solution~\eqref{eq:multipeakons} of the PDE, at least locally, away from peakon--antipeakon collisions.

However, if we view the system of ODEs~\eqref{eq:CH-peakon-ODEs} as an integrable system in its own right, the formulas do \emph{not} give the most general solution. For example, if one~$m_k$ vanishes identically, and the other amplitudes~$m_i$ are nonzero, then the ODEs for $\{x_i, m_i \}_{i \neq k}$ reduce to the $(N-1)$-peakon ODEs with nonzero amplitudes, for which we know the solution, but feeding this solution into the remaining equation for $x_k$ gives a non-autonomous ODE $\dot x_k = f(x_k,t)$, and it is definitely not obvious how to integrate this equation to obtain~$x_k(t)$ explicitly. The variable~$x_k$ is not directly accessible to the inverse spectral technique, since a peakon with amplitude zero leaves no trace in the spectral data. If several amplitudes~$m_k$ are identically zero, then in the same way we are left with a non-autonomous ODE for each corresponding~$x_k$,
but they are all of the same form and not coupled to each other, so if we know how to solve a typical one, we can solve them all.

One of the purposes of this article is to demonstrate how these exceptional solutions of the peakon ODEs~\eqref{eq:CH-peakon-ODEs}, with one or several~$m_k$ vanishing identically, can be obtained from the Beals--Sattinger--Szmigielski formulas via a relatively simple limiting procedure. This will remedy the somewhat peculiar situation of having an integrable system that we know how to integrate in the generic case, but not in the seemingly simpler case when some of the variables are identically zero.

We find the following terminology convenient:

\begin{Definition} \label{def:ghostpeakon} A solution $\{ x_i(t), m_i(t) \}_{i=1}^N$ of the Camassa--Holm $N$-peakon ODEs~\eqref{eq:CH-peakon-ODEs} is said to have a \textit{ghostpeakon} at site~$k$ if $m_k(t)=0$ for all~$t$.
 The corresponding function $x_k(t)$ will be referred to as \textit{the position of the ghostpeakon}, and \textit{the trajectory of the ghostpeakon} is the curve $x=x_k(t)$ in the $(x,t)$ plane.
\end{Definition}

\begin{Remark} \label{rem:brasklapp} To avoid possible misunderstandings, we emphasize that this (by definition) is nothing but terminology relating to the system of ODEs~\eqref{eq:CH-peakon-ODEs}, viewed as a~finite-dimensional dynamical system in its own right. The actual wave~\eqref{eq:multipeakons}, i.e., the solution $u(x,t)$ of the Camassa--Holm PDE~\eqref{eq:CH}, is made up entirely of ordinary peakons with nonzero amplitudes, so ghostpeakons are not a physical phenomenon, and they do not influence the ordinary peakons in any way. (It is the other way around: the ordinary peakons determine the dynamics of the ghostpeakons.) Despite this, ghostpeakons actually do have some relevance to the understanding of the PDE solution~$u(x,t)$, since the ghostpeakon trajectories are characteristic curves for the multipeakon solution $u(x,t)$ formed by the ordinary peakons, as we will explain below.
\end{Remark}

\begin{Definition} \label{def:char-curve-CH} The \textit{characteristic curves} for a given solution $u(x,t)$ of the Camassa--Holm equation~\eqref{eq:CH} are the solutions curves $x = \xi(t)$ of the ODE
 \begin{gather*}
 \dot \xi(t) = u ( \xi(t), t ) .
 \end{gather*}
\end{Definition}

\begin{Remark} \label{rem:characteristics-review} These characteristic curves (or \emph{characteristics} for short) play a central role in the study of the Camassa--Holm equation and similar PDEs. Consider for example the initial-value problem for the Camassa--Holm equation~\eqref{eq:CH} on the real line. There are various works \cite[and many others]{constantin-escher:1998:CH-global-existence-blowup,li-olver:2000:CH-wellposedness-blowup,rodriguezblanco:2001:CH-cauchy-problem,molinet:2004:CH-wellposedness-survey} which prove existence and uniqueness of a solution in a suitable function space, at least on some time interval $0 \le t < T$. The limitation $t<T$ is unavoidable in general, since for certain initial data $u_0(x)=u(x,0)$ it may happen that the solution leaves the function space in question after some finite time~$T$; typically the solution~$u$ remains continuous but its derivative~$u_x$ blows up. This idea of ``wave breaking in finite time'' is present already in the original Camassa--Holm paper~\cite{camassa-holm:1993:CH-orginal-paper}, where they consider the slope~$u_x$ at an inflection point of~$u$, to the right of the maximum of~$u$, and sketch a proof showing that this slope must tend to $-\infty$ in finite time. A fleshed-out argument was given in a later paper~\cite{camassa-holm-hyman:1994:CH-new-integrable}. Another type of condition implying finite-time blowup involves assuming a sign change from positive to negative in~$m_0(x)=m(x,0)$, where $m=u-u_{xx}$. In this case, one follows a characteristic curve emanating from a point where $m_0$ changes sign, and aims to show that $u_x \to -\infty$ along that curve after finite time. Many arguments of this kind follow the approach described in detail by Constantin~\cite{constantin:2000:CH-geometric-approach}. See also (for example) McKean~\cite{mckean:1998:CH-breakdown,mckean:2004:CH-breakdown2}, Jiang, Ni and Zhou~\cite{jiang-ni-zhou:2012:CH-wave-breaking,zhou:2004:CH-wave-breaking} and Brandolese~\cite{brandolese:2014:CH-local-in-space-blowup-criteria}. If the solution blows up after finite time, the question arises whether it is possible to continue it past the singularity. It turns out that the answer is yes, but the continuation (like weak solutions in general) is not unique unless some suitable additional condition is imposed. Global weak solutions of the Camassa--Holm equation were first studied by Constantin and Escher~\cite{constantin-escher:1998:CH-global-weak}, and have since been investigated in great detail \cite{bressan-chen-zhang:2015:CH-uniqueness-conservative-via-characteristics,bressan-constantin:2007:global-conservative-CH, bressan-constantin:2007:global-dissipative-CH, constantin-molinet:2000:CH-global-weak, grunert-holden:2016:CH-peakon-antipeakon-alpha-dissipative, holden-raynaud:2007:CH-global-conservative-multipeakon, holden-raynaud:2007:CH-global-conservative-Lagrangian, holden-raynaud:2008:CH-global-dissipative-multipeakon, holden-raynaud:2009:CH-dissipative-solutions, xin-zhang:2000:CH-weak-solutions-existence, xin-zhang:2002:CH-weak-solutions-uniqueness-large-time-behaviour}. The idea is to remove the singularity by changing to new variables, whose very definition involves characteristic curves. This means that even the ``simple'' task of verifying that a proposed explicit solution $u(x,t)$, such as \eqref{eq:CH-symmetric-conservative}, \eqref{eq:CH-symmetric-dissipative} or~\eqref{eq:CH-symmetric-nonphysical}, really satisfies the reformulated equation will require some knowledge of the characteristic curves for that solution. To appreciate how complicated this might be already for $N$-peakon solutions with small~$N$, see in particular the works by Grunert, Holden and Raynaud concerning peakon--antipeakon solutions~\cite{grunert-holden:2016:CH-peakon-antipeakon-alpha-dissipative, holden-raynaud:2007:CH-global-conservative-multipeakon,holden-raynaud:2008:CH-global-dissipative-multipeakon}. In the theory of global weak solutions of the Camassa--Holm equation, a fundamental distinction is that between \emph{conservative solutions} and \emph{dissipative solutions}. Conservative solutions preserve the $H^1$ energy $E(t) = \int_{\R} \big(u^2 + u_x^2\big) {\rm d}x$ for almost all~$t$; for example, at a~peakon--antipeakon collision, $E(t)$ momentarily drops to a~lower value as the peakon and antipeakon merge into a~single peakon (or cancel out completely), but then it immediately returns to its previous value again as the peakon and antipeakon reappear. Dissipative solutions are characterized by the condition that $E(t)$ is nonincreasing, so once the energy drops to a~lower value it cannot go back up, which means that the merged peakons must stay together, with some energy having been lost to dissipation at the collision. There is also the concept of \emph{$\alpha$-dissipative solutions}, introduced by Grunert, Holden and Raynaud~\cite{grunert-holden:2016:CH-peakon-antipeakon-alpha-dissipative, grunert-holden-raynaud:2015:alpha-dissipative-2CH}; these solutions have the property that the fraction~$\alpha \in (0,1)$ of the energy concentrated at the collision is lost, so they constitute an intermediate case between conservative ($\alpha=0$) and fully dissipative ($\alpha=1$). Finally, we mention that characteristic curves also play a prominent
role in many numerical schemes for solving Camassa--Holm-type equations \cite{camassa-huang-lee:2005:CH-completely-integrable-numerical-scheme, camassa-huang-lee:2006:CH-integral-and-integrable-algorithms, chertock-liu-pendleton:2012:CH-particle-method-convergence-analysis, chertock-liu-pendleton:2012:CH-DP-particle-method-global-weak, chertock-liu-pendleton:2015:CH-elastic-peakon-collisions, holden-raynaud:2006:CH-convergent-scheme-based-on-multipeakons, holden-raynaud:2008:CH-conservative-numerical-scheme-based-on-multipeakons}.
\end{Remark}

To explain the connection between ghostpeakons (Definition~\ref{def:ghostpeakon}) and characteristic curves (Definition~\ref{def:char-curve-CH}), note first that the equation for $x_k$ in the peakon ODEs~\eqref{eq:CH-peakon-ODEs} reads
\begin{gather*}
 \dot x_k(t) = \sum_{i=1}^N m_i(t) {\rm e}^{-|x_k(t) - x_i(t)|} = u( x_k(t), t) .\end{gather*}
In other words, the peakon trajectory $x = x_k(t)$ must be a characteristic curve for the multipeakon solution~\eqref{eq:multipeakons} itself. But this is true also if the amplitude~$m_k$ is zero. That is, if we have a solution $\{ x_i(t), m_i(t) \}_{i=1}^N$ of the peakon ODEs~\eqref{eq:CH-peakon-ODEs} with $m_k(t)=0$ and the other $m_i(t) \neq 0$, i.e., ``a solution with a ghostpeakon at site~$k$ (only)'', then the ghostpeakon trajectory $x=x_k(t)$ is a characteristic curve for the $(N-1)$-peakon solution
\begin{gather} \label{eq:u-with-ghostpeakon-at-site-k}
 u(x,t) = \sum_{i=1}^N m_i(t) {\rm e}^{-|x-x_i(t)|} = \sum_{\substack{1 \le i \le N \\[0.5ex] i \neq k}} m_i(t) {\rm e}^{-|x-x_i(t)|},
\end{gather}
which contains only the contributions from the ordinary (non-ghost) peakons. So the ghostpeakon trajectory is a characteristic curve which is not an ordinary peakon trajectory, but instead lies between two peakons (or in the region outside the peakons, if $k=1$ or $k=N$).

Turning this around, we can think of the situation as follows: if we start with a solution containing only ordinary peakons and are interested in finding a characteristic curve which is not a peakon trajectory, say $x = \xi(t)$ with $\xi(0)=\xi_0$ distinct from all~$x_k(0)$, we can imagine a ghostpeakon being added to the system with position~$\xi_0$ at time $t=0$, and obtain the characteristic curve as the trajectory of that ghostpeakon.
Thus, finding exact solution formulas for ghostpeakons will tell us explicitly what the characteristic curves $x = \xi(t)$ are in the case of multipeakon solutions. As mentioned in Remark~\ref{rem:characteristics-review},
this is of interest in the study of peakon--antipeakon collisions, for example.

\begin{Remark} \label{rem:plotting} An additional bonus of knowing the formulas for the characteristic curves is that it makes it much easier to produce high-quality three-dimensional plots of the graph of the function $u(x,t)$ for multipeakon solutions. With knowledge of $\{ x_i(t), m_i(t) \}_{i=1}^N$ from the Beals--Sattinger--Szmigielski solution formulas~\cite{beals-sattinger-szmigielski:1999:stieltjes,beals-sattinger-szmigielski:2000:moment}, we can of course compute $u(x,t) = \sum\limits_{i=1}^N m_i(t) {\rm e}^{-|x-x_i(t)|}$ for any $x$ and~$t$, but plotting this surface using an ordinary rectangular mesh, the ``mountain ridges'' along the peakon trajectories are likely to come out jagged and ugly. There is also the problem of numerical cancellation in the summation when plotting a solution involving peakon--antipeakon collisions where $m_k(t) \to +\infty$ and $m_{k+1}(t) \to -\infty$, and moreover $|u_x|$ becomes very large in a~small region near the collision, which is hard to display correctly even with a very fine rectangular mesh. If we instead use the explicit formulas for the characteristic curves $x = \xi(t, \theta)$ and plot the graph as the parametric surface
\begin{gather*}
 (\theta,t) \mapsto (x,t,u) = ( \xi(t,\theta), t, \dot\xi(t,\theta) ) ,
 \end{gather*}
where $\dot\xi$ can easily be computed symbolically (or by automatic differentiation), then we avoid all of these problems; we even get a mesh which is automatically finer at the points where $\abs{u_x}$ is large, since the characteristics between the colliding peakon and antipeakon converge at the point of collision. As a first example, Fig.~\ref{fig:CH-p-ap-wave} shows a conservative asymmetric peakon--antipeakon solution of the Camassa--Holm equation plotted using this technique, and we will provide several other illustrations of multipeakon solutions in later sections.
\end{Remark}

\begin{figure}[t] \centering
\begin{tikzpicture}
 \node[anchor=south west,inner sep=0] at (0,0)
 {\includegraphics[width=127mm]{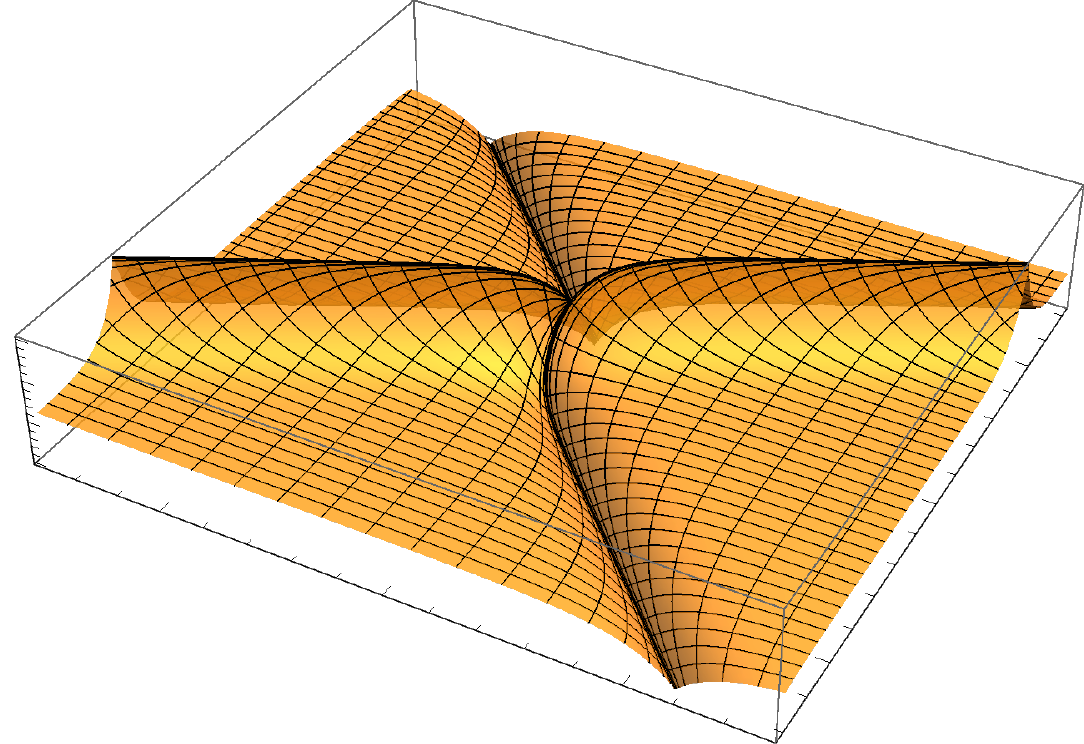}};

 \draw (4.3,1.5) node {$x$};
 \draw (1.7,2.5) node {\scriptsize $-5$};
 \draw (7.2,0.5) node {\scriptsize $5$};

 \draw (11.2,2.7) node {$t$};
 \draw (9.95,0.8) node {\scriptsize $-5$};
 \draw (12.25,4.3) node {\scriptsize $5$};

 \draw (0,3.8) node {$u$};
 \end{tikzpicture}

\caption{Graph of a conservative asymmetric peakon--antipeakon solution $u(x,t) = m_1(t) {\rm e}^{-\abs{x-x_1(t)}} + m_2(t) {\rm e}^{-\abs{x-x_2(t)}}$ of the Camassa--Holm equation, plotted as explained in Remark~\ref{rem:plotting}: the mesh consists of lines $t={\rm const}$, which are lifted to the surface so that they illustrate the peakon wave profile $u(x)=\sum m_k {\rm e}^{-\abs{x-x_k}}$ at the instant in question, together with characteristic curves (also lifted to the surface) given by the ghostpeakon formulas from Corollary~\ref{cor:CH-add-a-ghostpeakon}. The parameters in the solution formulas are given by \eqref{eq:CH-peakon-antipeakon-parameters-symmetric} with $c_1=2$ and $c_2=1$, so that the collision takes place at $(x,t)=(0,0)$ and the asymptotic velocities are $2$ and~$-1$. Before the collision, $x_1(t) < x_2(t)$ with $m_1(t) > 0$ and $m_2(t) < 0$. As $t \to 0^-$, $m_1(t) \to +\infty$ and $m_2(t) \to -\infty$. The limiting wave profile at the instant of collision (where $x_1(0)=x_2(0)=0$) consists of just a single peakon: $u(x,0) = {\rm e}^{-\abs{x}}$. After the collision, $x_1(t) < x_2(t)$ holds again, but now with $m_1(t) < 0$ and $m_2(t) > 0$. The dimensions of the box are $\abs{x} \le 8$, $\abs{t} \le 7$ and $-1 \le u \le 2$.} \label{fig:CH-p-ap-wave}
\end{figure}

We will also derive explicit ghostpeakon formulas for two other PDEs, namely the Degasperis--Procesi (DP) equation
\cite{degasperis-holm-hone:2002:new-integrable-equation-DP, degasperis-procesi:1999:asymptotic-integrability}
\begin{gather} \label{eq:DP}
 m_t + m_x u + 3 m u_x = 0 ,\qquad m = u - u_{xx} ,
\end{gather}
and the Novikov equation~\cite{hone-wang:2008:cubic-nonlinearity,novikov:2009:generalizations-of-CH}
\begin{gather} \label{eq:Novikov}
 m_t + m_x u^2 + 3 m u u_x = 0 ,\qquad m = u - u_{xx} ,
\end{gather}
which like the Camassa--Holm equation~\eqref{eq:CH} are integrable systems in the sense of having Lax pairs, peaked multi\-soliton solutions of the form~\eqref{eq:multipeakons}, and infinitely many conservation laws. In terms of the shorthand notation
\begin{gather*}
 u(x_k) = \sum_{i=1}^N m_i {\rm e}^{-|x_k-x_i|} ,\qquad u_x(x_k) = - \sum_{i=1}^N m_i \sgn(x_i-x_k) {\rm e}^{-|x_k-x_i|} ,
\end{gather*}
the Camassa--Holm peakon ODEs~\eqref{eq:CH-peakon-ODEs} take the form
\begin{gather} \label{eq:CH-peakon-ODEs-shorthand}
 \dot x_k = u(x_k) ,\qquad \dot m_k = - m_k u_x(x_k) ,
\end{gather}
while the ODEs governing the dynamics of the multipeakon solutions~\eqref{eq:multipeakons} for the Degasperis--Procesi peakons are
\begin{gather} \label{eq:DP-peakon-ODEs-shorthand}
 \dot x_k = u(x_k) ,\qquad \dot m_k = - 2 m_k u_x(x_k) ,
\end{gather}
and the ODEs for Novikov peakons are
\begin{gather} \label{eq:Novikov-peakon-ODEs-shorthand}
 \dot x_k = u(x_k)^2 ,\qquad \dot m_k = - m_k u(x_k) u_x(x_k) .
\end{gather}
Explicit formulas for the general pure peakon solution were derived for the Degasperis--Procesi equation by Lundmark and Szmigielski~\cite{lundmark-szmigielski:2003:DPshort,lundmark-szmigielski:2005:DPlong}, and for the Novikov equation by Hone, Lundmark and Szmigielski~\cite{hone-lundmark-szmigielski:2009:novikov}. These solution formulas, which will be recalled in Section~\ref{sec:solution-formulas}, also provide valid peakon--antipeakon solutions for suitable choices of the parameters, but they do not quite give the most general solution in this case, since there are non-generic peakon--antipeakon configurations with non-simple eigenvalues that require separate formulas; see Remarks~\ref{rem:DP-antipeakons} and~\ref{rem:Novikov-antipeakons}.

The characteristic curves for a solution $u(x,t)$ of the Degasperis--Procesi equation~\eqref{eq:DP} are defined by the same equation $\dot \xi(t) = u( \xi(t), t)$ as before, while for the Novikov equation~\eqref{eq:Novikov}
the characteristics are given by $\dot \xi(t) = u ( \xi(t), t)^2$, with $u^2$ instead of~$u$ on the right-hand side. Thus, according to \eqref{eq:DP-peakon-ODEs-shorthand} and~\eqref{eq:Novikov-peakon-ODEs-shorthand},
the trajectories of the peakons (including ghostpeakons) are characteristics curves.

\begin{Remark} \label{rem:DP-Novikov-references} The Degasperis--Procesi and Novikov equations were not found through physical considerations but by purely mathematical means, using integrability tests to single out integrable systems of a form similar to the Camassa--Holm equation (although the DP equation has later been interpreted as water wave equation~\cite{constantin-lannes:2009:hydrodynamical-CH-DP,johnson:2003:classical-water-waves}). There are many interesting similarities and differences among these three PDEs. The Lax pair for \eqref{eq:CH} involves a second-order ODE in the $x$ direction, while those for \eqref{eq:DP} and~\eqref{eq:Novikov} are of order three. And any weak solution of~\eqref{eq:CH} or~\eqref{eq:Novikov} is by necessity continuous, whereas \eqref{eq:DP} also admits discontinuous weak solutions \cite{coclite-karlsen:2006:DPwellposedness,coclite-karlsen:2007:DPuniqueness,coclite-karlsen:2015:DPwellposedness-and-asymptotics}, in particular ``shockpeakons''~\cite{lundmark:2007:shockpeakons}. We may also mention some studies regarding finite-time blowup for solutions of the Degasperis--Procesi equation \cite{escher-liu-yin:2006:DP-globalweak-blowup,liu-yin:2006:DP-global-existence-and-blowup,liu-yin:2007:DP-blowup-phenomena, yin:2003:DP-cauchy-problem:Illinois}. Concerning Novikov's equation, Chen, Chen and Liu~\cite{chen-chen-liu:2018:novikov-global-conservative-weak-existence-uniqueness} have recently studied existence and uniqueness of global conservative weak solutions, using a characteristics-based approach. Novikov's equation always conserves the $H^1$ norm $E(t)$, even at peakon--antipeakon collisions, and they instead use the conservation of the quartic quantity
 \begin{gather*}
 F(t) = \int_{\R} \big(u^4 + 2 u^2 u_x^2 - \tfrac13 u_x^4\big) {\rm d}x
 \end{gather*}
 to single out the \emph{conservative} weak solutions. Several other works on the Novikov equation are also relevant in this context \cite{cai-chen-chen-shen:2018:novikov-lipschitz-metric, chen-guo-liu-qu:2016:DP-novikov-mCH-blowup,
 himonas-holliman-kenig:2018:novikov-2-peakon-illposedness,jiang-ni:2012:novikov-blowup, lai-li-wu:2013:novikov-global-solutions, ni-zhou:2011:novikov,tiglay:2011:novikov-periodic-cauchy-problem, wu-guo:2016:periodic-novikov-global-wellposedness, wu-yin:2011:novikov-global-weak-solutions, wu-yin:2012:novikov-wellposedness-global-existence, yan-li-zhang:2012:novikov-cauchy-problem}.
\end{Remark}

\begin{Remark} \label{rem:GX} The ideas developed in this article have turned out very useful in the study of ordinary (non-ghost) peakon solutions of the Geng--Xue equation \cite{geng-xue:2009:GX-peakon-equation-cubic-nonlinearity}, a coupled two-component generalization of the Novikov equation:
 \begin{alignat}{3}
& m_t + (m_xu + 3mu_x)v = 0, \qquad && m = u - u_{xx},&\nonumber\\
& n_t + (n_xv + 3nv_x)u = 0, \qquad && n = v - v_{xx}. & \label{eq:GX}
 \end{alignat}
In this system, a peakon in one component~$u(x,t)$ is not allowed to occupy the same position as a peakon in the other component~$v(x,t)$, and therefore there are many inequivalent peakon configurations, depending on the order in which the peakons in $u$ and~$v$ occur relative to each other. The solution formulas for the \emph{interlacing} configuration, where the peakons alternate (one peakon in~$u$, then one in~$v$, then one in~$u$ again, then one in~$v$, and so on) have been derived by inverse spectral methods \cite{lundmark-szmigielski:2016:GX-inverse-problem,lundmark-szmigielski:2017:GX-dynamics-interlacing}. However, it is not obvious how to apply these techniques in non-interlacing cases, since then the two Lax pairs for the Geng--Xue equation do not seem to provide sufficiently many constants of motion to make it possible to integrate the peakon ODEs. Instead, we derive the solution formulas for an arbitrary configuration by starting from an interlacing solution (with a larger number of peakons) and driving selected amplitudes to zero by taking suitable limits, thus turning some of the peakons into ghostpeakons, in such a~way that the remaining ordinary peakons occur in the desired configuration. The details, which are quite technical, are described in a separate article~\cite{shuaib-lundmark:2018p:GX-noninterlacing}.
\end{Remark}

\section[Review of formulas for $N$-peakon solutions with nonzero amplitudes]{Review of formulas for $\boldsymbol{N}$-peakon solutions\\ with nonzero amplitudes}\label{sec:solution-formulas}

In this section we collect the known formulas for the $N$-peakon solutions of the Camassa--Holm, Degasperis--Procesi and Novikov equations, where it is assumed that all peakons have nonzero amplitude~$m_k(t)$. The notation defined here will also be used to state our new formulas for ghostpeakon solutions in later sections, and the $N$-peakon solution formulas will be needed in the proofs.

\subsection{Camassa--Holm peakons}\label{sec:CH-N}

First we formulate the explicit formulas for Camassa--Holm $N$-peakon solutions. The solutions for $N=1$ and $N=2$ were computed already in the original Camassa--Holm paper~\cite{camassa-holm:1993:CH-orginal-paper}. For $N \ge 3$ direct integration seems difficult, but the solution for arbitrary~$N$ was found with the help of inverse spectral methods by Beals, Sattinger and Szmigielski~\cite{beals-sattinger-szmigielski:1999:stieltjes,beals-sattinger-szmigielski:2000:moment}. Note that they use a different normalization of the Camassa--Holm equation, so their formulas differ by various factors of~$2$ from the ones that we state here, and they also use the opposite sign convention for the parameters~$\lambda_k$.

First some notation. For any integers $a$ and~$k$, and for a fixed $N \ge 1$, let
\begin{gather} \label{eq:Delta-evaluated}
 \Delta_k^a = \Delta_k^a(N) =
 \begin{cases}
 \displaystyle \sum_{1 \le i_1 < \dotsb < i_k \le N} \Delta ( \lambda_{i_1}, \dots, \lambda_{i_k} )^2 \left( \prod_{r=1}^k \lambda_{i_r}^a b_{i_r} \right) ,& \text{if $1 \le k \le N$} ,\vspace{1mm}\\
1,& \text{if $k=0$} ,\\
 0,& \text{otherwise} ,
 \end{cases}
\end{gather}
where $\Delta$ on the right-hand side denotes the Vandermonde determinant:
\begin{gather} \label{eq:Vandermonde}
 \Delta(x_1,x_2,\dots,x_k) = \prod_{1 \le i < j \le k} (x_j - x_i) .
\end{gather}
The superscript $a$ in $\Delta_k^a$ is just a label, whereas in $\lambda_i^a$ it denotes the $a$th power of the constant~$\lambda_i$. (The expression~\eqref{eq:Delta-evaluated} arises by evaluation of a certain Hankel determinant; see \eqref{eq:Delta-determinant} in Remark~\ref{rem:Matsuno}.)

In terms of these quantities $\Delta_k^a$, the Beals--Sattinger--Szmigielski formulas for the general solution $\{ x_k(t), m_k(t) \}_{k=1}^N$ of the Camassa--Holm $N$-peakon ODEs~\eqref{eq:CH-peakon-ODEs} (or \eqref{eq:CH-peakon-ODEs-shorthand} in shorthand notation), with the proviso that $m_k \neq 0$ for all~$k$, are
\begin{gather} \label{eq:CH-generalsolution}
 x_{N+1-k}(t) = \ln\frac{\Delta_k^0}{\Delta_{k-1}^2} ,\qquad m_{N+1-k}(t) = \frac{\Delta_k^0 \Delta_{k-1}^2}{\Delta_k^1 \Delta_{k-1}^1} ,\qquad k = 1, \dots, N ,
\end{gather}
where $\Delta_k^a$ depends on~$t$ via
\begin{gather*}
 b_k = b_k(t) = b_k(0) {\rm e}^{t/\lambda_k} .
\end{gather*}
We will usually let this time dependence be understood, and write just $\Delta_k^a$ and~$b_k$ instead of~$\Delta_k^a(t)$ and~$b_k(t)$.

The parameters $\lambda_k$ are constant, i.e., time-independent. They are the eigenvalues of a certain symmetric $N \times N$-matrix, hence they are real, and it can also be proved~\cite{beals-sattinger-szmigielski:2000:moment} that they are nonzero and distinct, and that in fact the number of positive (negative) eigenvalues $\lambda_k$ equals the number of positive (negative) amplitudes~$m_k$. The parameters $b_k(0)$ are always positive; they appear as the residues of the so-called modified Weyl function \cite{beals-sattinger-szmigielski:2000:moment}.

\begin{Remark} The set of parameters
 \begin{gather*}
 \{ \lambda_k, b_k(0) \}_{k=1}^N ,
 \end{gather*}
 which is referred to as the \emph{spectral data} (\emph{eigenvalues}~$\lambda_k$ and \emph{residues}~$b_k$), is uniquely determined by the set of initial data
 \begin{gather*}
 \{ x_i(0), m_i(0) \}_{i=1}^N
 \end{gather*}
 up to ordering. The solution formulas \eqref{eq:CH-generalsolution} are given by symmetric functions which are invariant under relabeling $(\lambda_k,b_k) \mapsto (\lambda_{\sigma(k)},b_{\sigma(k)})$ for any permutation $\sigma \in S_N$. So we may prescribe an ordering if we like, for example
 \begin{gather*}
 \lambda_1 < \dots < \lambda_n .
 \end{gather*}
 Since the eigenvalues~$\lambda_k$ are determined by a polynomial equation of degree~$N$, there is no explicit formula for this correspondence in terms of radicals (except for small~$N$), but the \emph{inverse} map is explicitly provided by~\eqref{eq:CH-generalsolution} (with $t=0$).
\end{Remark}

\begin{Example}[CH one-peakon solution] For $N=1$, the solution formulas \eqref{eq:CH-generalsolution} reduce to
 \begin{gather*}
 x_1 = \ln\frac{\Delta_1^0}{\Delta_0^2} = \ln\frac{b_1}{1} ,\qquad m_1 = \frac{\Delta_1^0 \Delta_0^2}{\Delta_1^1 \Delta_0^1} = \frac{b_1 \cdot 1}{\lambda_1 b_1 \cdot 1} .
 \end{gather*}
 Taking the time dependence $b_1 = b_1(t) = b_1(0) {\rm e}^{t/\lambda_1}$ into account, we see that the one-peakon solution $u = m_1 {\rm e}^{-\abs{x-x_1}}$ is just a travelling wave with constant velocity $\dot x_1 = 1/\lambda_1$ and constant amplitude~$m_1=1/\lambda_1 \neq 0$:
 \begin{gather*}
 x_1(t) = \frac{t}{\lambda_1} + \ln b_1(0) ,\qquad m_1(t) = \frac{1}{\lambda_1} .
 \end{gather*}
 This of course also follows immediately from direct integration of the Camassa--Holm peakon ODEs~\eqref{eq:CH-peakon-ODEs}, which for $N=1$ are just $\dot x_1=m_1$ and $\dot m_1=0$.
\end{Example}

\begin{Example}[CH two-peakon solution] \label{ex:CH-twopeakon-solution} Letting $N=2$ in~\eqref{eq:CH-generalsolution} we get the two-peakon solution, $u = m_1 {\rm e}^{-\abs{x-x_1}} + m_2 {\rm e}^{-\abs{x-x_2}}$, with
 \begin{gather*}
 x_1 = \ln\frac{\Delta_2^0}{\Delta_1^2} ,\quad x_2 = \ln\frac{\Delta_1^0}{\Delta_0^2} ,\qquad m_1 = \frac{\Delta_2^0 \Delta_1^2}{\Delta_2^1 \Delta_1^1} ,\qquad m_2 = \frac{\Delta_1^0 \Delta_0^2}{\Delta_1^1 \Delta_0^1} ,
 \end{gather*}
 where
 \begin{gather*}
 \Delta_0^a = 1 ,\qquad \Delta_1^a = \lambda_1^a b_1 + \lambda_2^a b_2 ,\qquad \Delta_2^a = (\lambda_1 - \lambda_2)^2 \lambda_1^a \lambda_2^a b_1 b_2 .
 \end{gather*}
 Written out explicitly, the formulas are
 \begin{alignat}{3}
 & x_1(t) = \ln \frac{(\lambda_1-\lambda_2)^2 b_1 b_2}{\lambda_1^2 b_1 + \lambda_2^2 b_2},\qquad && m_1(t)= \frac{\lambda_1^2 b_1 + \lambda_2^2 b_2}{\lambda_1 \lambda_2 (\lambda_1 b_1 + \lambda_2 b_2)} , &\nonumber\\
& x_2(t) = \ln (b_1+b_2) ,\qquad && m_2(t) = \frac{b_1+b_2}{\lambda_1 b_1 + \lambda_2 b_2} ,&\label{eq:CH-twopeakon-solution}
 \end{alignat}
 with the time dependence
 $b_k = b_k(t) = b_k(0) {\rm e}^{t/\lambda_k}$.
\end{Example}

\begin{Remark} \label{rem:CH-twopeakon-parameters} The Camassa--Holm equation is invariant with respect to translations $x \mapsto x-x_0$, $t \mapsto t-t_0$, and we can use this freedom to reduce the number of parameters by two in the solution formulas. It turns out the eigenvalues~$\lambda_k$ are unaffected by such translations, while the residues~$b_k$ are rescaled. So for the two-peakon solution, the eigenvalues $\lambda_1$ and $\lambda_2$ are the only essential parameters, and we can make $b_1(0)$ and $b_2(0)$ take any positive values by a suitable translation. A particularly useful choice in the pure peakon case, say with $0 < \lambda_1 < \lambda_2$, is to take
 \begin{gather*}
 c_1 = \frac{1}{\lambda_1} > c_2 = \frac{1}{\lambda_2} > 0 ,\qquad b_1(0) = \frac{c_1}{c_1-c_2} ,\qquad b_2(0) = \frac{c_2}{c_1-c_2} .
 \end{gather*}
 Then the two-peakon solution takes the following form, which is symmetric with respect to the reversal $(x,t) \mapsto (-x,-t)$ since $x_1(t) = -x_2(-t)$ and $m_1(t) = m_2(-t)$:
 \begin{alignat*}{3}
& x_1(t)= -\ln \frac{c_1 {\rm e}^{-c_1 t} + c_2 {\rm e}^{-c_2 t}}{c_1-c_2} ,\qquad && m_1(t) = \frac{c_1 {\rm e}^{-c_1 t} + c_2 {\rm e}^{-c_2 t}}{{\rm e}^{-c_1 t} + {\rm e}^{-c_2 t}} ,& \\
& x_2(t)= \ln \frac{c_1 {\rm e}^{c_1 t} + c_2 {\rm e}^{c_2 t}}{c_1-c_2} ,\qquad && m_2(t) = \frac{c_1 {\rm e}^{c_1 t} + c_2 {\rm e}^{c_2 t}}{{\rm e}^{c_1 t} + {\rm e}^{c_2 t}}.
 \end{alignat*}
 For the peakon--antipeakon case, say with $\lambda_1 > 0 > \lambda_2$, we can take
 \begin{gather} \label{eq:CH-peakon-antipeakon-parameters-symmetric}
 c_1 = \frac{1}{\lambda_1} > 0 ,\qquad c_2 = \frac{-1}{\lambda_2} > 0 ,\qquad b_1(0) = \frac{c_1}{c_1+c_2} ,\qquad b_2(0) = \frac{c_2}{c_1+c_2}
 \end{gather}
 to get
 \begin{alignat}{3}
& x_1(t)= -\ln \frac{c_1 {\rm e}^{-c_1 t} + c_2 {\rm e}^{c_2 t}}{c_1+c_2} ,\qquad && m_1(t) = \frac{c_1 {\rm e}^{-c_1 t} + c_2 {\rm e}^{c_2 t}}{{\rm e}^{-c_1 t} - {\rm e}^{c_2 t}} ,& \nonumber\\
& x_2(t)= \ln \frac{c_1 {\rm e}^{c_1 t} + c_2 {\rm e}^{-c_2 t}}{c_1+c_2} ,\qquad && m_2(t) = \frac{c_1 {\rm e}^{c_1 t} + c_2 {\rm e}^{-c_2 t}}{{\rm e}^{c_1 t} - {\rm e}^{-c_2 t}} . & \label{eq:CH-peakon-antipeakon-solution-symmetric}
 \end{alignat}
This will make the peakon--antipeakon collision take place at the origin $(x,t)=(0,0)$. Even though $m_1(t)$ and $m_2(t)$ are undefined at $t=0$, the function $u(x,t)$ given for $t \neq 0$ by $u(x,t) = m_1 {\rm e}^{-\abs{x-x_1}} + m_2 {\rm e}^{-\abs{x-x_2}}$ together with~\eqref{eq:CH-peakon-antipeakon-solution-symmetric}, can be extended continuously by setting
 \begin{gather*}
 u(x,0) = (c_1-c_2) {\rm e}^{-\abs{x}} .
 \end{gather*}
This provides the \emph{conservative} continuation past the collision (see Remark~\ref{rem:characteristics-review} and Fig.~\ref{fig:CH-p-ap-wave}). If we instead let $u(x,t) = (c_1-c_2) {\rm e}^{-\abs{x - (c_1-c_2)t}}$ for $t \ge 0$, then we get the \emph{dissipative} solution; if $c_1 \neq c_2$, the peakon and the antipeakon merge into a single peakon or antipeakon which continues on its own after the collision, whereas if $c_1=c_2$ they annihilate completely, so that $u(x,t)=0$ for all~$t \ge 0$. The \emph{$\alpha$-dissipative} solution~\cite{grunert-holden:2016:CH-peakon-antipeakon-alpha-dissipative} with $0 < \alpha < 1$ is obtained by defining $u(x,t)$ for $t>0$ using~\eqref{eq:CH-peakon-antipeakon-solution-symmetric} with $c_1$ and $c_2$ replaced by new constants $d_1$ and~$d_2$ which are determined from the equations
 \begin{gather*}
 d_1 - d_2 = c_1 - c_2 ,\qquad d_1^2 + d_2^2 = c_1^2 + c_2^2 - 2 c_1 c_2 \alpha ,
 \end{gather*}
which reflect the conservation of momentum and the loss of a fraction $\alpha$ of the energy concentrated at the collision. (For the Camassa--Holm equation, the momentum $\int_{\R} u {\rm d}x = \sum\limits_{i=1}^N m_i$ is always conserved.)
\end{Remark}

\begin{Example}[CH three-peakon solution] \label{ex:CH-threepeakon-solution} Letting $N=3$ in~\eqref{eq:CH-generalsolution} gives the three-peakon solution formulas:
 \begin{subequations} \label{eq:CH-3-explicit}
 \begin{alignat}{4} \label{eq:CH-x-3-explicit}
& x_1 = \ln\frac{\Delta_3^0}{\Delta_{2}^2} ,\qquad && x_2 = \ln\frac{\Delta_2^0}{\Delta_{1}^2} ,\qquad && x_3 = \ln\frac{\Delta_1^0}{\Delta_{0}^2} = \ln \Delta_1^0 ,&\\
 \label{eq:CH-m-3-explicit}
& m_1 = \frac{\Delta_3^0 \Delta_{2}^2}{\Delta_3^1 \Delta_{2}^1} ,\qquad&& m_2 = \frac{\Delta_2^0 \Delta_{1}^2}{\Delta_2^1 \Delta_{1}^1} ,\qquad && m_3 = \frac{\Delta_1^0 \Delta_{0}^2}{\Delta_1^1 \Delta_{0}^1} ,&
 \end{alignat}
 \end{subequations}
 where
 \begin{gather*}
 \Delta_0^a = 1 ,\\
 \Delta_1^a = \lambda_1^a b_1 + \lambda_2^a b_2 + \lambda_3^a b_3 ,\\
 \Delta_2^a = (\lambda_1 \lambda_2)^a (\lambda_1-\lambda_2)^2 b_1 b_2 + (\lambda_1 \lambda_3)^a (\lambda_1-\lambda_3)^2 b_1 b_3 + (\lambda_2 \lambda_3)^a (\lambda_2-\lambda_3)^2 b_2 b_3 ,\\
 \Delta_3^a = (\lambda_1 \lambda_2 \lambda_3)^a (\lambda_1-\lambda_2)^2 (\lambda_1-\lambda_3)^2 (\lambda_2-\lambda_3)^2 b_1 b_2 b_3 ,
\end{gather*}
 with the time dependence $b_k = b_k(t) = b_k(0) {\rm e}^{t/\lambda_k}$.

 Thus, for example, $x_3(t)$ is given by the explicit formula
 \begin{gather*}
 x_3(t) = \ln \Delta_1^0(t) = \ln ( b_1(t) + b_2(t) + b_3(t) ) \\
 \hphantom{x_3(t)}{} = \ln \bigl( b_1(0) {\rm e}^{t/\lambda_1} + b_2(0) {\rm e}^{t/\lambda_2} + b_3(0) {\rm e}^{t/\lambda_3} \bigr) ,
 \end{gather*}
 and similarly for the other variables, but with more involved expressions:
 \begin{gather*}
 x_2(t) = \ln \frac{(\lambda_1 - \lambda_2)^2 b_1(t) b_2(t) + (\lambda_1 - \lambda_3)^2 b_1(t) b_3(t) + (\lambda_2 - \lambda_3)^2 b_2(t) b_3(t) }{\lambda_1^2 b_1(t) + \lambda_2^2 b_2(t) + \lambda_3^2 b_3(t)} ,
 \end{gather*}
 and so on.
\end{Example}

\begin{Remark} \label{rem:CH-threepeakon-parameters} In the three-peakon case, for given eigenvalues $\lambda_1$, $\lambda_2$, $\lambda_3$, the translations $x \mapsto x-x_0$, $t \mapsto t-t_0$ can be used to make two out of the three parameters $b_1(0)$, $b_2(0)$, $b_3(0)$ take any values that we like, but the third one remains as an essential parameter. That is, unlike the two-peakon case, where (up to translation) there is a unique solution for each pair of eigenvalues, there is a one-parameter family of inequivalent three-peakon solutions for each triple of eigenvalues. By taking
 \begin{gather}
 \lambda_k = \frac{1}{c_k} \qquad\text{with}\qquad c_1 > c_2 > c_3, \qquad c_1 \neq 0, \qquad c_2 \neq 0, \qquad c_3 \neq 0,\nonumber \\
 b_1(0) = \frac{c_1^2}{(c_1-c_2)(c_1-c_3)} ,\qquad
 b_2(0) = \frac{c_2^2}{(c_1-c_2)(c_2-c_3)} {\rm e}^K ,\nonumber\\
 b_3(0) = \frac{c_3^2}{(c_1-c_3)(c_2-c_3)},\label{eq:CH-threepeakon-symmetric}
 \end{gather}
we obtain the three-peakon solution in a form which depends on the asymptotic velocities~$c_k$ and one more essential parameter $K \in \R$, and which has the property that the asymptotes for the curves $x=x_1(t)$ and $x=x_3(t)$ lie symmetrically with respect to the origin, whereas the asymptotes for the curve $x=x_2(t)$ are symmetrically placed with respect to the point $(x,t)=(K,0)$. Changing the sign of~$K$ corresponds to the reversal $(x,t) \mapsto (-x,-t)$, so $K=0$ gives a three-peakon solution which is symmetric with respect to this reversal.
\end{Remark}

\subsection{Degasperis--Procesi peakons}\label{sec:DP-N}

Next, we present the $N$-peakon solution formulas for the Degasperis--Procesi equation. Readers who are only interested in the Camassa--Holm case can proceed directly to Section~\ref{sec:CH-ghost} about Camassa--Holm ghostpeakons.

For $N=1$ and $N=2$, the solution of the Degasperis--Procesi $N$-peakon ODEs \eqref{eq:DP-peakon-ODEs-shorthand} was found using direct integration by Degasperis, Holm and Hone~\cite{degasperis-holm-hone:2002:new-integrable-equation-DP}, and the general solution for arbitrary~$N$ was derived by Lundmark and Szmigielski~\cite{lundmark-szmigielski:2005:DPlong} using inverse spectral methods. To avoid certain complications, we consider first the pure peakon case where all amplitudes~$m_k$ are \emph{positive}. Then the solution is given in terms of spectral data $\{ \lambda_k, b_k \}_{k=1}^N$ where the eigenvalues~$\lambda_k$ are positive and distinct, and the residues $b_k$ are positive and have the time dependence $b_k(t) = b_k(0) {\rm e}^{t/\lambda_k}$, just like for Camassa--Holm peakons. The solution formulas are
\begin{gather} \label{eq:DP-generalsolution}
 x_{N+1-k}(t) = \ln \frac{U_k^0}{U_{k-1}^1}, \qquad m_{N+1-k}(t) = \frac{\bigl( U_k^0 U_{k-1}^1 \bigr)^2}{W_k W_{k-1}} ,
\end{gather}
for $k=1,\dots,N$, where
\begin{gather} \label{eq:Uk}
 U_k^a = \begin{cases}
 \displaystyle \sum_{1 \le i_1 < \dotsb < i_k \le N} \left( \prod_{r=1}^k \lambda_{i_r}^a b_{i_r} \right) \frac{\Delta( \lambda_{i_1}, \dots, \lambda_{i_k} )^2}{\Gamma( \lambda_{i_1}, \dots, \lambda_{i_k})} ,& 1 \le k \le N ,\vspace{1mm}\\
 1,& k=0 ,\\
 0,& \text{otherwise} ,
 \end{cases}
\end{gather}
and
\begin{gather} \label{eq:Wk}
 W_k = \begin{vmatrix} U_{k}^0 & U_{k-1}^1 \\ U_{k+1}^0 & U_{k}^1 \end{vmatrix},
\end{gather}
with $\Gamma$ similar to the Vandermonde determinant~$\Delta$ in~\eqref{eq:Vandermonde}, but with plus instead of minus:
\begin{gather*}
 \Gamma(x_1,x_2,\dots,x_k) = \prod_{1 \le i < j \le k} (x_i + x_j) .
\end{gather*}
The superscript $a$ in $U_k^a$ is just a label, whereas in $\lambda_i^a$ it denotes the $a$th power of the constant~$\lambda_i$.

\begin{Example}[DP two-peakon solution] \label{ex:DP-twopeakon-solution} For convenience, let us write
 \begin{gather*}
 U_k = U_k^0 ,\qquad V_k = U_k^1 ,
 \end{gather*}
 which is the original notation used by Lundmark and Szmigielski~\cite{lundmark-szmigielski:2005:DPlong}. Then the Degasperis--Procesi two-peakon solution is
 \begin{gather}
 x_1(t) = \ln\frac{U_2}{V_1} = \ln \frac{\frac{(\lambda_1-\lambda_2)^2}{\lambda_1+\lambda_2}b_1 b_2}{\lambda_1 b_1 + \lambda_2 b_2} ,\nonumber \\
 x_2(t) = \ln\frac{U_1}{V_0} = \ln (b_1+b_2) ,\nonumber \\
 m_1(t) = \frac{(U_2 V_1)^2}{W_2 W_1} = \frac{(\lambda_1 b_1 + \lambda_2 b_2)^2}{\lambda_1 \lambda_2 \big( \lambda_1 b_1^2 + \lambda_2 b_2^2 + \frac{4 \lambda_1 \lambda_2}{\lambda_1+\lambda_2}b_1 b_2 \big)} , \nonumber\\
 m_2(t) = \frac{(U_1 V_0)^2}{W_1 W_0} = \frac{(b_1+b_2)^2}{\lambda_1 b_1^2 + \lambda_2 b_2^2 + \frac{4 \lambda_1 \lambda_2}{\lambda_1+\lambda_2}b_1 b_2} ,\label{eq:DP-twopeakon-solution}
 \end{gather}
 where $b_k = b_k(t) = b_k(0) {\rm e}^{t/\lambda_k}$.
\end{Example}

\begin{Remark} \label{rem:DP-twopeakon-parameters} Just as for the Camassa--Holm equation (Remark~\ref{rem:CH-twopeakon-parameters}), the eigenvalues are the only essential parameters in the Degasperis--Procesi two-peakon solution; changing $b_1(0)$ and $b_2(0)$ only amounts to a translation in the $(x,t)$ plane. With
 \begin{gather*}
 c_1 = \frac{1}{\lambda_1} > c_2 = \frac{1}{\lambda_2} > 0 ,\qquad b_1(0) = \frac{\sqrt{c_1 (c_1+c_2)}}{c_1-c_2} ,\qquad b_2(0) = \frac{\sqrt{c_2(c_1+c_2)}}{c_1-c_2} ,
 \end{gather*}
 the pure two-peakon solution takes the symmetric form
 \begin{gather*}
 x_2(t) = -x_1(-t) = \frac12\ln\frac{c_1+c_2}{(c_1-c_2)^2} + \ln \bigl( \sqrt{c_1} {\rm e}^{c_1 t} + \sqrt{c_2} {\rm e}^{c_2 t} \bigr) ,\\
 m_2(t) = m_1(-t) = \frac{\bigl( \sqrt{c_1} {\rm e}^{c_1 t} + \sqrt{c_2} {\rm e}^{c_2 t} \bigr)^2}{{\rm e}^{2c_1 t} + {\rm e}^{2c_2 t} + \frac{4 \sqrt{c_1 c_2}}{c_1+c_2} {\rm e}^{(c_1+c_2)t}} .
 \end{gather*}
\end{Remark}

\begin{Example}[DP three-peakon solution] \label{ex:DP-threepeakon-solution} For $N=3$ the solution becomes
 \begin{alignat*}{3}
 & x_1(t) = \ln\frac{U_3}{V_2},\qquad && m_1(t) = \frac{(U_3 V_2)^2}{W_3 W_2} = \frac{(V_2)^2}{\lambda_1 \lambda_2 \lambda_3 W_2},& \\
 & x_2(t) = \ln\frac{U_2}{V_1},\qquad && m_2(t) = \frac{(U_2 V_1)^2}{W_2 W_1},& \\
& x_3(t) = \ln\frac{U_1}{V_0} = \ln U_1,\qquad && m_3(t) = \frac{(U_1 V_0)^2}{W_1 W_0} = \frac{(U_1)^2}{W_1},&
 \end{alignat*}
 where
 \begin{gather*}
 U_{-1} = V_{-1} = 0, \qquad  U_0 = V_0 = 1, \qquad  U_1 = b_1+b_2+b_3, \qquad
 V_1 = \lambda_1 b_1+\lambda_2 b_2+\lambda_3 b_3, \\
 U_2 = \frac{(\lambda_1-\lambda_2)^2}{\lambda_1+\lambda_2} b_1 b_2 +\frac{(\lambda_1-\lambda_3)^2}{\lambda_1+\lambda_3} b_1 b_3 +\frac{(\lambda_2-\lambda_3)^2}{\lambda_2+\lambda_3} b_2 b_3, \\
 V_2 = \frac{(\lambda_1-\lambda_2)^2}{\lambda_1+\lambda_2} \lambda_1 \lambda_2 b_1 b_2 +\frac{(\lambda_1-\lambda_3)^2}{\lambda_1+\lambda_3} \lambda_1 \lambda_3 b_1 b_3 +\frac{(\lambda_2-\lambda_3)^2}{\lambda_2+\lambda_3} \lambda_2 \lambda_3 b_2 b_3, \\
 U_3 = \frac{(\lambda_1-\lambda_2)^2 (\lambda_1-\lambda_3)^2 (\lambda_2-\lambda_3)^2} {(\lambda_1+\lambda_2) (\lambda_1+\lambda_3) (\lambda_2+\lambda_3)} b_1 b_2 b_3, \\
 V_3 = \frac{(\lambda_1-\lambda_2)^2 (\lambda_1-\lambda_3)^2 (\lambda_2-\lambda_3)^2} {(\lambda_1+\lambda_2) (\lambda_1+\lambda_3) (\lambda_2+\lambda_3)} \lambda_1 \lambda_2 \lambda_3 b_1 b_2 b_3, \\
 U_4 = V_4 = 0,
\end{gather*}
 and consequently
 \begin{gather*}
 W_0 = 1, \\
 W_1 = U_1 V_1 - U_2 V_0 = \lambda_1 b_1^2 + \lambda_2 b_2^2 + \lambda_3 b_3^2 + \frac{4 \lambda_1 \lambda_2}{\lambda_1+\lambda_2} b_1 b_2 +
 \frac{4 \lambda_1 \lambda_3}{\lambda_1+\lambda_3} b_1 b_3 + \frac{4 \lambda_2 \lambda_3}{\lambda_2+\lambda_3} b_2 b_3, \\
 W_2 = U_2 V_2 - U_3 V_1 = \frac{(\lambda_1-\lambda_2)^4}{(\lambda_1+\lambda_2)^2} \lambda_1 \lambda_2 (b_1 b_2)^2
 + \frac{(\lambda_1-\lambda_3)^4}{(\lambda_1+\lambda_3)^2} \lambda_1 \lambda_3 (b_1 b_3)^2\nonumber\\
 \hphantom{W_2 =}{} + \frac{(\lambda_2-\lambda_3)^4}{(\lambda_2+\lambda_3)^2} \lambda_2 \lambda_3 (b_2 b_3)^2
 + \frac{4 \lambda_1 \lambda_2 \lambda_3 b_1 b_2 b_3} {(\lambda_1+\lambda_2)(\lambda_1+\lambda_3)(\lambda_2+\lambda_3)}\\
 \hphantom{W_2 =}{}
 \times \bigl( (\lambda_1-\lambda_2)^2 (\lambda_1-\lambda_3)^2 b_1 + (\lambda_2-\lambda_1)^2 (\lambda_2-\lambda_3)^2 b_2 + (\lambda_3-\lambda_1)^2 (\lambda_3-\lambda_2)^2 b_3 \bigr) , \\
 W_3 = U_3 V_3 = \lambda_1 \lambda_2 \lambda_3 (U_3)^2 .
 \end{gather*}
\end{Example}

\begin{Remark} \label{rem:DP-threepeakon-parameters} A symmetric way of writing the Degasperis--Procesi pure three-peakon solution, analogous to what we saw in Remark~\ref{rem:CH-threepeakon-parameters} for the Camassa--Holm equation, is obtained by taking
\begin{gather*}
\lambda_k = \frac{1}{c_k} \qquad\text{with}\quad c_1 > c_2 > c_3 > 0 ,\\
b_1(0) = \frac{c_1 \sqrt{(c_1+c_2)(c_1+c_3)}}{(c_1-c_2)(c_1-c_3)} ,\qquad
b_2(0) = \frac{c_2 \sqrt{(c_1+c_2)(c_2+c_3)}}{(c_1-c_2)(c_2-c_3)} {\rm e}^K ,\\
b_3(0) = \frac{c_3 \sqrt{(c_1+c_3)(c_2+c_3)}}{(c_1-c_3)(c_2-c_3)} .
\end{gather*}
In this form, the solution depends on the asymptotic velocities $c_1$, $c_2$, $c_3$, together with one more essential parameter~$K\in \R$, where $K=0$ gives a solution symmetric under the reversal $(x,t) \mapsto (-x,-t)$.
\end{Remark}

\begin{Remark} \label{rem:DP-antipeakons} The solution formulas \eqref{eq:DP-generalsolution} also work for pure antipeakon solutions, with all amplitudes negative, the only difference being that all $\lambda_k$ are negative in this case. The situation for mixed peakon--antipeakon solutions is considerably more complicated. To begin with, the peakon solution formulas, with parameters determined by initial data $\{ x_k(0), m_k(0) \}_{k=1}^N$, only provide the solution of the initial-value problem up until the first peakon--antipeakon collision, since at that time the PDE solution $u(x,t)$ develops a jump discontinuity, so to continue past the collision one has to go outside the peakon world and consider \emph{shockpeakons}~\cite{lundmark:2007:shockpeakons} (cf.~Examples~\ref{ex:DP-p-ap-symm} and~\ref{ex:DP-p-ap-asymm} below). Secondly, the eigenvalues~$\lambda_k$ (which in the Degasperis--Procesi case are eigenvalues of a non-symmetric matrix) need not be real and simple anymore. If the eigen\-va\-lues are complex and simple, and $\lambda_i + \lambda_j \neq 0$ for all $i$ and~$j$, then the formulas work without modification; the eigenvalues occur in complex-conjugate pairs, and whenever $\lambda_j = \overline{\lambda_i}$, then also $b_j = \overline{b_i}$, which will make all $U_k^a$ real-valued, and the formulas provide a solution of the initial value problem until the first collision. If there are eigenvalues of multiplicity greater than one, then the formulas must be modified, taking into account that the partial fraction decomposition of the Weyl function will involve coefficients $b_k^{(i)}$ whose time dependence is given by a polynomial in~$t$ times the exponential ${\rm e}^{t/\lambda_k}$; see Szmigielski and Zhou~\cite{szmigielski-zhou:2013:DP-colliding-peakons-shock-formation,szmigielski-zhou:2013:DP-peakon-antipeakon}. There is also the possibility of resonant cases, where $\lambda_i + \lambda_j = 0$ for one or more pairs $(i,j)$. Such cases can be handled using a limiting procedure, but as far as we are aware the resulting formulas have not been published, with the exception of the symmetric two-peakon solution with $\lambda_1+\lambda_2=0$, which is easily computed by direct integration (see Example~\ref{ex:DP-p-ap-symm} below).
\end{Remark}

\subsection{Novikov peakons}\label{sec:Novikov-N}

As far as pure peakon solutions are concerned, Novikov's equation is fairly similar to the Degasperis--Procesi equation. The $N$-peakon solutions are governed by the ODEs~\eqref{eq:Novikov-peakon-ODEs-shorthand}, and the general solution for positive amplitudes $m_k$ was derived by Hone, Lundmark and Szmigielski~\cite{hone-lundmark-szmigielski:2009:novikov}. It is once again given in terms of spectral data $\{ \lambda_k, b_k \}_{k=1}^N$ where the eigenvalues $\lambda_k$ are positive and distinct, and the residues $b_k$ are positive and have time dependence $b_k(t) = b_k(0) {\rm e}^{t/\lambda_k}$.

Recall the notation $U_k^a$ and $W_k$ from \eqref{eq:Uk} and~\eqref{eq:Wk}, and also let
\begin{gather*}
 Z_k = \begin{vmatrix} U_{k}^{-1} & U_{k-1}^0 \\ U_{k+1}^{-1} & U_{k}^0 \end{vmatrix} = \begin{vmatrix} T_{k} & U_{k-1} \\ T_{k+1} & U_{k} \end{vmatrix},
\end{gather*}
where $T_k = U_k^{-1}$ and $U_k = U_k^0$ is the notation used by Hone, Lundmark and Szmigielski~\cite{hone-lundmark-szmigielski:2009:novikov}. In terms of these quantities, the solution formulas are
\begin{gather}
 \label{eq:Novikov-generalsolution}
 x_{N+1-k}(t) = \frac12 \ln\frac{Z_{k}}{W_{k-1}}
 ,\qquad
 m_{N+1-k}(t) = \frac{\sqrt{Z_{k} W_{k-1}}}{U_{k} U_{k-1}}
 ,
\end{gather}
for $k=1,\dots,N$.
The fact that $W_k$ and $Z_k$ are positive follows in the pure peakon
case from an explicit combinatorial formula due to Lundmark and
Szmigielski which expresses them as sums of positive quantities~\cite[Lemma 2.20]{lundmark-szmigielski:2005:DPlong}.
A different argument, which also works in the mixed peakon--antipeakon case,
is given by Kardell and Lundmark~\cite{kardell-lundmark:2016p:novikov-peakon-antipeakon}.

\begin{Example}[Novikov two-peakon solution] \label{ex:Novikov-twopeakon-solution} The Novikov two-peakon solution is
 \begin{gather*}
x_1(t) = \frac12 \ln\frac{Z_2}{W_1} = \frac12 \ln \frac{ \frac{(\lambda_1-\lambda_2)^4}{(\lambda_1+\lambda_2)^2 \lambda_1 \lambda_2} b_1^2 b_2^2}{ \lambda_1 b_1^2 + \lambda_2 b_2^2 + \frac{4 \lambda_1 \lambda_2}{\lambda_1+\lambda_2} b_1 b_2} , \\
 x_2(t) = \frac12 \ln\frac{Z_1}{W_0} = \frac12 \ln \left( \frac{b_1^2}{\lambda_1} + \frac{b_2^2}{\lambda_2} + \frac{4}{\lambda_1+\lambda_2} b_1 b_2 \right) , \\
 m_1(t) = \frac{\sqrt{Z_2 W_1}}{U_2 U_1} = \frac{ \left[ \frac{(\lambda_1 - \lambda_2)^4 b_1^2 b_2^2}{(\lambda_1 + \lambda_2)^2 \lambda_1 \lambda_2} \left( \lambda_1 b_1^2 + \lambda_2 b_2^2 + \frac{4 \lambda_1 \lambda_2}{\lambda_1+\lambda_2} b_1 b_2 \right) \right]^{1/2}}{ \frac{(\lambda_1 - \lambda_2)^2 b_1 b_2}{\lambda_1 + \lambda_2} (b_1+b_2)} , \\
 m_2(t) = \frac{\sqrt{Z_1 W_0}}{U_1 U_0}= \frac{\left( \frac{b_1^2}{\lambda_1} + \frac{b_2^2}{\lambda_2} + \frac{4}{\lambda_1+\lambda_2} b_1 b_2 \right)^{1/2}}{b_1+b_2} ,
 \end{gather*}
 where the expression for $m_1$ can be simplified to
 \begin{gather*}
 m_1(t) = \frac{ \left( \lambda_1 b_1^2 + \lambda_2 b_2^2 + \frac{4 \lambda_1 \lambda_2}{\lambda_1+\lambda_2} b_1 b_2 \right)^{1/2}}{\sqrt{\lambda_1 \lambda_2} (b_1+b_2)}
 \end{gather*}
 in the pure peakon case, since then all spectral data are positive.
\end{Example}

\begin{Remark} \label{rem:Novikov-antipeakons} Pure antipeakon solutions are obtained from pure peakon solutions simply by keeping all~$x_k(t)$ and changing the signs of all~$m_k(t)$, which in terms of spectral data is accomplished by keeping all~$\lambda_k$ and changing the signs of all~$b_k(t)$. Note that both peakons and antipeakons move to the right, since $\dot x_k = u(x_k)^2 \ge 0$. Mixed peakon--antipeakon solutions have been studied in detail by Kardell and Lundmark~\cite{kardell-lundmark:2016p:novikov-peakon-antipeakon}, and it turns out that the behaviour of the Novikov equation differs from that of the Camassa--Holm and Degasperis--Procesi equations. The eigenvalues~$\lambda_k$ may be complex, but as long as they are simple and have positive real part (which is the generic case), the solutions will still be described by the same formulas~\eqref{eq:Novikov-generalsolution} as in the pure peakon case. Despite everything moving to the right, there will be peakon--antipeakon collisions where a faster peakon (or antipeakon) catches up with a slower antipeakon (or peakon). At the collision, the corresponding amplitudes blow up to~$\pm\infty$, but the wave profile $u(x,t)$, which is defined by the formulas for $x_k(t)$ and $m_k(t)$ for all~$t$ except the instants of collision (which are isolated), extends to a continuous function defined for all $t \in \R$, providing a global conservative weak solution; cf.\ Remark~\ref{rem:DP-Novikov-references}, and see Example~\ref{ex:Novikov-1p-4cluster} for an illustration. For some initial conditions, the eigenvalues have multiplicity greater than one. The solution formulas in those non-generic cases are also known, although we will not state them here, since that would require quite a lot of additional notation. For complex eigenvalues, the peakon--antipeakon solutions exhibit periodic or quasi-periodic behaviour, with peakons colliding, separating, and colliding again, infinitely many times. The eigenvalues always lie in the right half of the complex plane, $\operatorname{Re} \lambda_k \ge 0$. For $N \ge 3$ there may be conjugate pairs lying on the imaginary axis, leading to resonances where some $\lambda_i + \lambda_j = 0$; that non-generic case is also covered by modified solution formulas which are known (but not described here).
\end{Remark}

This concludes our review of the known solution formulas for peakons (with nonzero amplitude), and we now turn to our new results about ghostpeakons and characteristic curves.

\section{Camassa--Holm ghostpeakons}\label{sec:CH-ghost}

In this section, we will state and prove the explicit formulas for Camassa--Holm multipeakon solutions where one or several amplitudes $m_k(t)$ are identically zero. As explained above Defi\-ni\-tion~\ref{def:ghostpeakon}, ghostpeakons do not interact with each other, so it is enough to find the solution for the case with $N$ ordinary peakons and just \emph{one} ghostpeakon, say at position $N+1-p$ for some $0 \le p \le N$, which is the case treated in Theorem~\ref{thm:CH-ghost}. If there are several ghostpeakons, the position of each one of them can be obtained by disregarding the other ghostpeakons and applying Theorem~\ref{thm:CH-ghost} to the particular ghostpeakon in question.

The trajectories $x = x_k(t)$ of the ghostpeakons
are characteristic curves for the multipeakon solution
$u(x,t) = \sum m_i {\rm e}^{-\abs{x-x_i}}$
formed by the nonzero-amplitude peakons;
see the discussion in connection with equation~\eqref{eq:u-with-ghostpeakon-at-site-k}.
Corollary~\ref{cor:CH-add-a-ghostpeakon} reformulates Theorem~\ref{thm:CH-ghost}
from that point of view.

Recall the notation $\Delta_k^a=\Delta_k^a(N)$ from \eqref{eq:Delta-evaluated}.

\begin{Theorem} \label{thm:CH-ghost} Fix some $p$ with $0 \le p \le N$. The solution of the Camassa--Holm $(N+1)$-peakon ODEs \eqref{eq:CH-peakon-ODEs} with $x_1 < \dotsb < x_{N+1}$ and all amplitudes $m_k(t)$ nonzero except for $m_{N+1-p}(t) = 0$ is as follows: the position of the ghostpeakon is given by
 \begin{gather} \label{eq:CH-ghost-general}
 x_{N+1-p}(t) = \ln \frac{\Delta_{p+1}^0 + \theta \Delta_{p}^0}{\Delta_{p}^2 + \theta \Delta_{p-1}^2 } ,\qquad 0 < \theta < \infty ,
 \end{gather}
 while the other peakons are given by the $N$-peakon solution formulas~\eqref{eq:CH-generalsolution} up to relabeling $($shift the index by one for the peakons to the right of the ghostpeakon$)$:
 \begin{gather} \label{eq:CH-N-peakons-renumbered}
 x_{N+1-k}(t) =
 \begin{cases}
 \ln \dfrac{\Delta_{k+1}^0}{\Delta_{k}^2}, & 0 \le k < p, \vspace{1mm}\\
 \ln \dfrac{\Delta_{k}^0}{\Delta_{k-1}^2}, & p < k \le N,
 \end{cases}\qquad
 m_{N+1-k}(t) =
 \begin{cases}
 \dfrac{\Delta_{k+1}^0 \Delta_{k}^2}{\Delta_{k+1}^1 \Delta_{k}^1 }, & 0 \le k < p, \vspace{1mm}\\
 \dfrac{\Delta_{k}^0 \Delta_{k-1}^2}{\Delta_{k}^1 \Delta_{k-1}^1 }, & p < k \le N.
 \end{cases}
 \end{gather}
 Here $\theta \in (0,\infty)$ is a constant in one-to-one correspondence with the ghostpeakon's initial position $x_{N+1-p}(0)$, while the quantities $\{ \lambda_k, b_k \}_{k=1}^{N}$ appearing in the expressions $\Delta_k^a = \Delta_k^a(N)$ have the usual time dependence $\lambda_k = {\rm const}$, $b_k(t) = b_k(0) {\rm e}^{t/\lambda_k}$.
\end{Theorem}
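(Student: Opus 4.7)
The natural approach is to obtain the ghostpeakon formulas as a degenerate limit of the $(N{+}1)$-peakon Beals--Sattinger--Szmigielski solution~\eqref{eq:CH-generalsolution} (applied with $N$ replaced by $N+1$) in which the $(N{+}1)$st amplitude is forced to zero. Concretely, I would parametrise that solution by spectral data $\{\tilde\lambda_j,\tilde b_j\}_{j=1}^{N+1}$, keep $\tilde\lambda_j=\lambda_j$ and $\tilde b_j=b_j$ for $j\le N$, and then let $\tilde\lambda_{N+1}\to\infty$ while holding $\theta:=\tilde b_{N+1}(0)\,\tilde\lambda_{N+1}^{2p}$ fixed. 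Since $\tilde b_{N+1}(t)=\tilde b_{N+1}(0)\,{\rm e}^{t/\tilde\lambda_{N+1}}$ and ${\rm e}^{t/\tilde\lambda_{N+1}}\to 1$ uniformly on compact $t$-intervals, this forces $\tilde b_{N+1}(t)\to 0$, so the spectral residue at $\tilde\lambda_{N+1}$ disappears. The one leftover parameter $\theta$ will encode the free degree of freedom (the initial position of the ghost), while the exponent $2p$ in the rescaling is exactly what targets the vanishing amplitude to site $N+1-p$.

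The heart of the argument is the asymptotic expansion of $\Delta_k^a(N+1)$. Splitting the defining sum~\eqref{eq:Delta-evaluated} according to whether the index $N+1$ appears in the summation subset, and using $(\tilde\lambda_{N+1}-\lambda_i)^2=\tilde\lambda_{N+1}^{\,2}\bigl(1+O(\tilde\lambda_{N+1}^{-1})\bigr)$, one finds
\begin{gather*}
 \Delta_k^a(N+1) = \Delta_k^a(N) + \theta\,\tilde\lambda_{N+1}^{\,a+2(k-1)-2p}\,\Delta_{k-1}^a(N)\bigl(1+O(\tilde\lambda_{N+1}^{-1})\bigr).
\end{gather*}
The critical exponent $a+2(k-1)-2p$ vanishes precisely at the two pairs $(k,a)=(p+1,0)$ and $(k,a)=(p,2)$, producing the finite limits $\Delta_{p+1}^0(N)+\theta\,\Delta_p^0(N)$ and $\Delta_p^2(N)+\theta\,\Delta_{p-1}^2(N)$ --- exactly the two ingredients appearing in~\eqref{eq:CH-ghost-general}. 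When the exponent is negative the $\theta$-term drops out and $\Delta_k^a(N+1)\to\Delta_k^a(N)$; when it is positive the $\theta$-term dominates with $\Delta_k^a(N+1)\sim\theta\,\tilde\lambda_{N+1}^{\,a+2(k-1)-2p}\,\Delta_{k-1}^a(N)$.

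Plugging these asymptotics into the $(N{+}1)$-peakon formulas at site $N+2-k$ splits into three cases. For $k\le p$, every factor tends to its $N$-peakon value and one recovers the top branch of~\eqref{eq:CH-N-peakons-renumbered} after the relabeling $k'=k-1<p$. For $k=p+1$ the two critical limits combine to produce~\eqref{eq:CH-ghost-general}, while simultaneously $\Delta_{p+1}^1(N+1)\sim\tilde\lambda_{N+1}\,\theta\,\Delta_p^1(N)$ blows up linearly, which is what forces $m_{N+1-p}\to 0$. For $k\ge p+2$ the dominant $\tilde\lambda_{N+1}$-powers cancel exactly in both ratios defining $x_{N+2-k}$ and $m_{N+2-k}$, leaving the bottom branch of~\eqref{eq:CH-N-peakons-renumbered} with $k'=k-1>p$.

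Finally, one has to justify that the pointwise limit really solves the $(N{+}1)$-peakon ODEs~\eqref{eq:CH-peakon-ODEs}: this follows from analyticity of the right-hand sides, since uniform-on-compacts limits of solutions are solutions, and in the limit the term $m_{N+1-p}\equiv 0$ drops out of every other equation, so $\{x_i,m_i\}_{i\ne N+1-p}$ reduces to the genuine $N$-peakon system while the equation for $x_{N+1-p}$ becomes the characteristic ODE $\dot\xi=u(\xi,t)$ for the $N$-peakon wave. The bijective correspondence $\theta\in(0,\infty)\leftrightarrow x_{N+1-p}(0)\in\bigl(x_{N-p}(0),x_{N+2-p}(0)\bigr)$ follows from the strict monotonicity $\partial_\theta\bigl({\rm e}^{x_{N+1-p}}\bigr)=\bigl(\Delta_p^0\Delta_p^2-\Delta_{p+1}^0\Delta_{p-1}^2\bigr)/\bigl(\Delta_p^2+\theta\,\Delta_{p-1}^2\bigr)^2$, whose numerator is positive because it is equivalent to the strict interlacing $x_{N-p}<x_{N+2-p}$ built into the $N$-peakon solution. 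The main technical obstacle I foresee is the exponent bookkeeping in the asymptotic expansion, together with a careful treatment of the boundary cases $p=0$ and $p=N$, where some out-of-range $\Delta$'s appear but the formulas remain consistent via the conventions $\Delta_{-1}^a=0$ and $\Delta_{N+1}^a(N)=0$.
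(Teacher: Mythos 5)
Your proposal is correct and follows essentially the same route as the paper's own proof: augment the spectral data with an extra eigenvalue $\lambda_{N+1}=1/\varepsilon$ and residue $b_{N+1}(0)=\varepsilon^{2p}\theta$, expand $\Delta_k^a(N+1)=\Delta_k^a+\theta\,\lambda_{N+1}^{a+2(k-1)-2p}\Delta_{k-1}^a\bigl(1+\order{\varepsilon}\bigr)$, and read off the three cases as $\varepsilon\to0$, with exactly the exponent bookkeeping you describe. The only cosmetic difference is the last step: where you invoke locally uniform limits of solutions (plus a monotonicity remark, which indeed reduces to the identity $\Delta_p^0\Delta_p^2-\Delta_{p+1}^0\Delta_{p-1}^2=(\Delta_p^1)^2$), the paper argues that after the reparametrization the ODE identities are meromorphic in $\varepsilon$ with removable singularities at $\varepsilon=0$, so vanishing for $\varepsilon\neq0$ forces vanishing at $\varepsilon=0$; both justifications are adequate.
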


\begin{proof} Let
 \begin{gather*}
 \lambda_1, \dots, \lambda_N ,\qquad b_1(0), \dots, b_N(0)
 \end{gather*}
be the spectral data corresponding to the initial data $\{ x_k(0), m_k(0) \}_{k \neq p}$ of the peakons with nonzero amplitudes, and consider an $(N+1)$-peakon solution obtained by augmenting these spectral data with some arbitrary positive constants $\lambda_{N+1}$ and $b_{N+1}(0)$. (We will later let these constants tend to $+\infty$ and $0$, respectively, with $\lambda_{N+1}^{2p} b_{N+1}(0)$ held constant.) The formulas for this $(N+1)$-peakon solution are given by \eqref{eq:CH-generalsolution} with $N+1$ instead of~$N$ (and $k+1$ instead of~$k$):
 \begin{gather*}
 x_{N+1-k} = \ln\frac{\tilde \Delta_{k+1}^0}{\tilde \Delta_{k}^2} ,\qquad m_{N+1-k} = \frac{\tilde \Delta_{k+1}^0 \tilde \Delta_{k}^2}{\tilde \Delta_{k+1}^1 \tilde \Delta_{k}^1} ,\qquad 0 \le k \le N ,
 \end{gather*}
 where $\tilde \Delta_k^a = \Delta_k^a(N+1)$. These formulas depend on the constant parameters
 \begin{gather*}
 \lambda_1, \dots, \lambda_N, \lambda_{N+1} ,\qquad b_1(0), \dots, b_N(0), b_{N+1}(0) ,
 \end{gather*}
where the $\lambda_k$ are real, distinct and nonzero and all $b_k(0)$ are positive. But we may just as well express the solution in terms of the equivalent set of parameters
 \begin{gather*}
 \lambda_1, \dots, \lambda_N, \varepsilon ,\qquad b_1(0), \dots, b_N(0), \theta ,
 \end{gather*}
 where
 \begin{gather*}
 \varepsilon = \frac{1}{\lambda_{N+1}} ,\qquad \theta = \lambda_{N+1}^{2p} b_{N+1}(0) .
 \end{gather*}
 Thus, $\lambda_{N+1}=1/\varepsilon$ and
 \begin{gather*}
 b_{N+1} = b_{N+1}(t) = b_{N+1}(0) {\rm e}^{t/\lambda_{N+1}} = \varepsilon^{2p} \theta {\rm e}^{\varepsilon t} = \varepsilon^{2p} \Theta ,
 \end{gather*}
 where
 \begin{gather*}
 \Theta = \Theta(t) = \theta {\rm e}^{\varepsilon t} .
 \end{gather*}
When we perform these substitutions in the definition \eqref{eq:Delta-evaluated} of $\tilde \Delta_k^a = \Delta_k^a(N+1)$, split the sum according to whether $i_k \le N$ or $i_k = N+1$, and write $\Delta_k^a = \Delta_k^a(N)$, we obtain (for $1 \le k \le N+1$)
 \begin{gather*}
 \tilde \Delta_k^a = \sum_{1 \le i_1 < \dotsb < i_k \le N+1} \left( \prod_{r=1}^k \lambda_{i_r}^a b_{i_r} \right) \Delta ( \lambda_{i_1}, \dots, \lambda_{i_k} )^2 \\
 \hphantom{\tilde \Delta_k^a}{} =
 \sum_{1 \le i_1 < \dotsb < i_k \le N} \left( \prod_{r=1}^k \lambda_{i_r}^a b_{i_r} \right) \Delta ( \lambda_{i_1}, \dots, \lambda_{i_k} )^2 \\
 \hphantom{\tilde \Delta_k^a=}{} + \sum_{1 \le i_1 < \dotsb < i_{k-1} \le N} \left( \prod_{r=1}^{k-1} \lambda_{i_r}^a b_{i_r} \right) \Delta ( \lambda_{i_1}, \dots, \lambda_{i_{k-1}} )^2 \lambda_{N+1}^a b_{N+1} \prod_{s=1}^{k-1} (\lambda_{i_s} - \lambda_{N+1})^2 \\
 \hphantom{\tilde \Delta_k^a}{} = \Delta_k^a + \sum_{1 \le i_1 < \dotsb < i_{k-1} \le N} \left( \prod_{r=1}^{k-1} \lambda_{i_r}^a b_{i_r} \right) \Delta ( \lambda_{i_1}, \dots, \lambda_{i_{k-1}} )^2
 \varepsilon^{-a} \varepsilon^{2p} \Theta \prod_{s=1}^{k-1} \frac{(\varepsilon \lambda_{i_s} - 1)^2}{\varepsilon^2} \\
\hphantom{\tilde \Delta_k^a}{} = \Delta_k^a + \sum_{1 \le i_1 < \dotsb < i_{k-1} \le N} \left( \prod_{r=1}^{k-1} \lambda_{i_r}^a b_{i_r} \right) \Delta ( \lambda_{i_1}, \dots, \lambda_{i_{k-1}} )^2
 \Theta \frac{\varepsilon^{2p-a} }{(\varepsilon^{2})^{k-1}} \bigl( 1 + \order{\varepsilon} \bigr) \\
\hphantom{\tilde \Delta_k^a}{} = \Delta_k^a + \Delta_{k-1}^a \Theta \varepsilon^{2(p-k+1)-a} ( 1 + \order{\varepsilon}) \qquad (\text{as $\varepsilon \to 0$}) .
 \end{gather*}
 Consequently, expressed in terms of the new parameters (and in particular considered as being functions of~$\varepsilon$) the positions and amplitudes in the $(N+1)$-peakon solution take the form
 \begin{gather*}
 x_{N+1-k}(t;\varepsilon) = \ln\frac{\tilde \Delta_{k+1}^0}{\tilde \Delta_{k}^2}
 = \ln\frac{\Delta_{k+1}^0 + \Delta_{k}^0 \Theta \varepsilon^{2(p-k)}( 1 + \order{\varepsilon} )}{\Delta_{k}^2 + \Delta_{k-1}^2 \Theta \varepsilon^{2(p-k)} ( 1 + \order{\varepsilon} )} \\
 \hphantom{x_{N+1-k}(t;\varepsilon)}{} =
 \begin{cases}
 \ln\dfrac{\Delta_{k+1}^0 + \order{\varepsilon}}{\Delta_{k}^2 + \order{\varepsilon}}, & k < p, \vspace{1mm}\\
 \ln\dfrac{\Delta_{p+1}^0 + \Delta_{p}^0 \Theta + \order{\varepsilon}}{\Delta_{p}^2 + \Delta_{p-1}^2 \Theta + \order{\varepsilon}}, & k = p, \vspace{1mm}\\
 \ln\dfrac{\Delta_{k}^0 \Theta + \order{\varepsilon}}{\Delta_{k-1}^2 \Theta + \order{\varepsilon}}, & k > p,
 \end{cases}
 \end{gather*}
 and
 \begin{gather*}
 m_{N+1-k}(t;\varepsilon) = \frac{\tilde \Delta_{k+1}^0 \tilde \Delta_{k}^2}{\tilde \Delta_{k+1}^1 \tilde \Delta_{k}^1} \\
 \hphantom{m_{N+1-k}(t;\varepsilon)}{} =
 \frac{\Delta_{k+1}^0 + \Delta_{k}^0 \Theta \varepsilon^{2(p-k)} ( 1 + \order{\varepsilon} )}{\Delta_{k+1}^1 + \Delta_{k}^1 \Theta \varepsilon^{2(p-k)-1} ( 1 + \order{\varepsilon} )}
 \frac{\Delta_{k}^2 + \Delta_{k-1}^2 \Theta \varepsilon^{2(p-k)} ( 1 + \order{\varepsilon})}{\Delta_{k}^1 + \Delta_{k-1}^1 \Theta \varepsilon^{2(p-k+1)-1} ( 1 + \order{\varepsilon} )} \\
\hphantom{m_{N+1-k}(t;\varepsilon)}{} =
 \begin{cases}
 \dfrac{\Delta_{k+1}^0 \Delta_{k}^2 + \order{\varepsilon}}{\Delta_{k+1}^1 \Delta_{k}^1 + \order{\varepsilon}}, & k < p, \vspace{1mm}\\
 \dfrac{\varepsilon \bigl( \Delta_{p+1}^0 + \Delta_{p}^0 \Theta + \order{\varepsilon} \bigr) \bigl( \Delta_{p}^2 + \Delta_{p-1}^2 \Theta + \order{\varepsilon} \bigr)}{\bigl( \Delta_{p}^1 \Theta + \order{\varepsilon} \bigr) \bigl( \Delta_{p}^1 + \order{\varepsilon} \bigr)} , & k = p, \vspace{1mm}\\
 \dfrac{\Theta^2 \Delta_{k}^0 \Delta_{k-1}^2 + \order{\varepsilon}}{\Theta^2 \Delta_{k}^1 \Delta_{k-1}^1 + \order{\varepsilon}}, & k > p.
 \end{cases}
 \end{gather*}
 In the limit $\varepsilon \to 0^+$, these expressions reduce to those given in \eqref{eq:CH-ghost-general} and \eqref{eq:CH-N-peakons-renumbered}. Note in particular that $m_{N+1-p}(t;0) = 0$, i.e., we really kill the peakon at that position, as claimed.

The expressions which remain after letting $\varepsilon \to 0$ still satisfy the peakon ODEs. Indeed, this is a purely differential-algebraic issue. Due to the ordering property $x_1 < \dots < x_{N+1}$, we can remove the absolute value signs in the ODEs, and then the satisfaction of the ODEs by the proposed solution formulas boils down to certain meromorphic functions of~$t$ and of the parameters being identically zero. After the reparametrization, these meromorphic functions will have removable singularities at $\varepsilon=0$, so if they are zero for all $\varepsilon \neq 0$, they will remain zero also for $\varepsilon = 0$.
\end{proof}

If we rename
\begin{gather*}
 (x_1,x_2,\dots,x_{N+1-p},\dots,x_N,x_{N+1})
\end{gather*}
to
\begin{gather*}
 (x_1,x_2,\dots,x_{\text{ghost}},\dots,x_{N-1},x_N)
\end{gather*}
and similarly for $m_k$, we see that Theorem~\ref{thm:CH-ghost} tells us how to add a ghostpeakon to a given nonzero-amplitude $N$-peakon solution. (Or as many ghostpeakons as we like, since they are independent of each other.) Which formula to use for describing the ghostpeakon's trajectory $x = x_{\text{ghost}}(t)$ depends on which pair of peakons we want it to lie between. The collection of all possible such ghostpeakon trajectories,
together with the peakon trajectories $x = x_i(t)$ themselves, constitutes the family of characteristic curves for the $N$-peakon solution $u(x,t)$, i.e., the curves $x = \xi(t)$ such that $\dot \xi(t) = u(\xi(t),t)$. So we can rephrase the theorem as the following corollary.

\begin{Corollary} \label{cor:CH-add-a-ghostpeakon} For the Camassa--Holm $N$-peakon solution given by~\eqref{eq:CH-generalsolution}, namely
 \begin{gather*}
 x_{N+1-k}(t) = \ln\frac{\Delta_k^0}{\Delta_{k-1}^2} ,\qquad m_{N+1-k}(t) = \frac{\Delta_k^0 \Delta_{k-1}^2}{\Delta_k^1 \Delta_{k-1}^1} ,\qquad 1 \le k \le N,
 \end{gather*}
 the characteristic curves $x = \xi(t)$ in the $k$th interval from the right, i.e.,
 \begin{gather} \label{eq:kth-interval-from-right}
 x_{N-k}(t) < \xi(t) < x_{N+1-k}(t)
 \end{gather}
$($where $0 \le k \le N$, $x_0 = -\infty$, $x_{N+1} = +\infty)$, are given by
 \begin{gather} \label{eq:CH-add-a-ghostpeakon}
 \xi(t) = \ln\frac{\Delta_{k+1}^0 + \theta \Delta_{k}^0}{\Delta_{k}^2 + \theta \Delta_{k-1}^2} ,\qquad \theta > 0.
 \end{gather}
\end{Corollary}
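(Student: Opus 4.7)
The plan is to read Corollary~\ref{cor:CH-add-a-ghostpeakon} as a direct rephrasing of Theorem~\ref{thm:CH-ghost}. The key observation is that the peakon ODE~\eqref{eq:CH-peakon-ODEs} for the position $x_{N+1-p}$ reads $\dot x_{N+1-p}(t) = u(x_{N+1-p}(t),t)$, and this identity is valid whether or not the associated amplitude $m_{N+1-p}$ vanishes. In the ghostpeakon situation, $u$ on the right-hand side is precisely the $N$-peakon wave built from the non-ghost peakons, so the trajectory produced by Theorem~\ref{thm:CH-ghost} is \emph{tautologically} a characteristic curve of that wave. Renaming $p$ to $k$ in formula~\eqref{eq:CH-ghost-general} gives exactly~\eqref{eq:CH-add-a-ghostpeakon}, so what remains is to argue that, as $\theta$ varies in $(0,\infty)$, one obtains \emph{all} characteristic curves lying in the $k$th strip~\eqref{eq:kth-interval-from-right}.

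Next I would verify the two degenerate limits at any fixed $t$. As $\theta \to 0^+$, the expression~\eqref{eq:CH-add-a-ghostpeakon} collapses to $\ln(\Delta_{k+1}^0/\Delta_k^2)$, which is $x_{N-k}(t)$ by~\eqref{eq:CH-generalsolution}; as $\theta \to +\infty$, the leading behaviour in numerator and denominator gives $\ln(\Delta_k^0/\Delta_{k-1}^2) = x_{N+1-k}(t)$. So the family of ghostpeakon trajectories interpolates continuously between the two adjacent ordinary peakon trajectories. The boundary cases $k=0$ and $k=N$ fall into place because of the conventions $\Delta_0^a = 1$ and $\Delta_{-1}^a = \Delta_{N+1}^a = 0$, which send one endpoint to $\pm\infty$ as it should.

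To conclude that every characteristic is hit, I would combine monotonicity with ODE uniqueness. At $t=0$ the map $\theta \mapsto \xi(0;\theta)$ is a continuous bijection of $(0,\infty)$ onto the open interval $(x_{N-k}(0),x_{N+1-k}(0))$: continuity and the correct limits follow from the previous paragraph, and injectivity follows either from a direct sign computation of $\partial\xi(0;\theta)/\partial\theta$, or, more conceptually, from the fact that two characteristics of $\dot\xi=u(\xi,t)$ starting at different points cannot meet inside the strip (since $u$ is Lipschitz in $\xi$ there, being smooth away from the peakon trajectories). Thus every prescribed initial position $\xi(0) \in (x_{N-k}(0),x_{N+1-k}(0))$ corresponds to a unique $\theta$, and by uniqueness of solutions to the characteristic ODE the entire trajectory of that characteristic must coincide with $\xi(\cdot\,;\theta)$.

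The proof therefore requires no new computation beyond Theorem~\ref{thm:CH-ghost}. The only points that demand a little care are keeping the two index-shift conventions in \eqref{eq:CH-generalsolution} and \eqref{eq:CH-N-peakons-renumbered} consistent, and verifying that the boundary cases $k=0,N$ behave as claimed; neither is a serious obstacle.
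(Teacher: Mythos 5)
Your proposal is correct and follows essentially the same route as the paper: the corollary is obtained from Theorem~\ref{thm:CH-ghost} by relabeling, using that a ghostpeakon trajectory automatically satisfies $\dot\xi = u(\xi,t)$, with the limits $\theta\to 0^+$ and $\theta\to\infty$ recovering the adjacent peakon trajectories (the paper records exactly this in the remark following the corollary). Your extra completeness step (intermediate value theorem in $\theta$ plus uniqueness for the characteristic ODE) just makes explicit what the paper encodes in the statement that $\theta$ is in one-to-one correspondence with the ghostpeakon's initial position; note that the monotonicity in $\theta$ you allude to follows from the identity $\Delta_k^0\Delta_k^2-\Delta_{k+1}^0\Delta_{k-1}^2=\bigl(\Delta_k^1\bigr)^2$ of Remark~\ref{rem:Matsuno}.
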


\begin{Remark} Note that the function $\xi(t)$ given by \eqref{eq:CH-add-a-ghostpeakon} ranges over all values between
 \begin{gather*}
 x_{N-k}(t) = \ln\frac{\Delta_{k+1}^0}{\Delta_{k}^2} = \lim_{\theta \to 0} \xi(t) \qquad\text{and}\qquad x_{N+1-k}(t) = \ln\frac{\Delta_k^0}{\Delta_{k-1}^2} = \lim_{\theta \to \infty} \xi(t)
 \end{gather*}
 as the parameter $\theta$ ranges over all positive numbers.
\end{Remark}

\begin{Remark} Let us say a few words about the philosophy behind our proof. The obvious first thing to try is direct integration, which is tricky but not impossible; see Example~\ref{ex:CH-N3-stuck} and Remarks~\ref{rem:Matsuno} and~\ref{rem:Matsuno-continued}. Another idea is to fix all initial data $x_i(0)$ and $m_i(0)$ except for one amplitude $m_k(0) = \varepsilon \neq 0$ that we allow to vary. Then all the spectral data $\{ \lambda_i, b_i(0) \}_{i=1}^{N+1}$ in the $(N+1)$-peakon solution formulas will be functions of~$\varepsilon$. Taking the limit as $\varepsilon \to 0$, these solution formulas must reduce to the usual (already known) $N$-peakon solution formulas for the functions $x_i(t)$ and $m_i(t)$ with $i \neq k$, plus the trivial formula $m_k(t)=0$ and one additional (previously unknown) formula which gives the ghostpeakon's position $x_k(t)$. However, this is easily doable only when $N=1$, the trivial case with one peakon and one ghostpeakon, since in this case the eigenvalues $\lambda_1$ and $\lambda_2$ are the roots of a quadratic equation with coefficients depending on the initial data (including~$\varepsilon$), so we can write $\lambda_1(\varepsilon)$ and $\lambda_2(\varepsilon)$ explicitly with formulas involving nothing worse than square roots. So instead of taking a limit in the space of physical variables, we do it the space of spectral variables: keep the parameters $\{ \lambda_i, b_i(0) \}_{i=1}^{N}$, replace the parameters $\lambda_{N+1} > 0$ and $b_{N+1}(0) > 0$ with the two new parameters $\varepsilon > 0$ and $\theta > 0$ defined by
 \begin{gather*}
 \lambda_{N+1} = \frac{1}{\varepsilon} ,\qquad b_{N+1}(0) = \varepsilon^{2p} \theta
 \end{gather*}
for some suitably chosen integer~$p$, and let $\varepsilon$ vary while keeping all other spectral parameters (including~$\theta$) fixed. All the functions $x_i(t;\varepsilon)$ and $m_i(t;\varepsilon)$ will then depend on~$\varepsilon$, and it turns out that each one of them will have a removable singularity at $\varepsilon=0$. Moreover, the value at~$\varepsilon = 0$ will be $m_k(t;0)=0$ for exactly one index~$k$, namely $k = N+1-p$, and the limiting formula for $x_k(t;0)$ then gives us the position of a ghostpeakon at that site, while the other $x_i(t;0)$ and $m_i(t;0)$ reduce to the $N$-peakon solution with parameters $\{ \lambda_i, b_i(0) \}_{i=1}^{N}$. Clearly the main difficulty lies in figuring out the suitable power of $\varepsilon$ to use in the substitution for $b_{N+1}(0)$ in order to obtain the desired result; once that is done, the verification that it works is straightforward.
\end{Remark}

\begin{Example}[CH two-peakon characteristics by direct integration] \label{ex:CH-N3-stuck} Grunert and Holden~\cite{grunert-holden:2016:CH-peakon-antipeakon-alpha-dissipative} computed the $\alpha$-dissipative continuation of the solution after a peakon--antipeakon collision (with $N=2$, i.e., there are no other peakons present). A quote from the introduction of their article emphasizes how essential it is to have explicit formulas for the characteristic curves in order to rigorously verify that the function~$u$ really satisfies the (rather involved) definition of an $\alpha$-dissipative solution:
 \begin{quote}
 ``It is somewhat surprising that the non-symmetric case allows an explicit, albeit not simple, solution. [ \ldots] The crux of the calculation is that one can solve exactly the equation for the characteristics.''
 \end{quote}
Our Corollary~\ref{cor:CH-add-a-ghostpeakon} provides these formulas immediately for any~$N$, but for comparison, let us review what the direct calculations look like for $N=2$.

 From \eqref{eq:CH-peakon-ODEs}, the Camassa--Holm three-peakon ODEs are
 \begin{alignat*}{3}
& \dot x_1 = m_1 + m_2 E_{12} + m_3 E_{13} ,\qquad && \dot m_1= m_1 ( - m_2 E_{12} - m_3 E_{13} ) , &\\
& \dot x_2 = m_1 E_{12} + m_2 + m_3 E_{23} ,\qquad && \dot m_2= m_2 ( m_1 E_{12} - m_3 E_{23} ) , &\\
& \dot x_3 = m_1 E_{13} + m_2 E_{23} + m_3 ,\qquad && \dot m_3 = m_3 ( m_1 E_{13} + m_2 E_{23} ) ,&
 \end{alignat*}
 where we assume $x_1 < x_2 < x_3$ as always, and use the abbreviations $E_{12} = {\rm e}^{x_1-x_2}$, $E_{13} = {\rm e}^{x_1-x_3}$ and $E_{23} = {\rm e}^{x_2-x_3}$.

 Suppose first that we seek the solution with a ghostpeakon at~$x_3$. With $m_3$ identically zero, the equations for $x_1(t)$, $x_2(t)$, $m_1(t)$, $m_2(t)$ reduce to the Camassa--Holm two-peakon ODEs
 \begin{alignat*}{3}
 & \dot x_1 = m_1 + m_2 E_{12} ,\qquad && \dot m_1 = - m_1 m_2 E_{12} ,&\\
 & \dot x_2 = m_1 E_{12} + m_2 ,\qquad && \dot m_2 = m_2 m_1 E_{12} ,&
 \end{alignat*}
 so we obtain those functions from the two-peakon solution formulas~\eqref{eq:CH-twopeakon-solution}:
 \begin{alignat*}{3}
 & x_1(t)= \ln \frac{(\lambda_1-\lambda_2)^2 b_1 b_2}{\lambda_1^2 b_1 + \lambda_2^2 b_2} ,\qquad && m_1(t) = \frac{\lambda_1^2 b_1 + \lambda_2^2 b_2}{\lambda_1 \lambda_2 (\lambda_1 b_1 + \lambda_2 b_2)} ,&\\
 & x_2(t) = \ln (b_1+b_2) ,\qquad && m_2(t) = \frac{b_1+b_2}{\lambda_1 b_1 + \lambda_2 b_2} , &
 \end{alignat*}
 where $b_k = b_k(t) = b_k(0) {\rm e}^{t/\lambda_k}$. The remaining equation for $x_3(t)$ becomes
 \begin{gather*}
 \dot x_3 = m_1 E_{13} + m_2 E_{23} = \bigl( m_1 {\rm e}^{x_1} + m_2 {\rm e}^{x_2} \bigr) {\rm e}^{-x_3} = \left( \frac{b_1}{\lambda_1} + \frac{b_2}{\lambda_2} \right) {\rm e}^{-x_3} ,
 \end{gather*}
 so that
 \begin{gather*}
 \frac{{\rm d}}{{\rm d}t} {\rm e}^{x_3(t)} = {\rm e}^{x_3(t)} \dot x_3(t) = \frac{1}{\lambda_1} b_1(t) + \tfrac{1}{\lambda_2} b_2(t) = \frac{1}{\lambda_1} b_1(0) {\rm e}^{t/\lambda_1} + \tfrac{1}{\lambda_2} b_2(0) {\rm e}^{t/\lambda_2} ,
 \end{gather*}
 which is easily integrated to give the desired ghostpeakon formula,
 \begin{gather} \label{eq:CH-2p-rightmost-char-directly-x3}
 x_3(t) = \ln\bigl( b_1(0) {\rm e}^{t/\lambda_1} + b_2(0) {\rm e}^{t/\lambda_2} + \theta \bigr) = \ln\bigl( b_1(t) + b_2(t) + \theta \bigr) .
 \end{gather}
Comparison with the formula $x_2(t) = \ln ( b_1(t)+b_2(t))$ shows that the constant of integration~$\theta$ must be positive, in order for $x_2 < x_3$ to hold. When $\theta$ runs through all positive values,
 equation~\eqref{eq:CH-2p-rightmost-char-directly-x3} thus gives the family of characteristic curves $x=\xi(t)$ for the two-peakon solution in the outer right region $x > x_2(t)$ in the $(x,t)$-plane:
 \begin{gather*}
 \xi(t) = \ln\bigl( \Delta_{1}^0 + \theta \bigr) = \ln\bigl( b_1 + b_2 + \theta \bigr) ,\qquad \theta > 0 .
 \end{gather*}
 The case with a ghostpeakon at~$x_1$ is just as easy. However, the middle case with a ghostpeakon at~$x_2$ leads to the equation
 \begin{gather} \label{eq:CH-N3-stuck}
 \dot x_2 = m_1 E_{12} + m_3 E_{23} = m_1(t) {\rm e}^{x_1(t)} {\rm e}^{-x_2} + m_3(t) {\rm e}^{ - x_3(t)} {\rm e}^{x_2} ,
 \end{gather}
where the functions $x_1(t)$, $x_3(t)$, $m_1(t)$, $m_3(t)$ are explicitly given by the two-peakon solution formulas~\eqref{eq:CH-twopeakon-solution} after relabeling, with $x_3$ and $m_3$ taking the places of~$x_2$ and~$m_2$. Here we have~${\rm e}^{-x_2}$ in one of the terms on the right-hand side and~${\rm e}^{x_2}$ in the other term, so the variables $x_2$ and~$t$ do not separate, but Grunert and Holden~\cite[Section~4, below~(4.14)]{grunert-holden:2016:CH-peakon-antipeakon-alpha-dissipative} managed to integrate equation~\eqref{eq:CH-N3-stuck} via a sequence of rather ingenious substitutions. In our notation, the resulting formula for the characteristics in the region between the two peakons is
 \begin{gather*}
 \xi(t) = \ln\frac{\Delta_{2}^0 + \theta \Delta_{1}^0}{\Delta_{1}^2 + \theta \Delta_{0}^2} = \ln\frac{(\lambda_1-\lambda_2)^2 b_1 b_2 + \theta (b_1+b_2)}{\lambda_1^2 b_1 + \lambda_2^2 b_2 + \theta} ,\qquad \theta > 0 .
 \end{gather*}
 (Note that this reduces to $x_1(t) = \ln \frac{(\lambda_1-\lambda_2)^2 b_1 b_2}{\lambda_1^2 b_1 + \lambda_2^2 b_2}$ and $x_2(t) = \ln ( b_1+b_2)$ as $\theta \to 0$ or $\theta \to \infty$, respectively.)
\end{Example}

\begin{Remark} \label{rem:Matsuno} One of the referees suggested the following argument, which shows how one may actually succeed in integrating the ODE for the characteristics directly, for any~$N$, thus giving an independent proof of Corollary~\ref{cor:CH-add-a-ghostpeakon}.

The calculations that follow make use of some identities from an article by Matsuno~\cite{matsuno:2007:CH-multisolitons-peakon-limit}, where it is proved that, in the limit as $\kappa \to 0$, the smooth $N$-soliton solutions $u(x,t)$ of the Camassa--Holm equation~\eqref{eq:CH-kappa} with $\kappa > 0$ (which are known in a parametric form where both the dependent variable~$u(t,\eta)$ and the independent variable~$x(t,\eta)$ are expressed in terms of an additional parameter~$\eta$) reduce to the $N$-peakon solutions of the Camassa--Holm equation~\eqref{eq:CH} with $\kappa = 0$, given by the Beals--Sattinger--Szmigielski formulas.

The identities in question are proved using the Desnanot--Jacobi determinant identity, also known as the Lewis Carroll identity, using the fact that the expression~$\Delta_{k}^a$, that we defined by the formula~\eqref{eq:Delta-evaluated}, is originally a $k \times k$ Hankel determinant,
 \begin{gather} \label{eq:Delta-determinant}
 \Delta_k^a = \det ( A_{a+i+j})_{i,j=0}^{k-1} = \begin{vmatrix}
 A_{a} & A_{a+1} & \dots & A_{a+k-1} \\
 A_{a+1} & A_{a+2} & \dots & A_{a+k} \\
 \vdots & & & \\
 A_{a+k-1} & A_{a+k} & \dots & A_{a+2k-2} \\
 \end{vmatrix}
 ,\qquad
 A_m = \sum_{i=1}^N \lambda_i^m b_i .
 \end{gather}
 Matsuno uses the notation $D_{k}^{(a)}$ for a corresponding Hankel determinant, with a somewhat different normalization for $\lambda_i$ and~$b_i$, which causes a discrepancy by some power of~$2$, but all his identities among these determinants are derived solely from the above expression, and therefore we obtain correct formulas simply by substituting our $\Delta_k^a$ for his~$D_k^{(a)}$, except that $\frac{{\rm d}}{{\rm d}t} D_k^{(a)}$ should be replaced by $2 \frac{{\rm d}}{{\rm d}t} \Delta_k^a$. Thus, for example, the Desnanot--Jacobi identity applied to $\Delta_{k+1}^0$ immediately gives
 \begin{gather} \label{eq:Matsuno-4.5}
 \Delta_{k+1}^0 \Delta_{k-1}^2 = \Delta_{k}^0 \Delta_{k}^2 - \bigl( \Delta_k^1 \bigr)^2 ,
 \end{gather}
 which is a special case of equation~(4.5) in~\cite{matsuno:2007:CH-multisolitons-peakon-limit}.

Equation~(4.27) in~\cite{matsuno:2007:CH-multisolitons-peakon-limit} (with adjustments for the differing conventions) gives the expression
 \begin{gather} \label{eq:Matsuno-u-between-peakons}
 u(x,t) = a(t) {\rm e}^x + b(t) {\rm e}^{-x} ,\qquad a = \frac{\Delta_{k-1}^3}{\Delta_{k}^1} ,\qquad b = \frac{\Delta_{k+1}^{-1}}{\Delta_{k}^1} ,
 \end{gather}
for the multipeakon solution in the $k$th interval from the right (see~\eqref{eq:kth-interval-from-right} in Corollary~\ref{cor:CH-add-a-ghostpeakon}). Thus the ODE $\dot \xi = u(\xi)$ for the characteristics in that interval is equivalent to the Riccati equation
 \begin{gather} \label{eq:Riccati}
 \dot y = a y^2 + b
 \end{gather}
for the function $y(t) = {\rm e}^{\xi(t)}$, and since we know the two solutions
 \begin{gather*}
 y_1 = {\rm e}^{x_{N-k}} = \frac{p}{q} ,\qquad y_2 = {\rm e}^{x_{N+1-k}} = \frac{r}{s},
 \end{gather*}
 where
 \begin{gather*}
 p = \Delta_{k+1}^0 ,\qquad q = \Delta_{k}^2 ,\qquad r = \Delta_{k}^0 ,\qquad s = \Delta_{k-1}^2 ,
 \end{gather*}
 experience with the form of the solutions of Riccati equations (cf.\ Remark~\ref{rem:Matsuno-continued}) may lead one to try the ansatz
 \begin{gather*}
 y = \frac{p + \theta r}{q + \theta s}
 \end{gather*}
 with $\theta$ constant. It is not obvious that this is going to work, but in fact it does, as can be verified by substitution into the ODE:
\begin{gather*}
 \frac{(\dot p + \theta \dot r) (q + \theta s) - (p + \theta r) (\dot q + \theta \dot s)}{(q + \theta s)^2} = \dot y = a y^2 + b = \frac{a (p + \theta r)^2 + b (q + \theta s)^2}{(q + \theta s)^2} .
 \end{gather*}
Here the coefficients of $\theta^0$ and $\theta^2$ in the numerators agree since $y_1$ and $y_2$ are solutions, and the coefficients of $\theta^1$ agree provided that
 \begin{gather*}
 \dot p s + \dot r q - p \dot s - r \dot q = 2 (a p r + b q s) ,
 \end{gather*}
which is true. Indeed, equation~\eqref{eq:Matsuno-4.5} says that $qr-ps = c^2$ where $c = \Delta_k^1$, adding Matsuno's (4.25c) and (4.25d) gives $\dot p s - p \dot s = \dot r q - r \dot q =: Q$, and his (4.25a) and (4.25b) read $a c^2 = \dot q s - q \dot s$ and $b c^2 = \dot p r - p \dot r$, so the right-hand side times $c^2$ equals $2 (a p r + b q s) c^2 = 2 a c^2 \cdot p r + 2 b c^2 \cdot q s = 2 (\dot q s - q \dot s) p r + 2 (\dot p r - p \dot r) q s = 2 (\dot p s - p \dot s) q r - 2 (\dot r q - r \dot q) p s = 2 Q q r - 2 Q p s = (Q + Q) c^2$, which is the left-hand side times~$c^2$.
\end{Remark}

\begin{Remark} \label{rem:Matsuno-continued} We would like to add yet another way of solving the Riccati equation~\eqref{eq:Riccati} in Remark~\ref{rem:Matsuno}. We will need the expression for $a$ from~\eqref{eq:Matsuno-u-between-peakons}, but not the one for~$b$. Following the standard method for Riccati equations, we make the substitution $y = -\dot w / (wa)$ to obtain the linear second-order ODE $\ddot w - (\dot a / a) \dot w + ab w = 0$. We claim that under this substitution, the known solutions $y_1 = p/q$ and $y_2 = r/s$ correspond to $w_1 = q/c$ and $w_2 = s/c$, respectively. In other words, the claim is that $\dot c q - c \dot q = p a c$ and $\dot c s - c \dot s = r a c$, or, written out,
 \begin{gather} \label{eq:determinant-identities}
 \dot \Delta_k^1 \Delta_k^2 - \Delta_k^1 \dot \Delta_k^2 = \Delta_{k+1}^0 \Delta_{k-1}^3 ,\qquad \dot \Delta_k^1 \Delta_{k-1}^2 - \Delta_k^1 \dot \Delta_{k-1}^2 = \Delta_{k}^0 \Delta_{k-1}^3 .
 \end{gather}
The first of the formulas~\eqref{eq:determinant-identities} follows directly from applying the Desnanot--Jacobi identity to the determinant $\Delta_{k+1}^0$ with its first column moved to the far right. (Note that since $\dot A_a = A_{a-1}$ due to $\dot b_k = b_k / \lambda_k$, the derivative $\dot \Delta_k^a$ is given by the same determinant as~$\Delta_k^a$ except that all indices in the first column are reduced by one.) The second one is more difficult, since the matrix sizes do not match the Desnanot--Jacobi identity directly, but it is a special case of equation~(A.25) in Lundmark and Szmigielski~\cite{lundmark-szmigielski:2016:GX-inverse-problem}, with $n=k-1$,
 $z_i=A_i$ for $1 \le i \le k-1$, $z_k=A_0$, $w_i=A_{i+1}$ for $1 \le i \le k-1$, $w_k=A_1$, $X_{ij} = A_{i+j+1}$ for $1 \le i \le k-2$ and $1 \le j \le k-1$, and $X_{k-1,j} = A_{j+1}$ for $1 \le j \le k-1$.

Since the two functions $w_1$ and~$w_2$ satisfy the linear ODE above, so does $w = w_1 + \theta w_2$ (with~$\theta$ constant), and therefore a one-parameter family of solutions to our Riccati equation is
\begin{gather*}
y = - \frac{\dot w}{wa} = - \frac{\dot w_1 + \theta \dot w_2}{(w_1 + \theta w_2) a} = \frac{y_1 w_1 + \theta y_2 w_2}{w_1 + \theta w_2} = \frac{\frac{p}{q} \frac{q}{c} + \theta \frac{r}{s} \frac{s}{c}}{\frac{q}{c} + \theta \frac{s}{c}} = \frac{p + \theta r}{q + \theta s} ,
 \end{gather*}
 as desired.
\end{Remark}

We now turn to examples illustrating in some special cases what the multipeakon solutions and their characteristic curves look like. For the three-dimensional plots of $u(x,t)$, we have made use of the technique
described in Remark~\ref{rem:plotting}.

\begin{figure}[t] \centering
 \begin{tikzpicture}
 \node[anchor=south west,inner sep=0] at (0,0)
 {\includegraphics[width=127mm]{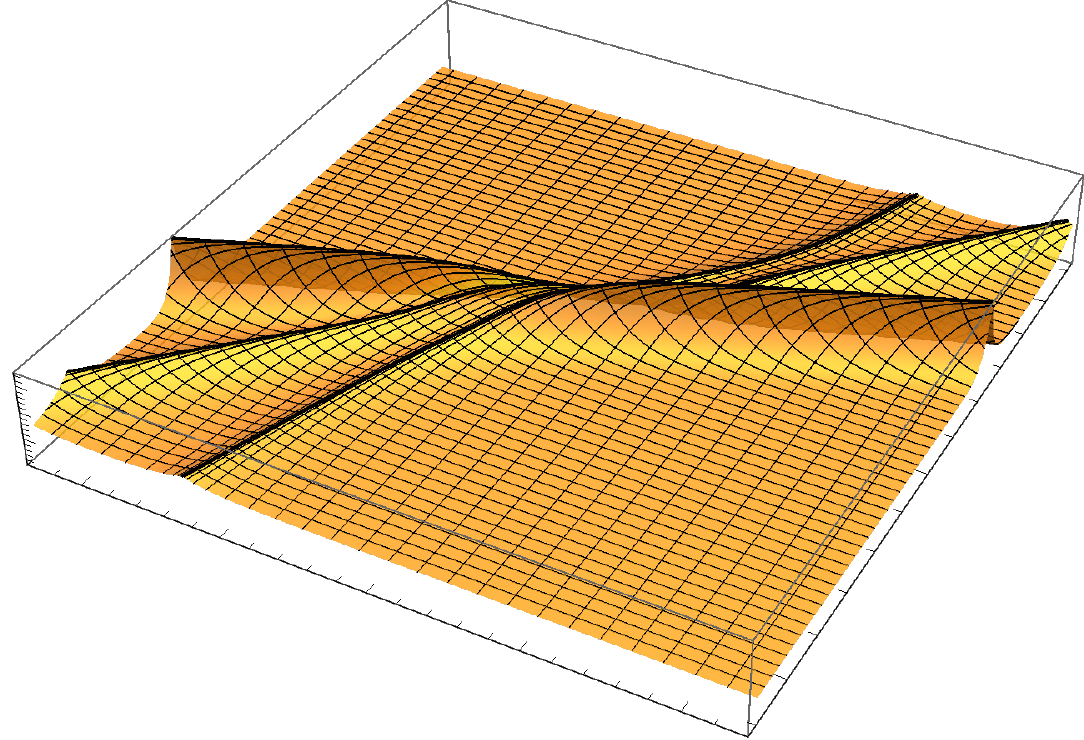}};

 \draw (4.10,1.48) node {$x$};
 \draw (2.34,2.19) node {\scriptsize $-5$};
 \draw (5.98,0.86) node {\scriptsize $5$};

 \draw (11.11,3.06) node {$t$};
 \draw (9.5,0.6) node {\scriptsize $-10$};
 \draw (12.5,5.1) node {\scriptsize $10$};

 \draw (0,3.55) node {$u$};
 \end{tikzpicture}

 \caption{Graph of a pure three-peakon solution $u(x,t) = \sum\limits_{k=1}^3 m_k(t) {\rm e}^{-\abs{x-x_k(t)}}$ of the Camassa--Holm equation, computed from exact formulas as described in Example~\ref{ex:CH-3p-ghost}. The parameter values are given by~\eqref{eq:CH-3p-ghost-parameters}. The graph is plotted as described in Remark~\ref{rem:plotting}, over a mesh consisting of lines $t={\rm const}$ together with the characteristic curves~\eqref{eq:CH-3-ghost} obtained from Corollary~\ref{cor:CH-add-a-ghostpeakon}. The dimensions of the box are $\abs{x} \le 12$, $\abs{t} \le 12$ and $-1 \le u \le 5/2$.} \label{fig:CH-3p-wave}
\end{figure}

\begin{figure}[t] \centering

 \includegraphics{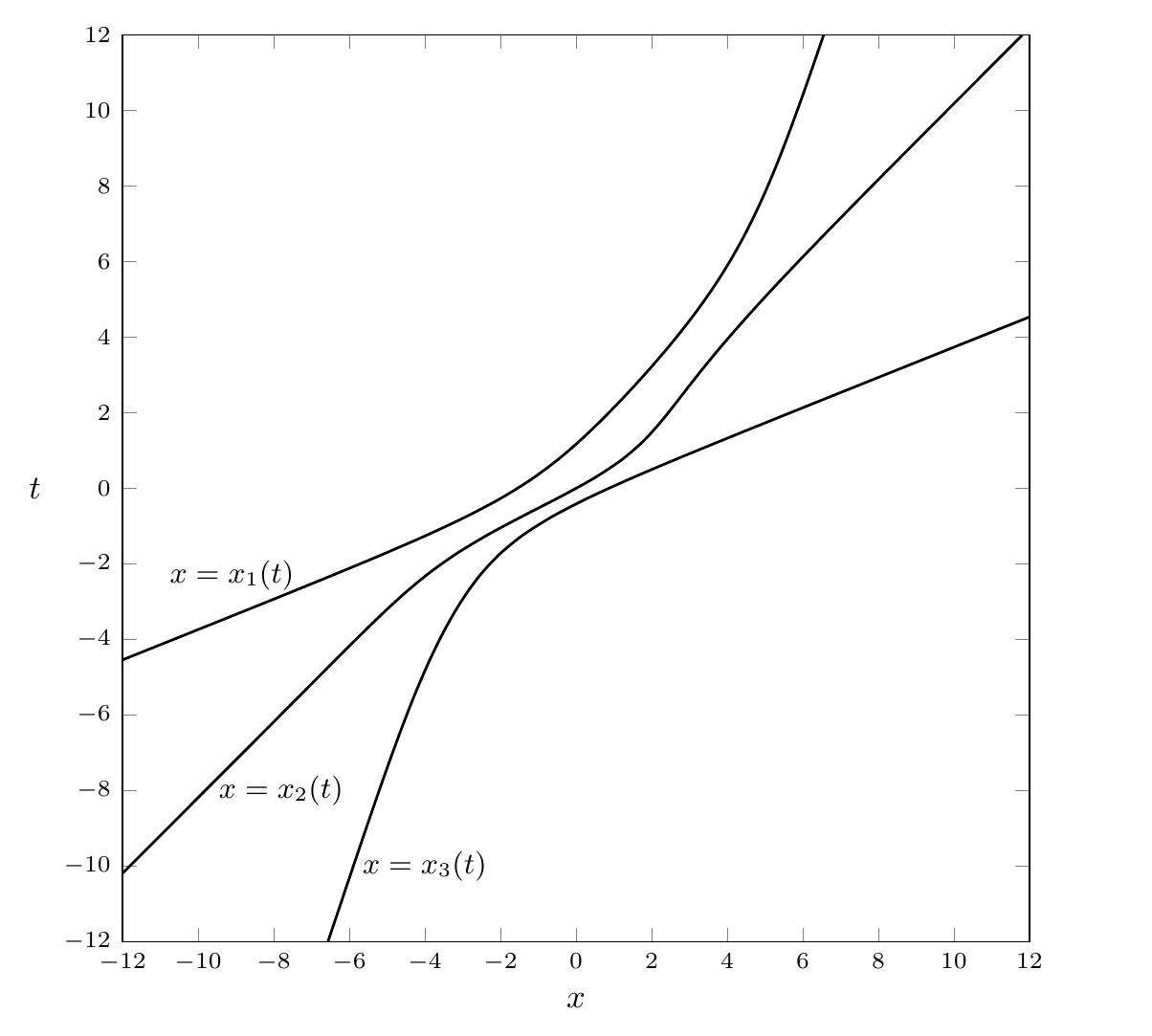}

\caption{Spacetime plot of the peakon trajectories $x=x_k(t)$ for the Camassa--Holm pure three-peakon solution shown in Fig.~\ref{fig:CH-3p-wave}. Since this is a pure peakon solution, $x_1(t) < x_2(t) < x_3(t)$ holds for all $t \in \R$.} \label{fig:CH-3p}
\end{figure}

\begin{figure}[t] \centering

\includegraphics{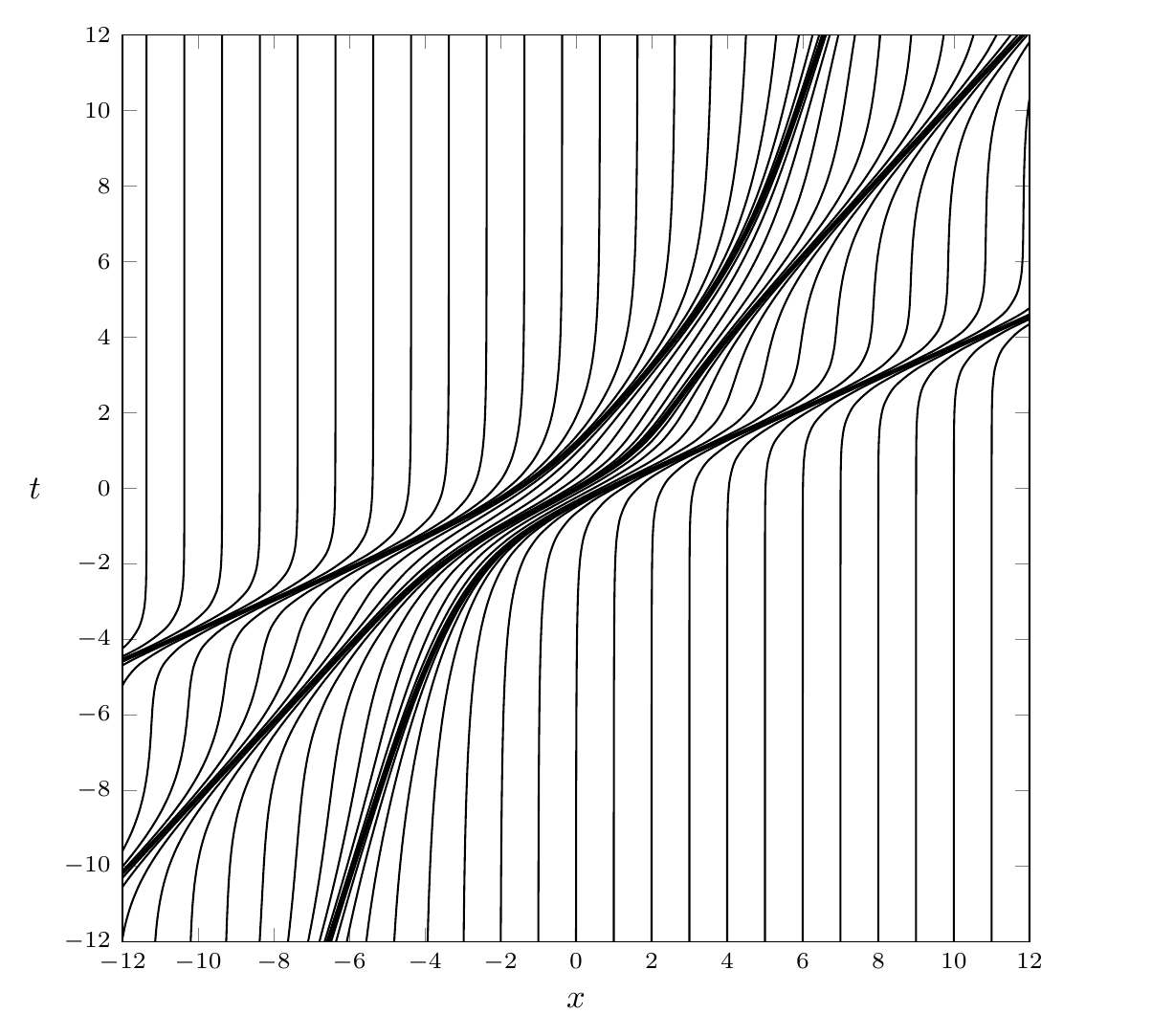}

\caption{A selection of characteristic curves $x = \xi(t)$ for the Camassa--Holm pure three-peakon solution shown in Figs.~\ref{fig:CH-3p-wave} and~\ref{fig:CH-3p}. These curves are given by the formulas~\eqref{eq:CH-3-ghost} in Example~\ref{ex:CH-3p-ghost}.} \label{fig:CH-3p-ghost}
\end{figure}

\begin{Example}[CH three-peakon characteristics]\label{ex:CH-3p-ghost} Fig.~\ref{fig:CH-3p-wave} shows the graph of a pure three-peakon solution of the Camassa--Holm equation,
 \begin{gather*}
 u(x,t) = \underbrace{m_1(t)}_{>0} {\rm e}^{-\abs{x-x_1(t)}} + \underbrace{m_2(t)}_{>0} {\rm e}^{-\abs{x-x_2(t)}} + \underbrace{m_3(t)}_{>0} {\rm e}^{-\abs{x-x_3(t)}} ,
 \end{gather*}
where the positions~$x_k(t)$ and amplitudes~$m_k(t)$ are given by the formulas~\eqref{eq:CH-3-explicit} in Example~\ref{ex:CH-threepeakon-solution}, with the parameter values
\begin{gather} \label{eq:CH-3p-ghost-parameters}
 \lambda_1 = \frac{2}{5} ,\qquad \lambda_2 = 1 ,\qquad \lambda_3 = 3 ,\qquad b_1(0) = \frac{25}{13} ,\qquad b_2(0) = \frac{1}{e} ,\qquad b_3(0) = \frac{1}{13} ,
 \end{gather}
 obtained by taking $c_1 = 5/2$, $c_2=1$, $c_3=1/3$ and $K=-1$ in~\eqref{eq:CH-threepeakon-symmetric}.

Fig.~\ref{fig:CH-3p} shows a plot of the peakon trajectories $x = x_1(t)$, $x = x_2(t)$ and $x = x_3(t)$ in the $(x,t)$ plane. This picture is what the ``mountain ridges'' in Fig.~\ref{fig:CH-3p-wave} would look like if viewed straight from above. The solution formulas express each~$x_k(t)$ as the logarithm of a rational function in a number of exponentials ${\rm e}^{\alpha t}$ with different growth rates~$\alpha$. As $t \to \pm\infty$, when a~single exponential term dominates in each $\Delta_k^a$, the peakons asymptotically travel in straight lines with constant velocities $c_1 = 1/\lambda_1 = 5/2$, $c_2 = 1/\lambda_2 = 1$ and $c_3 = 1/\lambda_3 = 1/3$, and those numbers are also the limiting values of the amplitudes~$m_k(t)$ as $t \to \pm\infty$.

Fig.~\ref{fig:CH-3p-ghost} shows a selection of characteristic curves given by the ghostpeakon formulas from our Theorem~\ref{thm:CH-ghost} or Corollary~\ref{cor:CH-add-a-ghostpeakon}; these curves were also used as mesh lines in Fig.~\ref{fig:CH-3p-wave}. The characteristics $x = \xi(t)$ in the leftmost region $x < x_1(t)$ are given by the formula
 \begin{subequations} \label{eq:CH-3-ghost}
 \begin{gather} \label{eq:CH-3-ghost0}
 \xi(t) = \ln\frac{\overbrace{\Delta_{4}^0}^{=0} + \theta \Delta_{3}^0}{\Delta_{3}^2 + \theta \Delta_{2}^2} = \ln\frac{\theta \Delta_{3}^0}{\Delta_{3}^2 + \theta \Delta_{2}^2} ,
 \end{gather}
 those in the region $x_1(t) < x < x_2(t)$ are
 \begin{gather} \label{eq:CH-3-ghost1}
 \xi(t) = \ln\frac{\Delta_{3}^0 + \theta \Delta_{2}^0}{\Delta_{2}^2 + \theta \Delta_{1}^2} ,
 \end{gather}
 for $x_2(t) < x < x_3(t)$ we have
 \begin{gather} \label{eq:CH-3-ghost2}
 \xi(t) = \ln\frac{\Delta_{2}^0 + \theta \Delta_{1}^0}{\Delta_{1}^2 + \theta \Delta_{0}^2} ,
 \end{gather}
 and in the rightmost region $x_3(t) < x$ the formula is
 \begin{gather} \label{eq:CH-3-ghost3}
 \xi(t) = \ln\frac{\Delta_{1}^0 + \theta \Delta_{0}^0}{\Delta_{0}^2 + \theta \underbrace{\Delta_{-1}^2}_{=0}} = \ln\big(\Delta_1^0 + \theta\big) ,
 \end{gather}
 \end{subequations}
 where in each case $\theta$ is a positive parameter. As $\theta$ runs through the values $0 < \theta < \infty$, the corresponding characteristic curves sweep out their respective regions, and in the limit as $\theta \to 0^+$ or $\theta \to \infty$, the formula for $\xi(t)$ reduces to the appropriate $x_k(t)$ (or to $-\infty$ or~$+\infty$, in the exterior regions $x<x_1$ and $x>x_3$).

 In Fig.~\ref{fig:CH-3p-ghost} (and Fig.~\ref{fig:CH-3p-wave}) the values of~$\theta$ were taken in geometric progressions with ratio~$e$, in order for the curves to be approximately evenly spaced.
\end{Example}

\begin{figure}[t] \centering

 \begin{tikzpicture}
 \node[anchor=south west,inner sep=0] at (0,0)
 {\includegraphics[width=127mm]{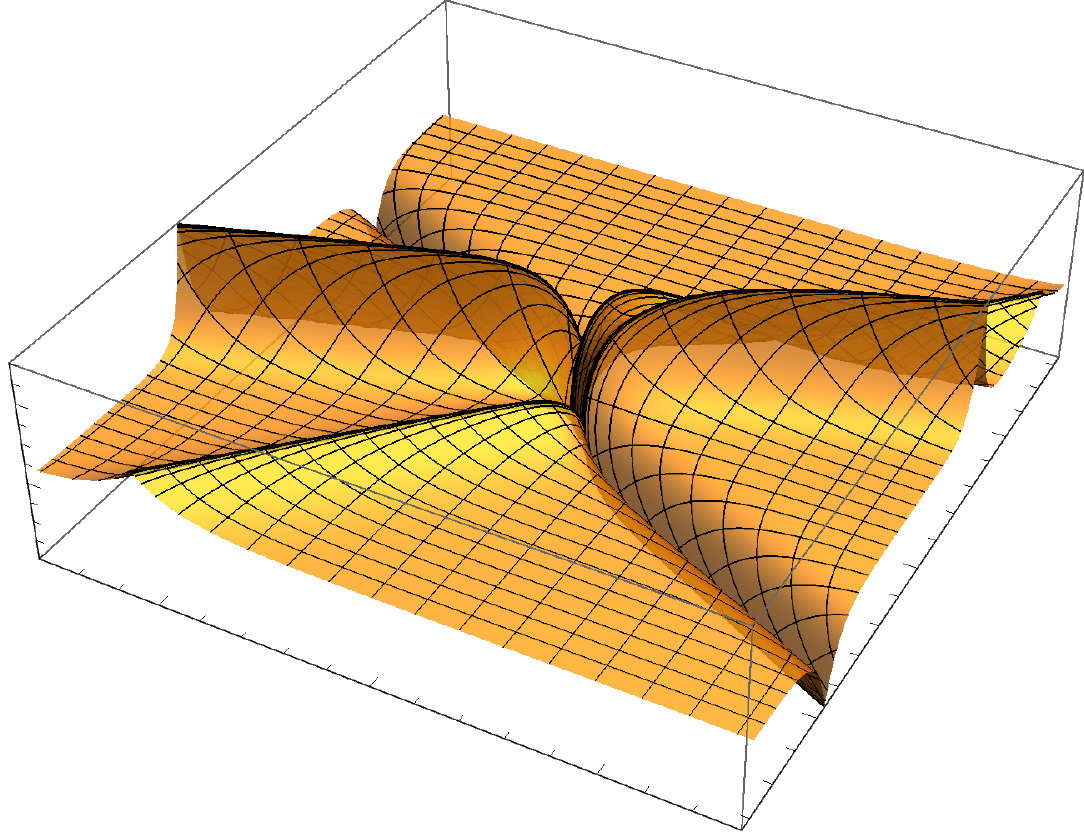}};

 \draw (4.15,1.5) node {$x$};
 \draw (1.67,2.47) node {\scriptsize $-5$};
 \draw (6.9,0.5) node {\scriptsize $5$};

 \draw (11,3) node {$t$};
 \draw (9.75,1.2) node {\scriptsize $-5$};
 \draw (12.0,4.6) node {\scriptsize $5$};

 \draw (0,4.1) node {$u$};
 \end{tikzpicture}

 \caption{Graph of a \emph{conservative} Camassa--Holm solution $u(x,t)$ with two peakons (positive amplitude)
 and one antipeakon (negative amplitude),
 computed from exact formulas
 as described in Example~\ref{ex:CH-2p-1ap-ghost}.
 The dimensions of the box are $\abs{x} \le 8$, $\abs{t} \le 8$ and $-2 \le u \le 3$.} \label{fig:CH-2p-1ap-wave}
\end{figure}

\begin{figure}[t] \centering

 \includegraphics{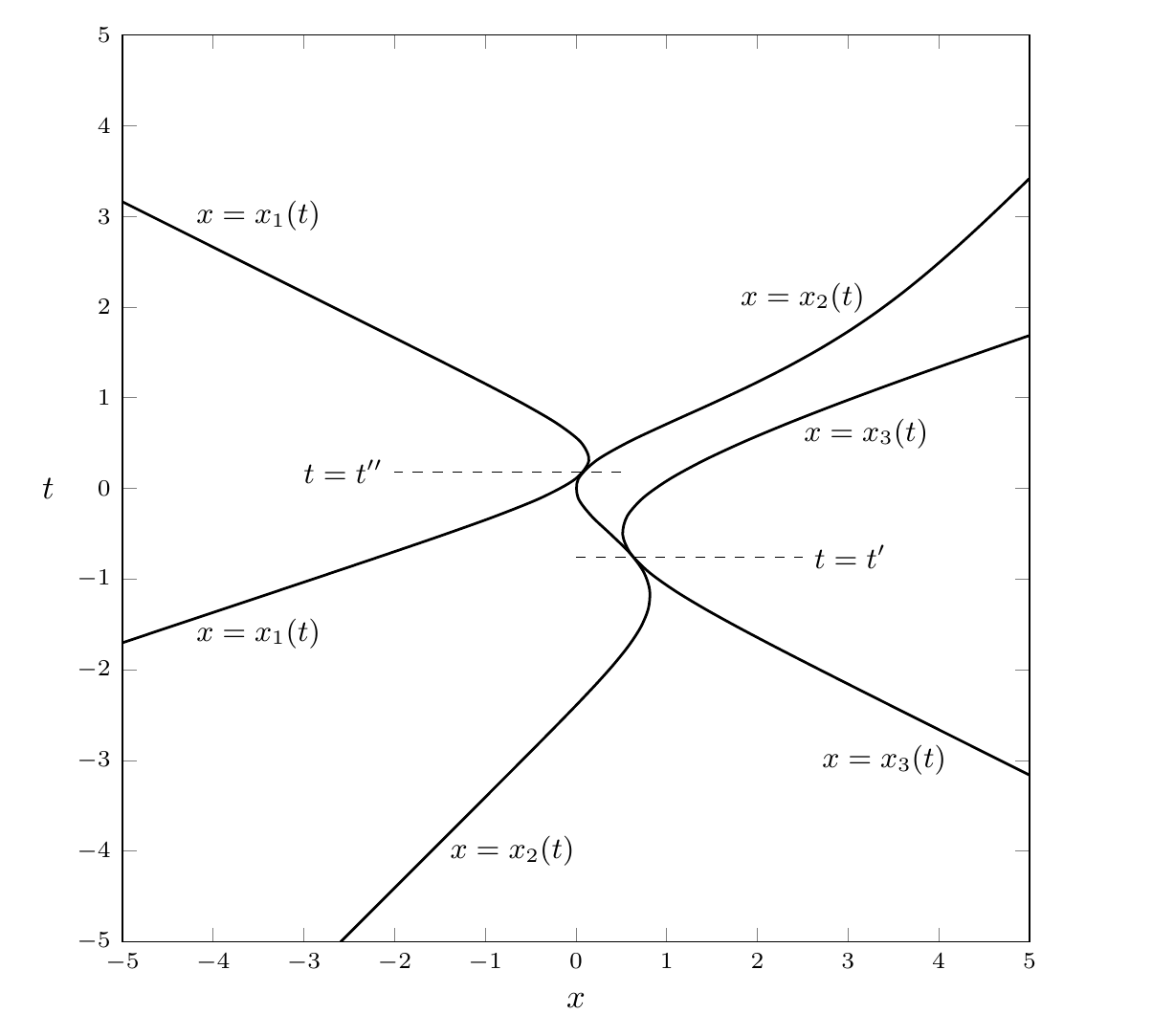}

 \caption{Spacetime plot of the peakon trajectories $x=x_k(t)$ for the $N=3$ conservative peakon--antipeakon Camassa--Holm solution shown in Fig.~\ref{fig:CH-2p-1ap-wave}. The dashed lines indicate the times $t=t'$ and $t=t''$ of the two peakon--antipeakon collisions that occur; for all other~$t$ the strict ordering $x_1(t) < x_2(t) < x_3(t)$ holds. The solution starts out with $m_1$ and~$m_2$ positive and $m_3$~negative, so that the antipeakon is on the far right for $t<t'$. At the first collision, where $x_2(t')=x_3(t')$, the corresponding amplitude factors~$m_2$ and~$m_3$ blow up to $\pm\infty$ and change their signs so that it is the middle peakon that plays the role of the antipeakon for $t' < t < t''$ ($m_1$~positive, $m_2$~negative, $m_3$~positive). Similarly, $m_1$ and~$m_2$ switch signs at the second collision where $x_1(t'')=x_2(t'')$, and the solution ends up with the antipeakon on the far left for $t>t''$ ($m_1$~negative, $m_2$ and~$m_3$ positive).} \label{fig:CH-2p-1ap}
\end{figure}

\begin{figure}[t] \centering

\includegraphics{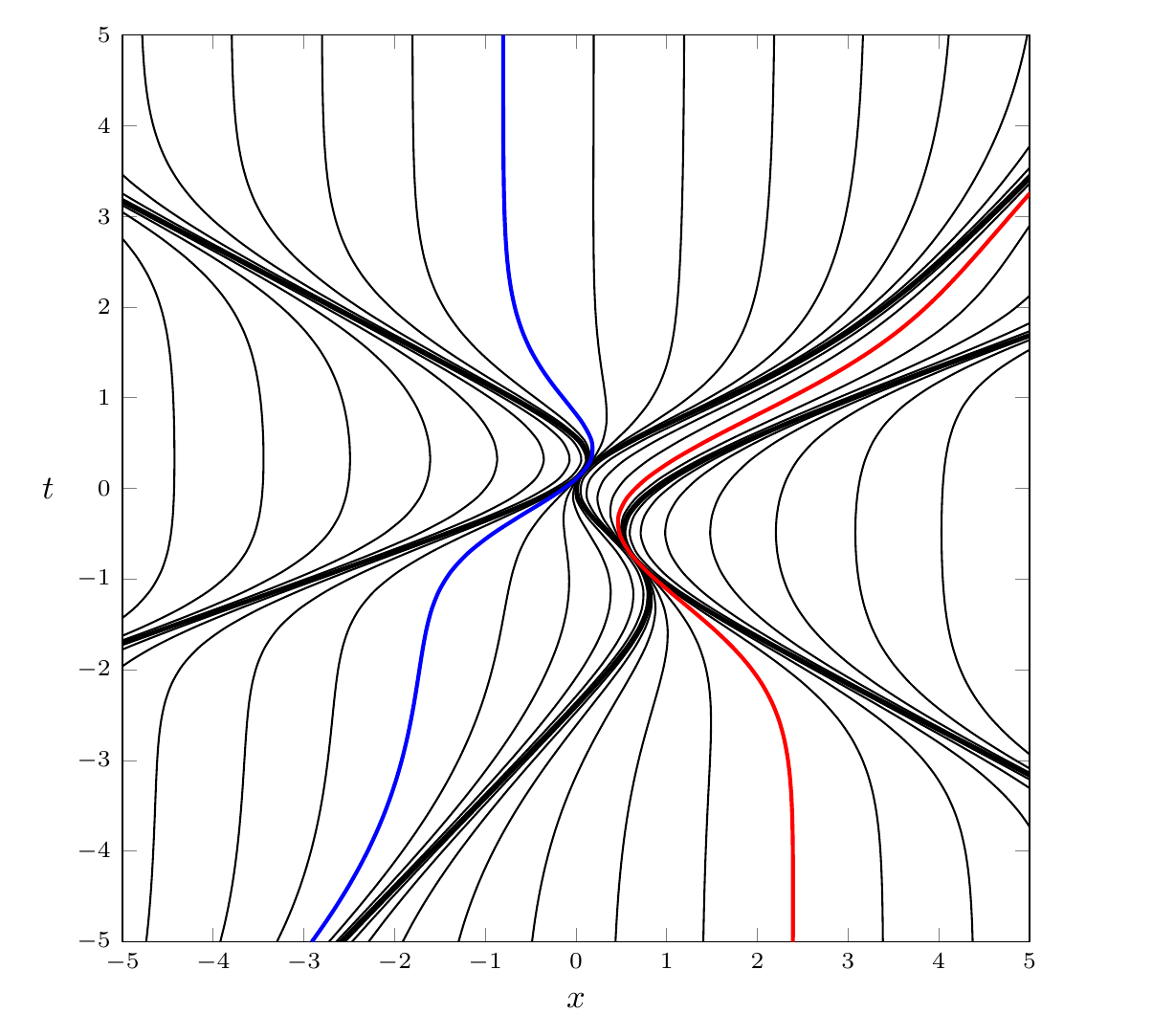}

\caption{A selection of characteristic curves for the $N=3$ conservative peakon--antipeakon Camassa--Holm solution shown in Figs.~\ref{fig:CH-2p-1ap-wave} and~\ref{fig:CH-2p-1ap}. All curves are computed from exact formulas as described in Example~\ref{ex:CH-2p-1ap-ghost}. The blue curve is a typical characteristic curve in the region $x_1 < x < x_2$, and the red curve is a typical characteristic curve in the region $x_2 < x < x_3$.}\label{fig:CH-2p-1ap-ghost}
\end{figure}

\begin{Example}[conservative CH peakon solution] \label{ex:CH-2p-1ap-ghost} From the same formulas as in Example~\ref{ex:CH-3p-ghost}, but using the parameter values
\begin{gather} \label{eq:CH-2p-1ap-ghost-parameters}
 \lambda_1 = \frac{1}{3} ,\qquad \lambda_2 = 1 ,\qquad \lambda_3 = - \frac{1}{2} ,\qquad b_1(0) = \frac{9}{10} ,\qquad b_2(0) = \frac{{\rm e}^2}{6} ,\qquad b_3(0) = \frac{4}{15}
\end{gather}
instead of~\eqref{eq:CH-3p-ghost-parameters}, we get a mixed peakon--antipeakon solution instead of a pure peakon solution. The asymptotic velocities (and amplitudes) as $t \to \pm\infty$ are
 \begin{gather*}
 c_1 = \frac{1}{\lambda_1} = 3 ,\qquad c_2 = \frac{1}{\lambda_2} = 1 ,\qquad c_3 = \frac{1}{\lambda_3} = -2 .
 \end{gather*}
Since $c_1 > c_2 > 0 > c_3$, there are two peakons and one antipeakon. (The values for $b_k(0)$ in~\eqref{eq:CH-2p-1ap-ghost-parameters} were obtained by taking these numbers $c_k$ together with $K=2$ in the formulas~\eqref{eq:CH-threepeakon-symmetric}.)

The wave profile $u(x,t)$ is illustrated in Fig.~\ref{fig:CH-2p-1ap-wave}, while the peakon trajectories $x = x_k(t)$ are plotted in Fig.~\ref{fig:CH-2p-1ap}. A selection of characteristic curves $x = \xi(t)$ are plotted in Fig.~\ref{fig:CH-2p-1ap-ghost}; they are also given by the same formulas as in Example~\ref{ex:CH-3p-ghost} but with the new parameter values~\eqref{eq:CH-2p-1ap-ghost-parameters}.

 In contrast to the pure peakon case, the solution formulas for $m_k(t)$ are not globally defined; there is an instant
 \begin{gather*}
 t = t' \approx -0.758404
 \end{gather*}
 (easily determined numerically) when the quantity
 \begin{gather*}
 \Delta_1^1(t) = \lambda_1 b_1(t) + \lambda_2 b_2(t) + \lambda_3 b_3(t) = \frac{3 {\rm e}^{3t}}{10} + \frac{{\rm e}^{2+t}}{6} -\frac{2 {\rm e}^{-2t}}{15}
 \end{gather*}
 becomes zero, which causes $m_2$ and~$m_3$ to blow up, since $\Delta_1^1$ occurs in the denominator of the formulas for $m_2$ and~$m_3$ in~\eqref{eq:CH-m-3-explicit}. Similarly, there is another instant
 \begin{gather*}
 t = t'' \approx 0.182763,
 \end{gather*}
 when the expression
\begin{gather*}
 \Delta_2^1(t) = \lambda_1 \lambda_2 (\lambda_1-\lambda_2)^2 b_1(t) b_2(t) + \lambda_1 \lambda_3 (\lambda_1-\lambda_3)^2 b_1(t) b_3(t) + \lambda_2 \lambda_3 (\lambda_2-\lambda_3)^2 b_2(t) b_3(t) \\
 \hphantom{\Delta_2^1(t)}{} = \frac{{\rm e}^{2+4t}}{45} - \frac{{\rm e}^t}{36} - \frac{{\rm e}^{2-t}}{20}
\end{gather*}
occurring in the denominator of $m_1$ and $m_2$ becomes zero. These times $t'$ and~$t''$, when two amplitudes $m_k$ blow up (one to $+\infty$ and the other to $-\infty$), are also the instants of \emph{collisions} between the corresponding peakon and antipeakon:
 \begin{gather*}
 x_1(t') < x_2(t') = x_3(t') ,\qquad x_1(t'') = x_2(t'') < x_3(t'') .
 \end{gather*}
 For all other $t \in \R$, the strict ordering $x_1(t) < x_2(t) < x_3(t)$ holds.

From the point of the peakon ODEs, the solution $\{ x_k(t), m_k(t) \}_{k=1}^3$ given by the explicit formulas for $t \in \R \setminus \{t',t''\}$ is obtained by analytic continuation in the complex $t$ plane around simple poles at $t'$ and~$t''$.

More importantly, from the point of view of the PDE, there is cancellation which causes the function
 \begin{gather*}
 u(x,t) = \sum_{k=1}^3 m_k(t) {\rm e}^{-\abs{x-x_k(t)}} ,\qquad t \in \R \setminus \{ t', t'' \} ,
 \end{gather*}
 with $\{ x_k(t), m_k(t) \}_{k=1}^3$ given by the explicit formulas, to have finite limits as $t \to t'$ and as $t \to t''$. More specifically, $\Delta_1^1$ cancels in the sum
 \begin{gather*}
 m_2 + m_3 = \frac{\Delta_2^0 \Delta_{1}^2}{\Delta_2^1 \Delta_{1}^1} + \frac{\Delta_1^0 \Delta_{0}^2}{\Delta_1^1} \\
 \hphantom{m_2 + m_3}{}
 = \frac{(\lambda_1-\lambda_2)^2 (\lambda_1+\lambda_2) b_1 b_2 + (\lambda_1-\lambda_3)^2 (\lambda_1+\lambda_3) b_1 b_3 + (\lambda_2-\lambda_3)^2 (\lambda_2+\lambda_3) b_2 b_3}{\Delta_2^1} ,
 \end{gather*}
so that this expression remains finite at $t=t'$, and $\Delta_2^1$ cancels in the sum
 \begin{gather*}
 m_1 + m_2 = \frac{\Delta_3^0 \Delta_{2}^2}{\Delta_3^1 \Delta_{2}^1} + \frac{\Delta_2^0 \Delta_{1}^2}{\Delta_2^1 \Delta_{1}^1} =
 \frac{\lambda_1^2 (\lambda_2+\lambda_3) b_1 + \lambda_2^2 (\lambda_1+\lambda_3) b_2 + \lambda_3^2 (\lambda_1+\lambda_2) b_3 }{\lambda_1 \lambda_2 \lambda_3 \Delta_1^1},
\end{gather*}
 so that this expression is finite at $t=t''$. Extending $u$ continuously by defining
 \begin{gather} \label{eq:u-at-first-collision}
 u(x,t') = \lim_{t \to t'} u(x,t) = m_1(t') {\rm e}^{-\abs{x-x_1(t')}} + (m_2+m_3)(t') {\rm e}^{-\abs{x-x_2(t')}}
 \end{gather}
 and
 \begin{gather*}
 u(x,t'') = \lim_{t \to t''} u(x,t) = (m_1+m_2)(t'') {\rm e}^{-\abs{x-x_1(t'')}} + m_3(t'') {\rm e}^{-\abs{x-x_3(t'')}} ,
 \end{gather*}
 we get a global weak solution $u(x,t)$ of the Camassa--Holm equation, which has been continued past the singularities at $t'$ and~$t''$, where the derivative $u_x$ blows up, although $u$ itself remains bounded. This is the so-called \emph{conservative} multipeakon solution; cf. Remark~\ref{rem:DP-Novikov-references}.
\end{Example}

\begin{figure}[t] \centering

 \begin{tikzpicture}
 \node[anchor=south west,inner sep=0] at (0,0)
 {\includegraphics[width=127mm]{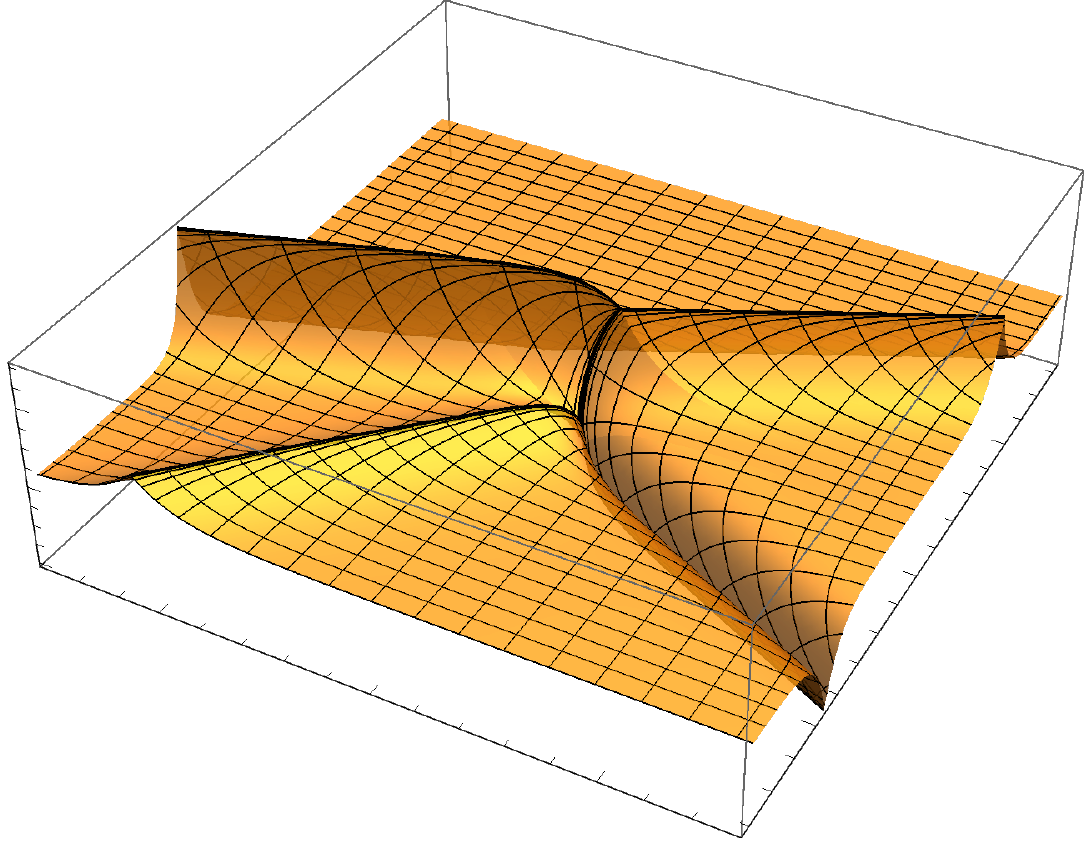}};

 \draw (4.15,1.5) node {$x$};
 \draw (1.67,2.47) node {\scriptsize $-5$};
 \draw (6.9,0.5) node {\scriptsize $5$};

 \draw (11,3) node {$t$};
 \draw (9.75,1.2) node {\scriptsize $-5$};
 \draw (12.0,4.6) node {\scriptsize $5$};

 \draw (0,4.1) node {$u$};
 \end{tikzpicture}

\caption{Graph of a \emph{dissipative} Camassa--Holm solution $u(x,t)$. The solution starts out identical to the conservative solution in Fig.~\ref{fig:CH-2p-1ap-wave}, with two peakons and one antipeakon, but here they merge at collisions so that only one peakon remains in the end. See Example~\ref{ex:CH-2p-1ap-ghost} for details. The dimensions of the box are $\abs{x} \le 8$, $\abs{t} \le 8$ and $-2 \le u \le 3$.} \label{fig:CH-dissipative-2p-1ap-wave}
\end{figure}

\begin{figure}[t] \centering

 \includegraphics{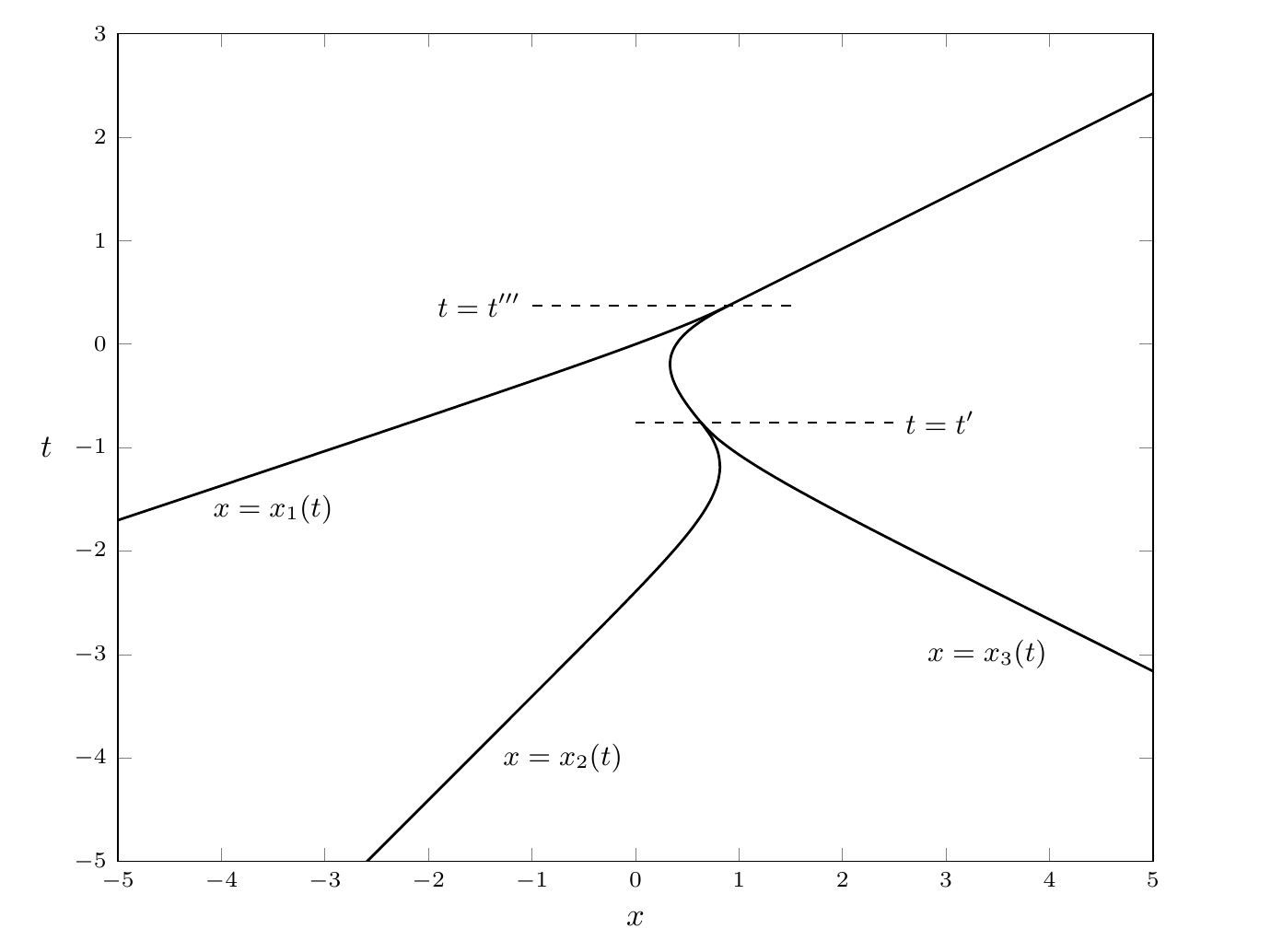}

\caption{Spacetime plot of the peakon trajectories $x=x_k(t)$ for the dissipative Camassa--Holm solution shown in Fig.~\ref{fig:CH-dissipative-2p-1ap-wave}. The solution starts out with two peakons on the left (at $x_1$ and~$x_2$) and one antipeakon on the right (at~$x_3$). At the first collision, when $x_2(t')=x_3(t')$, the colliding peakon and antipeakon merge into a single antipeakon, which then collides with the leftmost peakon at time $t=t'''$, after which the solution continues as a single peakon. For $t \le t'$, the solution is given by the three-peakon solution formulas, and for $t' < t \le t'''$ it is given by the two-peakon solution formulas, with new spectral data computed as described in Example~\ref{ex:CH-dissipative-2p-1ap}. And for $t''' < t$, the solution is just a~one-peakon solution~-- a~travelling wave where the wave profile~$u(x,t''')$ is translated with constant speed.}\label{fig:CH-dissipative-2p-1ap}
\end{figure}

\begin{figure}[t] \centering

\includegraphics{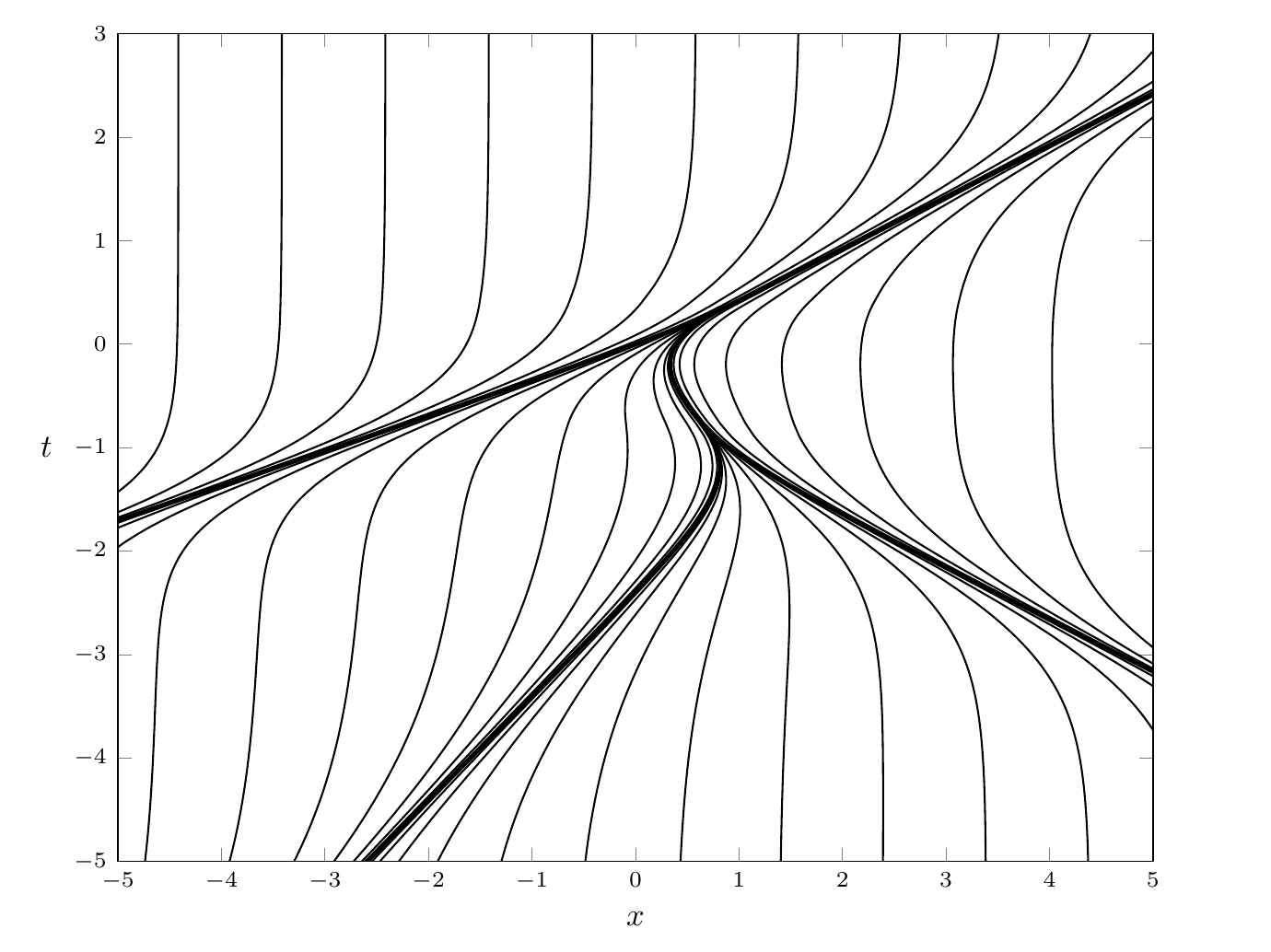}

\caption{A selection of characteristic curves (plotted from the exact formulas) for the $N=3$ dissipative peakon--antipeakon Camassa--Holm solution shown in Figs.~\ref{fig:CH-dissipative-2p-1ap-wave} and~\ref{fig:CH-dissipative-2p-1ap}. See Example~\ref{ex:CH-dissipative-2p-1ap}.} \label{fig:CH-dissipative-2p-1ap-ghost}
\end{figure}

\begin{Example}[dissipative CH peakon solution] \label{ex:CH-dissipative-2p-1ap} The \emph{dissipative} multipeakon solution is another global weak solution, where $u(x,t)$ is continued past singularities in a different way, as illustrated in Fig.~\ref{fig:CH-dissipative-2p-1ap-wave}; cf.\ Remark~\ref{rem:DP-Novikov-references}. In the scenario of the previous example, the conservative and the dissipative solutions agree for $t \le t'$, up until the time of the first collision. The wave profile $u(x,t')$ defined by \eqref{eq:u-at-first-collision} has the two-peakon form
\begin{gather*}
 u(x,t') = \tilde m_1(t') {\rm e}^{-\abs{x-\tilde x_1(t')}} + \tilde m_2(t') {\rm e}^{-\abs{x-\tilde x_2(t')}}
 \end{gather*}
 with
 \begin{alignat}{3}
& \tilde x_1(t') = x_1(t') \approx -2.1819596 ,\qquad && \tilde m_1(t') = m_1(t') \approx 3.0170075 ,& \nonumber\\
& \tilde x_2(t')= x_2(t') = x_3(t') \approx 0.6337836 ,\qquad  &&  \tilde m_2(t') = (m_2+m_3)(t') \approx -1.0170075 ,\!\!\!\!&\label{eq:pos-and-amp-at-first-collision}
 \end{alignat}
where the values are obtained from exact formulas, except that $t'$ has to be found by solving the equation $\Delta_1^1(t)=0$ numerically. In contrast to the conservative solution above, which again contains three peakons immediately after the collision, the dissipative solution continues as a~two-peakon solution, given by the usual explicit formulas~\eqref{eq:CH-twopeakon-solution}, but with new parameters~$\tilde\lambda_1$, $\tilde\lambda_2$, $\tilde b_1(t')$ and $\tilde b_2(t')$ determined from the values~\eqref{eq:pos-and-amp-at-first-collision} via the general relations
 \begin{gather*}
 \frac{1}{\lambda_1} + \frac{1}{\lambda_2} = m_1 + m_2 ,\qquad \frac{1}{\lambda_1 \lambda_2} = m_1 m_2 (1-{\rm e}^{x_1-x_2})
 \end{gather*}
 and
 \begin{gather*}
 b_1 + b_2 = {\rm e}^{x_2} ,\qquad \frac{b_1}{\lambda_1} + \frac{b_2}{\lambda_2} = m_1 {\rm e}^{x_1} + m_2 {\rm e}^{x_2}.
 \end{gather*}
(These formulas originate in the study of the forward spectral problem; we will not go into the details here. It is by solving these equations for $x_1$, $x_2$, $m_1$, $m_2$ that the Camassa--Holm two-peakon solution formulas~\eqref{eq:CH-twopeakon-solution} are obtained.) This gives
 \begin{alignat*}{3}
 & \tilde\lambda_1\approx 0.3365925 ,\qquad && \tilde b_1(t') = \tilde b_1(0) {\rm e}^{t'/\tilde \lambda_1} \approx 0.06432834 ,& \\
 & \tilde\lambda_2\approx -1.02991776 ,\qquad && \tilde b_2(t') = \tilde b_2(0) {\rm e}^{t'/\tilde \lambda_2} \approx 1.82039986 .&
 \end{alignat*}
 The remaining peakon and the antipeakon then collide when
 \begin{gather*}
 \tilde \Delta_1^1(t) = \tilde \lambda_1 \tilde b_1(0) {\rm e}^{t/\tilde \lambda_1} + \tilde \lambda_2 \tilde b_2(0) {\rm e}^{t/\tilde \lambda_2}
 \end{gather*}
 vanishes, namely at
 \begin{gather*}
 t = t''' \approx 0.373326 ,\qquad x_1(t''') = x_2(t''') \approx 0.9013432 ,
 \end{gather*}
 and merge into a single peakon of amplitude~$2$:
 \begin{gather*}
 u(x,t) = 2 {\rm e}^{-\abs{x-x_1(t''')-2t}} \qquad \text{for $t \ge t'''$} .
 \end{gather*}
(The amplitude here is exactly~$2$, which is due to the momentum $\int_{\R} u {\rm d}x = \sum m_i$ being conserved even for dissipative solutions. Initially, the momentum is $m_1+m_2+m_3=\tfrac{1}{\lambda_1}+\tfrac{1}{\lambda_2}+\tfrac{1}{\lambda_3} = 3+1+(-2)=2$, and so it stays equal to~$2$.) The peakon trajectories for the dissipative solution are plotted in Fig.~\ref{fig:CH-dissipative-2p-1ap}, while Fig.~\ref{fig:CH-dissipative-2p-1ap-ghost} shows a selection of characteristic curves.
\end{Example}

\section{Degasperis--Procesi ghostpeakons}\label{sec:DP-ghost}

We now leave the Camassa--Holm equation, and move on to the Degasperis--Procesi peakon ODEs~\eqref{eq:DP-peakon-ODEs-shorthand}. In addition to deriving ghostpeakons formulas analogous to those for the Camassa--Holm equation
in Section~\ref{sec:CH-ghost}, we also use direct integration to compute the characteristics for the one-shockpeakon solution formed at a peakon--antipeakon collision; see Example~\ref{ex:DP-p-ap-symm} for the symmetric case
and Example~\ref{ex:DP-p-ap-asymm} for the asymmetric case.

For simplicity, we will formulate Theorem~\ref{thm:DP-ghost} and Corollary~\ref{cor:DP-add-a-ghostpeakon} for pure peakon solutions, but to some extent they remain valid also for mixed peakon--antipeakon solutions;
see Remark~\ref{rem:DP-ghost-antipeakons}.

\begin{Theorem} \label{thm:DP-ghost} Fix some $p$ with $0 \le p \le N$. The solution of the Degasperis--Procesi $(N+1)$-peakon ODEs \eqref{eq:DP-peakon-ODEs-shorthand} with $x_1 < \dotsb < x_{N+1}$ and all amplitudes $m_k(t)$ positive except for $m_{N+1-p}(t) = 0$ is as follows: the position of the ghostpeakon is given by
 \begin{gather} \label{eq:DP-ghost-general}
 x_{N+1-p}(t) = \ln \frac{U_{p+1}^0 + \theta U_{p}^0}{U_{p}^1 + \theta U_{p-1}^1} ,\qquad 0 < \theta < \infty ,
 \end{gather}
 while the other peakons are given by the general solution formulas~\eqref{eq:DP-generalsolution}, up to renumbering:
 \begin{gather}
 x_{N+1-k}(t) = \begin{cases}
 \ln \dfrac{U_{k+1}^0}{U_{k}^1}, & 0 \le k < p, \vspace{1mm}\\
 \ln \dfrac{U_{k}^0}{U_{k-1}^1}, & p < k \le N,
 \end{cases} \qquad
 m_{N+1-k}(t) =
 \begin{cases}
 \dfrac{(U_{k+1}^0 U_{k}^1)^2}{W_{k+1} W_{k}}, & 1 \le k < p, \vspace{1mm}\\
 \dfrac{(U_{k}^0 U_{k-1}^1)^2}{W_{k} W_{k-1}}, & p < k \le N.
 \end{cases}\!\!\!\!\label{eq:DP-N-peakons-renumbered}
 \end{gather}
 Here $\theta \in (0,\infty)$ is a constant in one-to-one correspondence with the ghostpeakon's initial position $x_{N+1-p}(0)$, while the quantities $\{ \lambda_k, b_k \}_{k=1}^{N-1}$ appearing in the expressions
 $U_{k}^a$ and~$W_{k}$ have the usual time dependence $\lambda_k = {\rm const}$, $b_k(t) = b_k(0) {\rm e}^{t/\lambda_k}$.
\end{Theorem}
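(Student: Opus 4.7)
The argument will parallel the proof of Theorem~\ref{thm:CH-ghost}: start from the general $(N+1)$-peakon solution, use a coordinated scaling to send one eigenvalue to infinity and the corresponding residue to zero, and identify the limit with the ghostpeakon formula. The details differ from the Camassa--Holm case because of the extra $\Gamma$-factor in the denominator of the definition~\eqref{eq:Uk} of~$U_k^a$, and because the amplitude formula involves the $2\times 2$ determinant~$W_k$ rather than a single $\Delta_k^a$.

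Concretely, I would take the $(N+1)$-peakon solution given by~\eqref{eq:DP-generalsolution} with positive spectral data $\{\lambda_k, b_k(0)\}_{k=1}^{N+1}$ (writing $\tilde U_k^a$ and $\tilde W_k$ for the quantities associated with $N+1$ peakons, to distinguish them from the $U_k^a$ and $W_k$ associated with $N$ peakons), and reparametrize the last pair as
\begin{gather*}
 \varepsilon = 1/\lambda_{N+1},\qquad \theta = \lambda_{N+1}^{p}\, b_{N+1}(0),
\end{gather*}
so that $b_{N+1}(t) = \varepsilon^{p}\,\Theta(t)$ with $\Theta(t)=\theta\,{\rm e}^{\varepsilon t}$; the plan is then to let $\varepsilon \to 0^+$ with all other spectral parameters fixed. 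The scaling $b_{N+1}(0)=\varepsilon^{p}\theta$ (rather than $\varepsilon^{2p}\theta$ as in the CH case) is dictated by the extra factor $\Gamma$: computing $\prod_{s}(\lambda_{i_s}-1/\varepsilon)^2/(\lambda_{i_s}+1/\varepsilon)=\varepsilon^{-(k-1)}(1+\order{\varepsilon})$ and splitting the sum in~\eqref{eq:Uk} according to whether the new index $N+1$ belongs to $\{i_1,\dots,i_k\}$ or not, one obtains
\begin{gather*}
 \tilde U_k^a = U_k^a + \varepsilon^{p-a-k+1}\,\Theta\, U_{k-1}^a\,(1+\order{\varepsilon}),
\end{gather*}
whose exponent vanishes at precisely the two combinations $(a,k)=(0,p+1)$ and $(a,k)=(1,p)$ appearing in the numerator and denominator of~\eqref{eq:DP-ghost-general}. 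Substituting this asymptotics into $x_{N+1-k}=\ln(\tilde U_{k+1}^0/\tilde U_k^1)$ and splitting into the three cases $k<p$, $k=p$, $k>p$ (in which the correction is, respectively, $\order{\varepsilon}$, of order unity, or divergent but with matching powers in numerator and denominator so that $\varepsilon^{p-k}\Theta$ factors out) reproduces~\eqref{eq:DP-ghost-general} together with the position part of~\eqref{eq:DP-N-peakons-renumbered}.

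The amplitudes are the main technical obstacle, since each involves $\tilde W_k=\tilde U_k^0\tilde U_k^1-\tilde U_{k-1}^1\tilde U_{k+1}^0$, a difference in which the ostensibly leading cross-terms at order $\varepsilon^{p-k}$ cancel identically, leaving
\begin{gather*}
 \tilde W_k = W_k + \varepsilon^{p-k+1}\,\Theta\,\bigl(U_{k-1}^0 U_k^1 - U_{k-2}^1 U_{k+1}^0\bigr) + \varepsilon^{2p-2k+1}\,\Theta^2\, W_{k-1} + \order{\varepsilon^{p-k+2}}.
\end{gather*}
Case analysis then shows $\tilde W_k\to W_k$ for $k\le p$, $\tilde W_{p+1}\sim \varepsilon^{-1}\theta^2 W_p$, and $\tilde W_k\sim \varepsilon^{2p-2k+1}\theta^2 W_{k-1}$ for $k>p+1$. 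Combining these with the corresponding scalings of $\tilde U_{k+1}^0\tilde U_k^1$, the factors of $\varepsilon$ cancel in $m_{N+1-k}$ for $k\ne p$ to give the formulas in~\eqref{eq:DP-N-peakons-renumbered}, while an overall factor of $\varepsilon$ survives in $m_{N+1-p}$, so the site $N+1-p$ becomes a genuine ghostpeakon with vanishing amplitude in the limit.

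To conclude, I would reuse verbatim the removable-singularity argument from the end of the proof of Theorem~\ref{thm:CH-ghost}: since $x_1<\dotsb<x_{N+1}$ one can drop the absolute-value signs from~\eqref{eq:DP-peakon-ODEs-shorthand}, and the validity of the peakon ODEs becomes the assertion that certain rational functions of $t$, $\{\lambda_i\}$ and $\varepsilon$ vanish identically; holding for all $\varepsilon\ne 0$, they persist at the removable singularity $\varepsilon=0$. Finally, the one-to-one correspondence between $\theta\in(0,\infty)$ and $x_{N+1-p}(0)$ follows from the monotonicity of $\theta\mapsto \ln\bigl[(U_{p+1}^0+\theta U_p^0)/(U_p^1+\theta U_{p-1}^1)\bigr]$ at $t=0$, whose range is the interval bounded by the neighbouring peakons $x_{N-p}(0)$ and $x_{N+2-p}(0)$ (with the convention $x_0=-\infty$, $x_{N+1}=+\infty$ at the boundary cases $p=N$ and $p=0$).
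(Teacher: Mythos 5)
Your proposal is correct and follows essentially the same route as the paper's own proof: the reparametrization $\varepsilon = 1/\lambda_{N+1}$, $b_{N+1}(0)=\varepsilon^{p}\theta$, the expansion $\tilde U_k^a = U_k^a + \varepsilon^{p-a-k+1}\Theta\,U_{k-1}^a(1+\order{\varepsilon})$, the cancellation of the order-$\varepsilon^{p-k}$ cross terms in $\tilde W_k$, and the three-way case analysis in $k$ versus $p$ all match the paper. The only difference is that you spell out the removable-singularity argument (carried over from the Camassa--Holm proof) to justify that the limiting expressions still satisfy the ODEs, which the paper leaves implicit in the Degasperis--Procesi case.
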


\begin{proof} The (pure) $(N+1)$-peakon solution of the Degasperis--Procesi equation is given by \eqref{eq:DP-generalsolution} with $N+1$ instead of~$N$:
 \begin{gather*}
 x_{N+1-k} = \ln \frac{\tilde U_{k+1}^0}{\tilde U_{k}^1} ,\qquad m_{N+1-k} = \frac{\big(\tilde U_{k+1}^0 \tilde U_{k}^1\big)^2}{\tilde W_{k+1} \tilde W_{k}} ,\qquad 0 \le k \le N,
 \end{gather*}
 where $\tilde U_k^a = U_k^a(N+1)$ and $\tilde W_k = W_k(N+1)$. Replace the parameters $\lambda_{N+1}$ and~$b_{N+1}(0)$ with
 \begin{gather} \label{eq:DP-reparametrization}
 \varepsilon = \frac{1}{\lambda_{N+1}} ,\qquad \theta = \lambda_{N+1}^{p} b_{N+1}(0) ,
 \end{gather}
 and let
 \begin{gather*}
 \Theta = \Theta(t) = \theta {\rm e}^{\varepsilon t} = \lambda_{N+1}^{p} b_{N+1}(t) .
 \end{gather*}
 Then
 \begin{gather*}
 \tilde U_{k}^a = \sum_{1 \le i_1 < \dotsb < i_k \le N+1} \left( \prod_{r=1}^k \lambda_{i_r}^a b_{i_r} \right) \frac{\Delta( \lambda_{i_1}, \dots, \lambda_{i_k} )^2}{\Gamma( \lambda_{i_1}, \dots, \lambda_{i_k} )} \\
 \hphantom{\tilde U_{k}^a}{} =
\sum_{1 \le i_1 < \dotsb < i_k \le N} \left( \prod_{r=1}^k \lambda_{i_r}^a b_{i_r} \right) \frac{\Delta( \lambda_{i_1}, \dots, \lambda_{i_k} )^2}{\Gamma( \lambda_{i_1}, \dots, \lambda_{i_k} )}
 \\ \hphantom{\tilde U_{k}^a=}{}+ \displaystyle \sum_{1 \le i_1 < \dotsb < i_{k-1} \le N} \left( \prod_{r=1}^{k-1} \lambda_{i_r}^a b_{i_r} \right) \frac{\Delta( \lambda_{i_1}, \dots, \lambda_{i_{k-1}} )^2}{\Gamma( \lambda_{i_1}, \dots, \lambda_{i_{k-1}} )} \lambda_{N+1}^a b_{N+1} \prod_{s=1}^{k-1} \frac{(\lambda_{i_s} - \lambda_{N+1})^2}{(\lambda_{i_s} + \lambda_{N+1})} \\
 \hphantom{\tilde U_{k}^a}{} = U_k^a + \displaystyle \sum_{1 \le i_1 < \dotsb < i_{k-1} \le N} \left( \prod_{r=1}^{k-1} \lambda_{i_r}^a b_{i_r} \right) \frac{\Delta( \lambda_{i_1}, \dots, \lambda_{i_{k-1}} )^2}{\Gamma( \lambda_{i_1}, \dots, \lambda_{i_{k-1}} )} \varepsilon^{-a} \varepsilon^p \Theta
 \prod_{s=1}^{k-1} \frac{(\varepsilon \lambda_{i_s} - 1)^2}{(\varepsilon \lambda_{i_s} + 1) \varepsilon} \\
 \hphantom{\tilde U_{k}^a}{} = U_k^a +
 \displaystyle \sum_{1 \le i_1 < \dotsb < i_{k-1} \le N} \left( \prod_{r=1}^{k-1} \lambda_{i_r}^a b_{i_r} \right) \frac{\Delta( \lambda_{i_1}, \dots, \lambda_{i_{k-1}} )^2}{\Gamma( \lambda_{i_1}, \dots, \lambda_{i_{k-1}} )} \Theta \frac{\varepsilon^{p-a}}{\varepsilon^{k-1}}
( 1 + \order{\varepsilon} )
 \\ \hphantom{\tilde U_{k}^a}{} = U_{k}^a + U_{k-1}^a \Theta \varepsilon^{p-k-a+1} ( 1 + \order{\varepsilon}) \qquad (\text{as $\varepsilon \to 0$}) ,
 \end{gather*}
 and hence
 \begin{gather*}
 \tilde W_{k} =
 \begin{vmatrix} \tilde U_{k}^0 & \tilde U_{k-1}^1 \\ \tilde U_{k+1}^0 & \tilde U_{k}^1 \end{vmatrix} \\
 \hphantom{\tilde W_{k}}{} =
 \begin{vmatrix} U_{k}^0 + U_{k-1}^0 \Theta \varepsilon^{p-k+1} ( 1 + \order{\varepsilon} ) & U_{k-1}^1 + U_{k-2}^1 \Theta \varepsilon^{p-k+1} ( 1 + \order{\varepsilon} ) \\ U_{k+1}^0 + U_{k}^0 \Theta \varepsilon^{p-k} ( 1 + \order{\varepsilon} ) & U_{k}^1 + U_{k-1}^1 \Theta \varepsilon^{p-k} ( 1 + \order{\varepsilon} ) \end{vmatrix}
 \\ \hphantom{\tilde W_{k}}{} =
 W_k +
 \begin{vmatrix} U_{k-1}^0 & U_{k-2}^1 \\ U_{k+1}^0 & U_{k}^1 \end{vmatrix}
 \Theta \varepsilon^{p-k+1} ( 1 + \order{\varepsilon} )
 + W_{k-1} \Theta^2 \varepsilon^{2(p-k)+1} ( 1 + \order{\varepsilon})
 \\ \hphantom{\tilde W_{k}}{} =
 \begin{cases}
 W_{k} + \order{\varepsilon}, & k \le p, \vspace{1mm}\\
 \dfrac{W_{k-1} \Theta^2 + \order{\varepsilon}}{\varepsilon^{2(k-p)-1}}, & k > p,
 \end{cases}
 \end{gather*}
 so that the peakon solution formulas take the form
 \begin{gather*}
 x_{N+1-k} = \ln \frac{\tilde U_{k+1}^0}{\tilde U_{k}^1}
 = \ln\frac{U_{k+1}^0 + U_{k}^0 \Theta \varepsilon^{p-k} ( 1 + \order{\varepsilon} )}{U_{k}^1 + U_{k-1}^1 \Theta \varepsilon^{p-k} ( 1 + \order{\varepsilon} )} \\
 \hphantom{x_{N+1-k}}{} =
 \begin{cases}
 \ln\dfrac{U_{k+1}^0 + \order{\varepsilon}}{U_{k}^1 + \order{\varepsilon}}, & k < p, \vspace{1mm}\\
 \ln\dfrac{U_{p+1}^0 + U_{p}^0 \Theta + \order{\varepsilon}}{U_{p}^1 + U_{p-1}^1 \Theta + \order{\varepsilon}}, & k = p, \vspace{1mm}\\
 \ln\dfrac{U_{k}^0 \Theta + \order{\varepsilon}}{U_{k-1}^1 \Theta + \order{\varepsilon}}, & k > p,
 \end{cases}
 \end{gather*}
 and
 \begin{gather*}
 m_{N+1-k} = \frac{\big(\tilde U_{k+1}^0 \tilde U_{k}^1\big)^2}{\tilde W_{k+1} \tilde W_{k}} = \frac{ \bigl( U_{k+1}^0 + U_{k}^0 \Theta \varepsilon^{p-k} ( 1 + \order{\varepsilon} ) \bigr)^2
 \bigl( U_{k}^1 + U_{k-1}^1 \Theta \varepsilon^{p-k} ( 1 + \order{\varepsilon} ) \bigr)^2
 }{\tilde W_{k+1} \tilde W_{k}} \\
 \hphantom{m_{N+1-k}}{} = \begin{cases}
 \dfrac{\bigl( U_{k+1}^0 + \order{\varepsilon} \bigr)^2 \bigl( U_{k}^1 + \order{\varepsilon} \bigr)^2}{( W_{k+1} + \order{\varepsilon}) ( W_{k} + \order{\varepsilon} )}, & k < p, \vspace{1mm}\\
 \dfrac{\varepsilon \bigl( U_{p+1}^0 + U_{p}^0 \Theta + \order{\varepsilon} \bigr)^2 \bigl( U_{p}^1 + U_{p-1}^1 \Theta + \order{\varepsilon} \bigr)^2}{\bigl( W_{p} \Theta^2 + \order{\varepsilon} \bigr) ( W_{p} + \order{\varepsilon} )}, & k = p, \vspace{1mm}\\
 \dfrac{\bigl( U_{k}^0 \Theta + \order{\varepsilon} \bigr)^2 \bigl( U_{k-1}^1 \Theta + \order{\varepsilon} \bigr)^2}{\bigl( W_{k} \Theta^2 + \order{\varepsilon} \bigr) \bigl( W_{k-1} \Theta^2 + \order{\varepsilon} \bigr)}, & k > p.
 \end{cases}
\end{gather*}
 In the limit $\varepsilon \to 0$, we see that $m_{N+1-p} \to 0$, while the other expressions reduce to those given in~\eqref{eq:DP-ghost-general} and~\eqref{eq:DP-N-peakons-renumbered}.
\end{proof}

In the same way as for Corollary~\ref{cor:CH-add-a-ghostpeakon}, we obtain the following result just by relabeling.

\begin{Corollary} \label{cor:DP-add-a-ghostpeakon} Write $U_k=U_k^0$ and $V_k=U_k^1$. For the Degasperis--Procesi pure $N$-peakon solution given by~\eqref{eq:DP-generalsolution}, namely
 \begin{gather*}
 x_{N+1-k}(t) = \ln \frac{U_k}{V_{k-1}}, \qquad m_{N+1-k}(t) = \frac{\bigl( U_k V_{k-1} \bigr)^2}{W_k W_{k-1}} ,\qquad 1 \le k \le N,
 \end{gather*}
 the characteristic curves $x = \xi(t)$ in the $k$th interval from the right,
 \begin{gather*}
 x_{N-k}(t) < \xi(t) < x_{N+1-k}(t)
 \end{gather*}
$($where $0 \le k \le N$, $x_0 = -\infty$, $x_{N+1} = +\infty)$, are given by
 \begin{gather*} \label{eq:DP-add-a-ghostpeakon}
 \xi(t) = \ln \frac{U_{k+1} + \theta U_{k}}{V_{k} + \theta V_{k-1}} ,\qquad \theta > 0.
 \end{gather*}
\end{Corollary}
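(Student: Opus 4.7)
My plan is to deduce this corollary as a direct relabeling of Theorem~\ref{thm:DP-ghost}, exploiting the fact (completely analogous to the Camassa--Holm case discussed around equation~\eqref{eq:u-with-ghostpeakon-at-site-k}) that the trajectory of a Degasperis--Procesi ghostpeakon is a characteristic curve for the multipeakon solution formed by the remaining nonzero-amplitude peakons. Indeed, the $x_k$-equation in~\eqref{eq:DP-peakon-ODEs-shorthand} reads $\dot x_k = u(x_k)$, and this identity holds regardless of whether the amplitude $m_k$ is zero; if $m_k \equiv 0$, the contribution of the ghost to $u$ vanishes, so $\dot x_k(t) = \tilde u(x_k(t),t)$, where $\tilde u$ is the $N$-peakon wave built from the non-ghost peakons only. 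Hence any ghostpeakon trajectory is a characteristic curve of~$\tilde u$, and conversely every characteristic curve of $\tilde u$ in the $k$th interval from the right can be produced as such a trajectory by choosing the appropriate initial position.

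To extract the explicit formula, I would apply Theorem~\ref{thm:DP-ghost} with $p=k$: the resulting $(N+1)$-peakon configuration has a single ghost at site $N+1-k$, which lies between the (renumbered) non-ghost peakons at sites $N-k$ and $N+1-k$, i.e.\ precisely in the $k$th interval from the right of the underlying $N$-peakon solution. The formulas~\eqref{eq:DP-N-peakons-renumbered} show that after dropping the ghost the remaining positions and amplitudes coincide with the standard $N$-peakon formulas~\eqref{eq:DP-generalsolution}, while the position of the ghost, given by~\eqref{eq:DP-ghost-general}, becomes
\begin{gather*}
\xi(t) = \ln\frac{U_{k+1}^0 + \theta\, U_{k}^0}{U_{k}^1 + \theta\, U_{k-1}^1} = \ln\frac{U_{k+1} + \theta\, U_{k}}{V_{k} + \theta\, V_{k-1}},
\end{gather*}
which is exactly the asserted expression once we set $U_j = U_j^0$ and $V_j = U_j^1$.

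It remains to check that, as $\theta$ sweeps through $(0,\infty)$, the curve $\xi(t)$ sweeps through the entire $k$th interval. The two limits are immediate from the formula:
\begin{gather*}
\lim_{\theta\to 0^+}\xi(t) = \ln\frac{U_{k+1}}{V_{k}} = x_{N-k}(t),
\qquad
\lim_{\theta\to\infty}\xi(t) = \ln\frac{U_{k}}{V_{k-1}} = x_{N+1-k}(t),
\end{gather*}
with the usual conventions $U_{N+1}=V_{N+1}=0$ (giving $-\infty$ for $k=N$) and $U_0=1$, $V_{-1}=0$ (giving $+\infty$ for $k=0$). Monotonicity of $\xi$ in $\theta$ for each fixed~$t$, together with positivity of $U_k$, $V_k$ and $W_k$ in the pure peakon case (cf.~the discussion following~\eqref{eq:Novikov-generalsolution}), then forces $\xi(t)$ to range over the entire open interval $(x_{N-k}(t), x_{N+1-k}(t))$, completing the identification.

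I do not expect any substantive obstacle here: the entire content of the corollary is contained in Theorem~\ref{thm:DP-ghost}, and the only real work was the limiting argument carried out in the proof of that theorem. The one point that warrants care is the boundary bookkeeping at $k=0$ and $k=N$, where either the numerator or denominator in~\eqref{eq:DP-add-a-ghostpeakon} collapses to a single term; this is why the conventions $U_{-1}=V_{-1}=0$ and $U_{N+1}=V_{N+1}=0$ built into the definition~\eqref{eq:Uk} are essential for making the formula uniformly valid across all $0 \le k \le N$.
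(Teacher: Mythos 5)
Your proposal is correct and follows essentially the paper's own route: the corollary is obtained from Theorem~\ref{thm:DP-ghost} (with $p=k$) simply by relabeling, using the observation that a ghostpeakon trajectory is a characteristic curve of the wave formed by the remaining peakons. Your additional remarks on the $\theta\to 0^+$, $\theta\to\infty$ limits and the monotonicity in~$\theta$ (which follows from $U_kV_k-U_{k+1}V_{k-1}=W_k>0$) are consistent with the paper's treatment of the analogous Camassa--Holm corollary.
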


\begin{Remark} \label{rem:DP-ghost-antipeakons} Although we have formulated Theorem~\ref{thm:DP-ghost} and Corollary~\ref{cor:DP-add-a-ghostpeakon} for pure peakon solutions, they are also valid (in the appropriate time interval) for mixed peakon--antipeakon solutions, in the generic case where all eigenvalues are simple and there are no resonances $\lambda_i + \lambda_j = 0$; see Remark~\ref{rem:DP-antipeakons}. The eigenvalues and residues may even be complex (appearing in complex-conjugate pairs). In the proof, given a ``target configuration'' of~$N$ eigenvalues and~$N$ residues for which we want to obtain the ghostpeakon formulas, we add a simple positive eigenvalue $\lambda_{N+1} = 1/\varepsilon > 0$ and a corresponding real nonzero residue $b_{N+1}$. However, it may happen that this residue $b_{N+1}$ must be \emph{negative} in order for $x_1 < \dots < x_{N+1}$ to hold, and according to~\eqref{eq:DP-reparametrization} this means that also $\theta$ must be negative. Such cases, where we need to use a~negative ghost parameter $\theta$, can be seen in Example~\ref{ex:DP-p-ap-asymm} below.
\end{Remark}

\begin{Example}[DP three-peakon characteristics] \label{ex:DP-3p-ghost} The trajectories in the pure three-peakon solution are given by
 \begin{gather*}
 x_1(t) = \ln\frac{U_3}{V_2} ,\qquad x_2(t) = \ln\frac{U_2}{V_1} ,\qquad x_3(t) = \ln\frac{U_1}{V_0} = \ln U_1 ,
 \end{gather*}
where the expressions $U_k$ and $V_k$ were written out in detail in Example~\ref{ex:DP-threepeakon-solution}. By Corollary~\ref{cor:DP-add-a-ghostpeakon}, the characteristic curves in the intervals outside and between the peakons are
 \begin{alignat*}{3}
& \xi(t)= \ln\frac{U_4 + \theta U_3}{V_3 + \theta V_2} = \ln\frac{\theta U_3}{V_3 + \theta V_2}\qquad && \text{in $( -\infty, x_1(t))$} ,& \\
& \xi(t)= \ln\frac{U_3 + \theta U_2}{V_2 + \theta V_1}\qquad && \text{in $( x_1(t), x_2(t) )$} ,& \\
& \xi(t)= \ln\frac{U_2 + \theta U_1}{V_1 + \theta V_0} = \ln\frac{U_2 + \theta U_1}{V_1 + \theta}\qquad && \text{in $( x_2(t), x_3(t))$} ,& \\
& \xi(t)= \ln\frac{U_1 + \theta U_0}{V_0 + \theta V_{-1}} = \ln(U_1 + \theta)\qquad && \text{in $( x_3(t), \infty)$} ,
 \end{alignat*}
where $0 < \theta < \infty$ in all cases. The curves look fairly similar to those for the Camassa--Holm case shown in Fig.~\ref{fig:CH-3p-ghost}.
\end{Example}

The following two examples illustrate what the characteristic curves look like for a peakon--antipeakon collision in the Degasperis--Procesi equation. Here our theorem only applies before the collision, since the solution continues after the collision in the form of a~so-called shockpeakon~\cite{lundmark:2007:shockpeakons}, and in that region we fall back to computing the characteristics by direct integration. There is also a degenerate symmetric case where our theorem does not apply at all, since already the peakon solution before the collision is given by exceptional formulas. We begin with this case, since it is a little simpler.

\begin{figure}[t] \centering

 \includegraphics{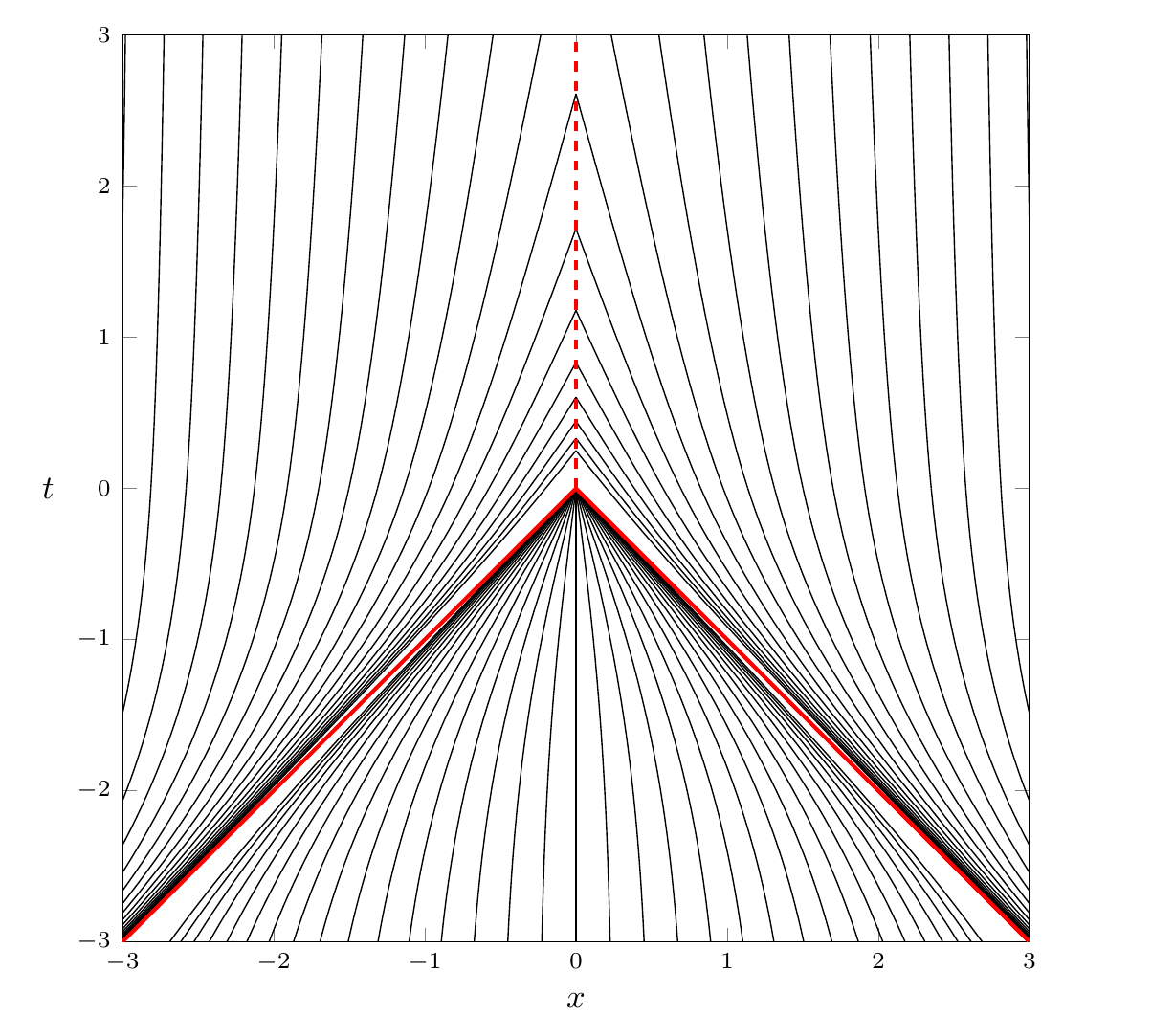}

\caption{Peakon trajectories and a selection of other characteristic curves for a symmetric peakon--antipeakon solution ($\lambda_1 = 1$, $\lambda_2=-1$) of the Degasperis--Procesi equation; see Example~\ref{ex:DP-p-ap-symm}. The peakon and the antipeakon travel with constant speed and collide head-on at the origin, forming a shockpeakon which remains stationary at $x=0$ (dashed line): $u(x,t) = \frac{- \sgn(x) {\rm e}^{-\abs{x}}}{\lambda + t}$ for $t \ge 0$. All characteristic curves have a finite lifespan; the curve through the point $(x,t) = (\xi_0,0)$ reaches the shock at time $t = \exp\bigl({\rm e}^{\abs{\xi_0}}-1\bigr) - 1$.} \label{fig:DP-p-ap-symm}
\end{figure}

\begin{Example}[DP symmetric collision]\label{ex:DP-p-ap-symm} Consider the solution of the Degasperis--Procesi equation where a peakon and an antipeakon of \emph{equal} strength collide and merge into a shockpeakon~\cite[Theorem~3.3]{lundmark:2007:shockpeakons}. This symmetric situation occurs when the eigenvalues satisfy $\lambda_{1,2}=\pm \lambda$, so that $\lambda_1+\lambda_2 = 0$. This is a resonant case where the usual formulas~\eqref{eq:DP-twopeakon-solution} do not apply (since they contain $\lambda_1 + \lambda_2$ in the denominators), but the solution can easily be found by direct integration instead. The results below are illustrated in Fig.~\ref{fig:DP-p-ap-symm} for $\lambda = 1$.

By translations along the $x$ and $t$ axes, we can arrange for the collision to occur at $(x,t)=(0,0)$; then the solution depends on the single parameter $\lambda > 0$. Before the collision, for $t<0$, we have $x_1(t) < 0 < x_2(t)$ and $m_1(t) > 0 > m_2(t)$, and the solution is
 \begin{gather*}
 x_1(t) = -x_2(t) = \frac{t}{\lambda} ,\qquad m_1(t) = -m_2(t) = \frac{1}{\lambda (1 - {\rm e}^{2t/\lambda})}, \qquad t < 0 .
 \end{gather*}
Thus,
 \begin{gather*}
 u(x,t) = m_1(t) {\rm e}^{-\abs{x-x_1(t)}} + m_2(t) {\rm e}^{-\abs{x-x_2(t)}} = \frac{{\rm e}^{-\abs{x-t/\lambda}} - {\rm e}^{-\abs{x+t/\lambda}}}{\lambda (1 - {\rm e}^{2t/\lambda})} \\
 \hphantom{ u(x,t)}{} =
 \begin{cases}
 \dfrac{{\rm e}^{x-t/\lambda} - {\rm e}^{x+t/\lambda}}{\lambda (1 - {\rm e}^{2t/\lambda})} = \dfrac{{\rm e}^{x-t/\lambda}}{\lambda} ,& x < t/\lambda ,\vspace{1mm}\\
 \dfrac{{\rm e}^{-x+t/\lambda} - {\rm e}^{x+t/\lambda}}{\lambda (1 - {\rm e}^{2t/\lambda})} = \dfrac{\sinh x}{\lambda \sinh(t/\lambda)} ,& t/\lambda \le x \le -t/\lambda ,\vspace{1mm}\\
 \dfrac{{\rm e}^{-x+t/\lambda} - {\rm e}^{-x-t/\lambda}}{\lambda (1 - {\rm e}^{2t/\lambda})} = \dfrac{-{\rm e}^{-x-t/\lambda}}{\lambda} ,& -t/\lambda < x ,
 \end{cases}
 \end{gather*}
 In particular, $u(x_1(t),t) = -u(x_2(t),t) = 1/\lambda$ for all $t<0$, in agreement with the fact that the peakons travel with the constant speed $1/\lambda$.

We can also easily find the characteristic curves by direct integration in this case. The characteristics $x = \xi(t)$ to the left of~$x_1$ are given by
 \begin{gather*}
 \dot\xi = \frac{{\rm e}^{\xi-t/\lambda}}{\lambda} \iff \frac{{\rm d}}{{\rm d}t} {\rm e}^{-\xi} = \frac{{\rm d}}{{\rm d}t} {\rm e}^{-t/\lambda}
 \iff \xi(t) = - \ln\bigl( \theta + {\rm e}^{-t/\lambda} \bigr) ,\qquad \theta = {\rm e}^{-\xi(0)} - 1 > 0 ,
 \end{gather*}
 and similarly to the right of~$x_2$,
 \begin{gather*}
 \xi(t) = \ln\bigl( \theta + {\rm e}^{-t/\lambda} \bigr) ,\qquad \theta = {\rm e}^{\xi(0)} - 1 > 0 .
 \end{gather*}
 In between, we have
 \begin{gather*}
 \dot\xi = \frac{\sinh \xi}{\lambda \sinh(t/\lambda)} \iff \int\frac{{\rm d}\xi}{\sinh \xi} = \int\frac{{\rm d}t}{\lambda \sinh(t/\lambda)}
 \iff \ln \abs{\tanh\frac{\xi}{2}} = \ln \abs{\tanh\frac{t}{2 \lambda}} + C \\
\hphantom{\dot\xi = \frac{\sinh \xi}{\lambda \sinh(t/\lambda)}}{} \iff \xi(t) = -2 \artanh\left( \tanh\frac{\theta}{2} \tanh\frac{t}{2\lambda} \right) ,
 \end{gather*}
 with $\theta = \lim\limits_{t \to -\infty} \xi(t) \in \R$.

The limiting wave profile at the collision is
 \begin{gather*}
 u(x,0) = \frac{- \sgn(x) {\rm e}^{-\abs{x}}}{\lambda} ,
 \end{gather*}
which has a jump of size $-2/\lambda$ at $x=0$. As explained in Lundmark's paper~\cite{lundmark:2007:shockpeakons}, the continuation of the solution picked out by a so-called entropy condition~\cite{coclite-karlsen:2006:DPwellposedness} is
 \begin{gather*}
 u(x,t) = \frac{- \sgn(x) {\rm e}^{-\abs{x}}}{\lambda + t} ,\qquad t \ge 0 ,
 \end{gather*}
a shockpeakon which just sits at $x=0$, with the size of the jump decaying to zero as $t \to \infty$. The characteristic curves for $t \ge 0$ to the left of the shock ($x < 0$) are given by
 \begin{gather*}
 \dot\xi = \frac{{\rm e}^{\xi}}{\lambda+t} \iff \xi(t) = -\ln \left( \theta - \ln \left( 1 + \frac{t}{\lambda} \right) \right) ,\qquad \theta = {\rm e}^{-\xi(0)} > 1 ,
 \end{gather*}
 and similarly to the right of the shock ($x > 0$) we have
 \begin{gather*}
 \xi(t) = \ln \left( \theta - \ln \left( 1 + \frac{t}{\lambda} \right) \right) ,\qquad \theta = {\rm e}^{\xi(0)} > 1 .
 \end{gather*}
 From this it follows that each characteristic curve has a finite lifespan; it ceases to exist when it collides with the shock, i.e., when $\xi(t)$ becomes zero, which happens when
 \begin{gather*}
 \frac{t}{\lambda} = {\rm e}^{\theta-1} - 1 = \exp\bigl({\rm e}^{\abs{\xi(0)}}-1\bigr) - 1 .
 \end{gather*}
\end{Example}

\begin{figure}[t] \centering

 \includegraphics{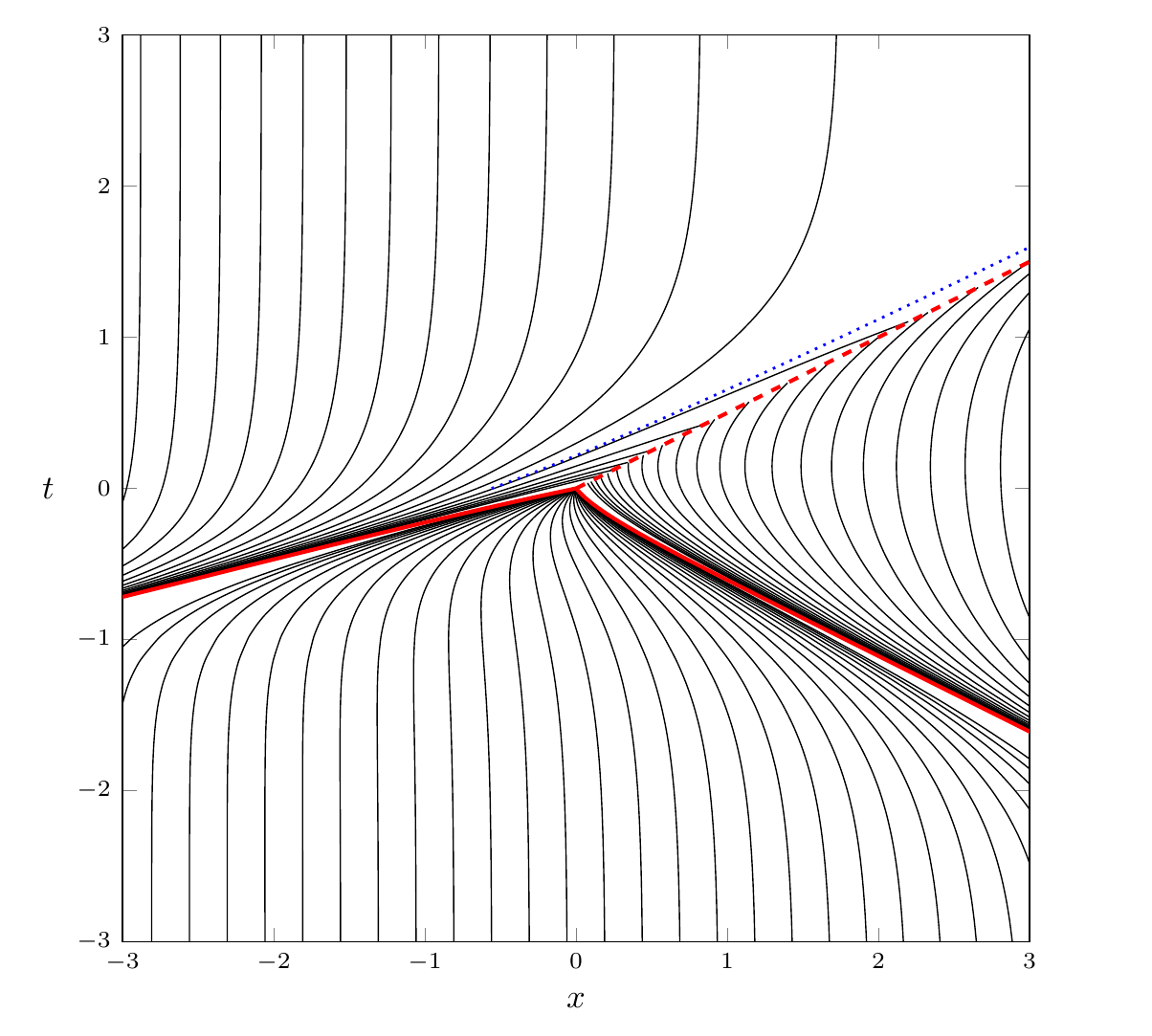}

\caption{Peakon trajectories and a selection of other characteristic curves for an asymmetric peakon--antipeakon solution ($\lambda_1 = 1/4$, $\lambda_2=-1/2$) of the Degasperis--Procesi equation; see Example~\ref{ex:DP-p-ap-asymm}. The limiting wave profile at the collision is a shockpeakon given by $u(x,0) = \big(v+\frac{1}{\sigma}\big) {\rm e}^x$ for $x<0$ and $u(x,0) = \big(v-\frac{1}{\sigma}\big) {\rm e}^{-x}$ for $x>0$, where $v = \frac{1}{\lambda_1} + \frac{1}{\lambda_2} = 2$ and $\sigma = \sqrt{-\lambda_1 \lambda_2} = \frac{1}{2 \sqrt 2}$. In particular, $u(0^\pm,0) = v \mp \frac{1}{\sigma}$, so there is a jump of size $-\frac{2}{\sigma}$ at the origin. The shockpeakon travels with constant velocity~$v$ (dashed line), and the jump at time~$t>0$ is $-\frac{2}{\sigma+t}$, which decays to zero as $t \to \infty$. The characteristic curve (dotted line) through the point $(x,t)=(\Xi,0)$, where $\Xi = - \ln \bigl( 1 - {\rm e}^{v\sigma} \Ei(-v\sigma) \bigr) \approx -0.55842$, approaches the shockpeakon trajectory $x=vt$ as $t \to \infty$. Characteristics to the left of that curve approach lines $x = {\rm const}$ as $t \to \infty$, and characteristics to the right of it hit the shock after finite time. All characteristics to the right of the shock turn from left to right at the same instant, namely $t = \frac{1}{v} - \sigma \approx 0.146$ (except, of course, those that have already hit the shock and ceased to exist).} \label{fig:DP-p-ap-asymm}
\end{figure}

\begin{Example}[DP asymmetric collision]\label{ex:DP-p-ap-asymm} Next, we look at the characteristics for the asymmetric peakon--antipeakon collision \cite[Theorem~3.5]{lundmark:2007:shockpeakons} which occurs when $\lambda_1 + \lambda_2 \neq 0$. Let us assume that the peakon at~$x_1$ is stronger than the antipeakon at~$x_2$, i.e., $m_1(t) > -m_2(t) > 0$. Then the shockpeakon which forms at the collision will move to the right. This case occurs when $b_1(0)$ and $b_2(0)$ are positive and $0 < \lambda_1 < -\lambda_2$, so that
\begin{gather*}
 \kappa := \sqrt{-\lambda_2/\lambda_1} > 1 .
 \end{gather*}
The results below are illustrated in Fig.~\ref{fig:DP-p-ap-asymm} for $\lambda_1 = 1/4$ and $\lambda_2 = -1/2$.

The solution is given by the same formulas~\eqref{eq:DP-twopeakon-solution} as in the pure peakon case, but only up until the time of collision, which is
 \begin{gather*}
 t_0 = \frac{1}{\lambda_1^{-1}-\lambda_2^{-1}} \log\left(\frac{\kappa^2 - \kappa}{\kappa+1}\frac{b_2(0)}{b_1(0)}\right) > 0 ,
 \end{gather*}
By translation, we can arrange for the collision to occur at $(x,t)=(0,0)$. This is accomplished by taking
 \begin{gather*}
 b_1(0) = \frac{\kappa^2 - \kappa}{\kappa^2 + 1} ,\qquad b_2(0) = \frac{\kappa + 1}{\kappa^2 + 1} .
 \end{gather*}
Indeed, to get the collision at $t=0$, we must take $\frac{b_1(0)}{b_2(0)} = \frac{\kappa^2-\kappa}{\kappa+1}$, and $x_2(0)=0$ if and only if $b_1(0) + b_2(0) = 1$. With these parameter values, the solution before the collision (i.e., for $t<0$) is given by
 \begin{gather*}
 x_1(t) = \ln \frac{\frac{(\lambda_1-\lambda_2)^2}{\lambda_1+\lambda_2}b_1 b_2}{\lambda_1 b_1 + \lambda_2 b_2} = -\ln \frac{\big(\kappa^2+\kappa\big) {\rm e}^{-t/\lambda_1} - (\kappa-1) {\rm e}^{-t/\lambda_2}}{\kappa^2+1} , \\
 x_2(t) = \ln (b_1+b_2) = \ln \frac{\big(\kappa^2-\kappa\big) {\rm e}^{t/\lambda_1} + (\kappa+1) {\rm e}^{t/\lambda_2}}{\kappa^2+1} , \\
 m_1(t) = \frac{(\lambda_1 b_1 + \lambda_2 b_2)^2}{\lambda_1 \lambda_2 \left( \lambda_1 b_1^2 + \lambda_2 b_2^2 + \frac{4 \lambda_1 \lambda_2}{\lambda_1+\lambda_2}b_1 b_2 \right)} \\
 \hphantom{m_1(t)}{} = \frac{1}{\lambda_1} \cdot \frac{\left( (\kappa-1) {\rm e}^{t/\lambda_1} - \big(\kappa^2+\kappa\big) {\rm e}^{t/\lambda_2} \right)^2}{\kappa^2 \left( (\kappa-1)^2 {\rm e}^{t/\lambda_1} + (\kappa+1)^2 {\rm e}^{t/\lambda_2} \right) \left( {\rm e}^{t/\lambda_2} - {\rm e}^{t/\lambda_1} \right)} > 0 , \\
 m_2(t) = \frac{(b_1+b_2)^2}{\lambda_1 b_1^2 + \lambda_2 b_2^2 + \frac{4 \lambda_1 \lambda_2}{\lambda_1+\lambda_2}b_1 b_2} \\
 \hphantom{m_2(t)}{} = \frac{1}{\lambda_2} \cdot
 \frac{\left( \big(\kappa^2-\kappa\big) {\rm e}^{t/\lambda_1} + (\kappa+1) {\rm e}^{t/\lambda_2} \right)^2}{\left( (\kappa-1)^2 {\rm e}^{t/\lambda_1} + (\kappa+1)^2 {\rm e}^{t/\lambda_2} \right) \left( {\rm e}^{t/\lambda_2} - {\rm e}^{t/\lambda_1} \right)} < 0 .
 \end{gather*}
Only the terms containing ${\rm e}^{t/\lambda_2}$ contribute asymptotically as $t \to -\infty$, since ${\rm e}^{t/\lambda_1} \to 0$ and ${\rm e}^{t/\lambda_2} \to \infty$. More precisely, ${\rm e}^{t/\lambda_1} = {\rm e}^{\delta t} {\rm e}^{t/\lambda_2}$, where $\delta = \frac{1}{\lambda_1} - \frac{1}{\lambda_2} > 0$, which implies that
 \begin{alignat*}{3}
 & x_1(t) = \frac{t}{\lambda_1} + \ln \frac{\kappa^2+1}{\kappa^2 + \kappa} + {\mathcal O} \big({\rm e}^{\delta t}\big) , \qquad && m_1(t) = \frac{1}{\lambda_1} + {\mathcal O} \big({\rm e}^{\delta t}\big) ,&\\
& x_2(t) = \frac{t}{\lambda_2} + \ln \frac{\kappa+1}{\kappa^2 + 1} + {\mathcal O} \big({\rm e}^{\delta t}\big) , \qquad && m_2(t) = \frac{1}{\lambda_2} + {\mathcal O} \big({\rm e}^{\delta t}\big) ,&
 \end{alignat*}
as $t \to -\infty$. To see the behaviour at the collision, we just use the Maclaurin expansions of the exponentials to get
 \begin{alignat*}{3}
& x_1(t) = \frac{\kappa^2+\kappa-1}{\kappa^2 \lambda_1} t + {\mathcal O}\big(t^2\big) , \qquad && m_1(t) = \frac{-1}{2t} + \frac{\kappa^2 + \kappa - 1}{2 \kappa^2 \lambda_1} + \order{t} ,&\\
& x_2(t) = \frac{\kappa^2-\kappa-1}{\kappa^2 \lambda_1} t + {\mathcal O}\big(t^2\big) , \qquad && m_2(t) = \frac{1}{2t} + \frac{\kappa^2 - \kappa - 1}{2 \kappa^2 \lambda_1} + \order{t} ,&
 \end{alignat*}
 as $t \to 0^-$. It follows that
 \begin{gather*}
 u(x_1(t),t) = m_1(t) + m_2(t) {\rm e}^{x_1(t)-x_2(t)} = \frac{1}{\lambda_1} + \frac{1}{\lambda_2} + \frac{1}{\sqrt{-\lambda_1 \lambda_2}} + \order{t} ,\\
 u(x_2(t),t) = m_1(t) {\rm e}^{x_1(t)-x_2(t)} + m_2(t) = \frac{1}{\lambda_1} + \frac{1}{\lambda_2} - \frac{1}{\sqrt{-\lambda_1 \lambda_2}} + \order{t} ,
 \end{gather*}
 as $t \to 0^-$. The limiting wave profile at the collision is thus
 \begin{gather*}
 u(x,0) = \left( v - \frac{\sgn(x)}{\sigma} \right) {\rm e}^{-\abs{x}} ,
 \end{gather*}
 where
 \begin{gather*}
 v = \frac{1}{\lambda_1} + \frac{1}{\lambda_2} > 0 ,\qquad \sigma = \sqrt{-\lambda_1 \lambda_2} > 0 .
 \end{gather*}
It has a jump of size $-2/\sigma$ at $x=0$, and continues as a shockpeakon moving to the right with constant velocity~$v$ and shock strength decaying to zero as $t\to \infty$:
 \begin{gather} \label{eq:DP-continuation-shockpeakon}
 u(x,t) = \left( v - \frac{\sgn(x-vt)}{\sigma + t} \right) {\rm e}^{-\abs{x-vt}} ,\qquad t \ge 0 .
 \end{gather}

The characteristic curves before the collision are given by our general ghostpeakon formula~\eqref{eq:DP-ghost-general}, except that we must let the parameter~$\theta$ be \emph{negative} in some cases; cf.\ Remark~\ref{rem:DP-ghost-antipeakons}. To the left of~$x_1$, we have
 \begin{gather*}
 \xi_{\text{left}}(t) = \ln \frac{\theta \frac{(\lambda_1-\lambda_2)^2}{\lambda_1+\lambda_2}b_1 b_2}{\frac{(\lambda_1-\lambda_2)^2}{\lambda_1+\lambda_2} \lambda_1 \lambda_2 b_1 b_2 + \theta(\lambda_1 b_1 + \lambda_2 b_2)} \\
\hphantom{\xi_{\text{left}}(t)}{} = -\ln \left( \frac{(\kappa^2+\kappa) {\rm e}^{-t/\lambda_1} - (\kappa-1) {\rm e}^{-t/\lambda_2}}{\kappa^2+1} + \frac{\lambda_1 \lambda_2}{\theta} \right) ,
\end{gather*}
where $-\infty<\theta<0$, with the asymptotics
 \begin{gather*}
 \xi_{\text{left}}(t) = \frac{t}{\lambda_1} + \ln \frac{\kappa^2+1}{\kappa^2 + \kappa} + {\mathcal O}\big({\rm e}^{t/\lambda_1}\big),\qquad t \to -\infty ,
 \end{gather*}
 and
 \begin{gather*}
 \xi_{\text{left}}(t) = -\ln \left( 1 + \frac{\lambda_1 \lambda_2}{\theta} \right) + \order{t} ,\qquad t \to 0^- .
 \end{gather*}
 The characteristics to the right of~$x_2$ are
 \begin{gather*}
 \xi_{\text{right}}(t) = \ln (b_1+b_2 + \theta) = \ln \left( \frac{(\kappa^2-\kappa) {\rm e}^{t/\lambda_1} + (\kappa+1) {\rm e}^{t/\lambda_2}}{\kappa^2+1} + \theta \right),
 \end{gather*}
 where $0<\theta<\infty$, with the asymptotics
 \begin{gather*}
 \xi_{\text{right}}(t) = \frac{t}{\lambda_2} + \ln \frac{\kappa+1}{\kappa^2 + 1} + {\mathcal O}\big({\rm e}^{-t/\lambda_2}\big) ,\qquad t \to -\infty ,
 \end{gather*}
 and
 \begin{gather*}
 \xi_{\text{right}}(t) = \ln(1+\theta) + \order{t} ,\qquad t \to 0^- .
 \end{gather*}
 And between $x_1$ and~$x_2$, the characteristics are
 \begin{gather*}
 \xi_{\text{mid}}(t) = \ln \frac{\frac{(\lambda_1-\lambda_2)^2}{\lambda_1+\lambda_2}b_1 b_2 + \theta(b_1+b_2)}{\lambda_1 b_1 + \lambda_2 b_2 + \theta} \\
 \hphantom{\xi_{\text{mid}}(t)}{} = \ln \frac{-\big(\kappa^3+\kappa\big) {\rm e}^{t/\lambda_1} {\rm e}^{t/\lambda_2} + \frac{\theta}{\lambda_1} \bigl( \big(\kappa^2-\kappa\big) {\rm e}^{t/\lambda_1} + (\kappa+1) {\rm e}^{t/\lambda_2} \bigr)}{(\kappa^2-\kappa) {\rm e}^{t/\lambda_1} - \big(\kappa^3+\kappa^2\big) {\rm e}^{t/\lambda_2} + \frac{\theta}{\lambda_1} (\kappa^2+1)},
 \end{gather*}
 where $-\infty<\theta<0$, with the asymptotics
 \begin{gather*}
 \xi_{\text{mid}}(t) = \ln \frac{-\theta (\kappa+1)}{\lambda_1 \big(\kappa^3+\kappa^2\big)} + {\mathcal O}\big({\rm e}^{-t/\lambda_2}\big) ,\qquad t \to -\infty ,
 \end{gather*}
 and
 \begin{gather*}
 \xi_{\text{mid}}(t) = \frac{\big(\kappa^2 + \kappa - 1\big) + \frac{-\theta}{\kappa \lambda_1} \big(\kappa^2 - \kappa - 1\big)}{\kappa^2 \lambda_1 \left( 1 + \frac{-\theta}{\kappa \lambda_1} \right)} t
 + {\mathcal O}\big(t^2\big) ,\qquad t \to 0^- .
 \end{gather*}
The rate of steepening of the slope~$u_x$ along this middle family of characteristic curves turns out to be
 \begin{gather*}
 u_x\bigl( \xi_{\text{mid}}(t),t) \bigr) = - m_1(t) {\rm e}^{x_1(t) - \xi_{\text{mid}}(t)} + m_2(t) {\rm e}^{\xi_{\text{mid}}(t) - x_2(t)}
 = \frac{1}{t} + \order{t} ,\qquad t \to 0^- ,
\end{gather*}
for every $\theta<0$. We may note that this agrees with the blowup rate for solutions with initial data in $H^s(\R)$ for $s>3/2$ given by Escher, Liu and Yin~\cite[Theorem~3.1]{escher-liu-yin:2006:DP-globalweak-blowup},
 even though peakons are not smooth enough to be in those function spaces.

Let us also investigate the characteristic curves of the shockpeakon solution~\eqref{eq:DP-continuation-shockpeakon}, which provides the continuation of the solution past the collision. Here we use direct integration again.
 The ODE for the characteristics starting to the left of the shock ($\xi(0)<0$) is
\begin{gather*}
 \dot\xi = \left( v + \frac{1}{\sigma + t} \right) {\rm e}^{\xi-vt} ,
\end{gather*}
 which gives
 \begin{gather*}
{\rm e}^{-\xi(t)} - {\rm e}^{-\xi(0)} = - \int_0^t \left( v + \frac{1}{\sigma + \tau} \right) {\rm e}^{-v \tau} {\rm d}\tau = - \int_0^{vt} \left( 1 + \frac{1}{v\sigma + s} \right) {\rm e}^{-s} {\rm d}s \\
\hphantom{{\rm e}^{-\xi(t)} - {\rm e}^{-\xi(0)}}{} = - {\rm e}^{v\sigma} \int_{v\sigma}^{v\sigma+vt} \left( 1 + \frac{1}{r} \right) {\rm e}^{-r} {\rm d}r = {\rm e}^{v\sigma} ( F(v\sigma+vt) - F(v\sigma)) ,
 \end{gather*}
 where $F$ is a decreasing function determined by
 \begin{gather*}
 F'(r) = - \left( 1 + \frac{1}{r} \right) {\rm e}^{-r} ,\qquad r > 0 .
 \end{gather*}
If we normalize by requiring that $F(r) \to 0$ as $r \to \infty$, we can write
 \begin{gather*}
 F(r) = {\rm e}^{-r} - \Ei(-r) ,
 \end{gather*}
where $\Ei$ is the standard exponential integral available in many software packages,
 \begin{gather*} \label{eq:Ei}
 \Ei(z) = - \int_{-z}^{\infty} \frac{{\rm e}^{-\zeta}}{\zeta} {\rm d}\zeta .
 \end{gather*}
The function $\Ei(z)$ has the following properties for real~$z$: it is negative and decreasing for $z < 0$, with
 \begin{gather*}
 \lim_{z \to -\infty} \Ei(z) = 0 ,\qquad \lim_{z \to 0^-} \Ei(z) = -\infty ,
 \end{gather*}
and it is increasing for $z > 0$, with
 \begin{gather*}
 \lim_{z \to 0^+} \Ei(z) = -\infty ,\qquad \Ei(z)={\rm e}^z \bigl( z^{-1} + \mathcal{O}\big(z^{-2}\big) \bigr) \qquad \text{ as $z \to \infty$} .
 \end{gather*}
 There is a unique positive zero $z_0 \approx 0.372507$.

 Since $\Ei(-r) < 0$ for $r > 0$, we have
 \begin{gather} \label{eq:F-exp-ineqality}
 F(r) > {\rm e}^{-r} .
 \end{gather}
 Thus, the characteristics to the left of the shock are
 \begin{gather*}
 x = \xi(t) = - \ln \bigl( {\rm e}^{-\xi(0)} + {\rm e}^{v\sigma} ( F(v\sigma+vt) - F(v\sigma)) \bigr) .
 \end{gather*}
 This is an increasing function of $\xi(0)$, so different characteristic curves never intersect. Note that the expression in brackets,
 \begin{gather*}
 {\rm e}^{-\xi(0)} + {\rm e}^{v\sigma} ( F(v\sigma+vt) - F(v\sigma)) ,
 \end{gather*}
is a decreasing function (which agrees with $\xi(t)$ being increasing), and it tends to
 \begin{gather*}
 {\rm e}^{-\xi(0)} - {\rm e}^{v\sigma} F(v\sigma)
 \end{gather*}
as $t \to \infty$. Since ${\rm e}^{v\sigma} F(v\sigma) > 1$ by~\eqref{eq:F-exp-ineqality}, there is exactly one initial value
 \begin{gather*}
 \xi(0) = - \ln \bigl( {\rm e}^{v\sigma} F(v\sigma) \bigr) = - \ln \bigl( 1 - {\rm e}^{v\sigma} \Ei(-v\sigma) \bigr) =: \Xi < 0
 \end{gather*}
for which this limit equals zero. This particular characteristic curve,
 \begin{gather*}
 x = \xi(t) = - \ln \bigl( {\rm e}^{v\sigma} F(v\sigma+vt) \bigr) ,
 \end{gather*}
 therefore tends to $\infty$ as $t \to \infty$, and by \eqref{eq:F-exp-ineqality} it satisfies
 \begin{gather*}
 \xi(t) < -\ln \bigl( {\rm e}^{v\sigma} {\rm e}^{-(v\sigma+vt)} \bigr) = vt ,
 \end{gather*}
so that it always stays to the left of the shock (which travels along the curve $x=vt$). In fact, asymptotically it catches up with the shock:
 \begin{gather*}
 \xi(t) = - \ln \bigl( {\rm e}^{v\sigma} \bigl( {\rm e}^{-(v\sigma+vt)} - \Ei(-(v\sigma+vt)) \bigr) \bigr) = - \ln \bigl( {\rm e}^{-vt} - {\rm e}^{v\sigma} \underbrace{\Ei(-v\sigma-vt)}_{\to 0} \bigr) \\
 \hphantom{\xi(t)}{} = vt + o(1) ,\qquad \text{as $t \to \infty$} .
 \end{gather*}
The characteristic curves to the left of this special curve, i.e., those with $\xi(0) < \Xi$, asymptotically slow down to a halt,
 \begin{gather*}
 \lim_{t \to \infty} \xi(t) = -\ln \bigl( {\rm e}^{-\xi(0)} - {\rm e}^{-\Xi} \bigr) ,
 \end{gather*}
while the curves with $\Xi < \xi(0) < 0$ have a finite lifespan; they cease to exist when they cross the shock curve $x=vt$, which they must do, since the expression for $\xi(t)$ tends to $\infty$ in finite time.

Similarly, the characteristics starting to the right of the shock ($\xi(0)>0$) satisfy
 \begin{gather*}
 \dot\xi = \left( v - \frac{1}{\sigma + t} \right) {\rm e}^{vt-\xi} ,
 \end{gather*}
so
 \begin{gather*}
 {\rm e}^{\xi(t)} - {\rm e}^{\xi(0)} = \int_0^t \left( v - \frac{1}{\sigma + \tau} \right) {\rm e}^{v \tau} {\rm d}\tau = \int_0^{vt} \left( 1 - \frac{1}{v\sigma + s} \right) {\rm e}^{s} {\rm d}s \\
 \hphantom{{\rm e}^{\xi(t)} - {\rm e}^{\xi(0)}}{}
 = {\rm e}^{-v\sigma} \int_{v\sigma}^{v\sigma+vt} \left( 1 - \frac{1}{r} \right) {\rm e}^{r} {\rm d}r = {\rm e}^{-v\sigma} \bigl( G(v\sigma+vt) - G(v\sigma) \bigr) ,
 \end{gather*}
 where $G(z)$ is uniquely determined up to an additive constant by
 \begin{gather*}
 G'(r) = \left( 1 - \frac{1}{r} \right) {\rm e}^{r} ,\qquad r > 0 .
 \end{gather*}
 For definiteness, let us take
 \begin{gather*}
 G(r) = {\rm e}^r - \Ei(r) \qquad
 \bigl( = F(-r) \bigr) .
 \end{gather*}
 Thus, the characteristic curves to the right of the shock are
 \begin{gather*}
 x = \xi(t) = \ln \bigl( {\rm e}^{\xi(0)} + {\rm e}^{-v\sigma} \bigl( G(v\sigma+vt) - G(v\sigma) \bigr) \bigr) .
 \end{gather*}
This is an increasing function of $\xi(0)$, so different characteristic curves never intersect. Note that~$G(r)$ is decreasing for $0 < r \le 1$ and increasing for $r \ge 1$. This means that $\xi(t)$ is increasing for all $t \ge 0$ if $\sigma v \ge 1$, and initially decreasing and then increasing if $\sigma v < 1$. This is as expected, since $v < 1/\sigma$ means that the right part of the shockpeakon formed at the collision dips down to $u<0$ (recall that $u(0^\pm,0) = v \mp \frac{1}{\sigma}$), making the characteristics to the right of the shock go left until the shock strength has decayed enough for $u$ to be positive everywhere. Thus, all these characteristics turn from left to right at the same instant, namely when $\inf\limits_{x \in \R} u(x,t) = v - \frac{1}{\sigma+t}$ becomes zero so that $u$ is identically zero to the right of the shock, i.e., when $t = \frac{1}{v} - \sigma$; this can be seen in Fig.~\ref{fig:DP-p-ap-asymm}.

 These curves all have a finite lifespan, since the shock catches up with each one of them sooner or later. Indeed,
 \begin{gather*}
vt < \xi(t) \iff {\rm e}^{vt} < {\rm e}^{\xi(0)} + {\rm e}^{-v\sigma} \bigl( G(v\sigma+vt) - G(v\sigma) \bigr) \\
\hphantom{vt < \xi(t)}{} \iff {\rm e}^{v\sigma + vt} - G(v\sigma+vt) < {\rm e}^{v\sigma + \xi(0)} - G(v\sigma) \\
\hphantom{vt < \xi(t)}{} \iff \Ei(v\sigma + vt) < {\rm e}^{v\sigma + \xi(0)} - G(v\sigma) .
 \end{gather*}
Here, the right-hand side is a constant greater than ${\rm e}^{v\sigma+0} - G(v\sigma) = \Ei(v\sigma)$, and the left-hand side increases from $\Ei(v\sigma)$ to~$\infty$ as $t$ goes from $0$ to~$\infty$, so there is a unique value of~$t$ (depending on $\xi(0)$, and of course also on $v\sigma$) for which the shock reaches the characteristic curve.
\end{Example}

\section{Novikov ghostpeakons}\label{sec:Novikov-ghost}

The derivation of the ghostpeakon formulas for Novikov's equation is rather similar to what we have seen for the Camassa--Holm and Degasperis--Procesi equations in Theorems~\ref{thm:CH-ghost} and~\ref{thm:DP-ghost} -- especially the latter, as far as notation is concerned. As in the DP case, we formulate the theorems for pure peakon solutions first, and comment on their validity for mixed peakon--antipeakon solutions in Remark~\ref{rem:Novikov-ghost-antipeakons}.

\begin{Theorem} \label{thm:Novikov-ghost} Fix some $p$ with $0 \le p \le N$. The solution of the Novikov $(N+1)$-peakon ODEs~\eqref{eq:Novikov-peakon-ODEs-shorthand} with $x_1 < \dotsb < x_{N+1}$ and all amplitudes $m_k(t)$ positive except for $m_{N+1-p}(t) = 0$ is as follows: the position of the ghostpeakon is given by
 \begin{gather} \label{eq:Novikov-ghost-general}
 x_{N+1-p}(t) = \frac12 \ln\frac{Z_{p+1} + \theta Z_{p}}{W_{p} + \theta W_{p-1}} ,\qquad 0 < \theta < \infty ,
 \end{gather}
 while the other peakons are given by the general solution formulas \eqref{eq:Novikov-generalsolution} up to renumbering:
 \begin{gather}
 x_{N+1-k}(t) = \begin{cases}
 \dfrac12 \ln\dfrac{Z_{k+1}}{W_{k}}, & 0 \le k < p, \vspace{1mm}\\
 \dfrac12 \ln\dfrac{Z_{k}}{W_{k-1}}, & p < k \le N,
 \end{cases}\nonumber\\
 m_{N+1-k}(t) =
 \begin{cases}
 \dfrac{\sqrt{Z_{k+1} W_{k}}}{U_{k+1} U_{k}}, & 0 \le k < p, \vspace{1mm}\\
 \dfrac{\sqrt{Z_{k} W_{k-1}}}{U_{k} U_{k-1}}, & p < k \le N.
 \end{cases}\label{eq:Novikov-N-peakons-renumbered}
 \end{gather}
\end{Theorem}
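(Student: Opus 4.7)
The plan is to follow the same strategy as in the proofs of Theorems~\ref{thm:CH-ghost} and~\ref{thm:DP-ghost}: take the pure $(N+1)$-peakon solution formulas~\eqref{eq:Novikov-generalsolution}, augment the $N$ target spectral parameters with an auxiliary pair $(\lambda_{N+1},b_{N+1}(0))$, and then send $\lambda_{N+1}\to\infty$ with $b_{N+1}(0)\to 0$ in a coordinated way so that exactly one amplitude collapses to zero while the others reduce to the desired $N$-peakon formulas and a finite ghostpeakon position emerges. The appropriate reparametrization in the Novikov case is $\lambda_{N+1}=1/\varepsilon$ together with $b_{N+1}(0)^2 = \varepsilon^{2p-1}\theta$ (equivalently, $b_{N+1}(0) = \varepsilon^{p-1/2}\sqrt{\theta}$). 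The exponent $2p-1$, and the fact that it is the \emph{square} of $b_{N+1}(0)$ that scales linearly with $\theta$, is forced by the square root in the amplitude formula $m_k = \sqrt{Z_k W_{k-1}}/(U_k U_{k-1})$, and it is the unique choice for which both the $\theta^0$ and $\theta^1$ contributions to $\tilde Z_{p+1}$ and $\tilde W_p$ survive the limit $\varepsilon \to 0^+$.

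Splitting each sum defining $\tilde U_k^a = U_k^a(N+1)$ according to whether $i_k = N+1$, and using $(\varepsilon\lambda_{i_s}-1)^2/(\varepsilon\lambda_{i_s}+1) = 1 + \order{\varepsilon}$, yields the expansion $\tilde U_k^a = U_k^a + U_{k-1}^a\,\sqrt{\theta}\,{\rm e}^{\varepsilon t}\,\varepsilon^{p-a-k+1/2}(1+\order{\varepsilon})$. Substituting into the $2\times 2$ determinants $\tilde Z_k = \tilde U_k^{-1}\tilde U_k^0 - \tilde U_{k-1}^0\tilde U_{k+1}^{-1}$ and $\tilde W_k = \tilde U_k^0\tilde U_k^1 - \tilde U_{k-1}^1\tilde U_{k+1}^0$ produces polynomials of degree at most two in $\sqrt{\theta}$; the crucial point is that the linear-in-$\sqrt{\theta}$ cross-terms at the lowest surviving power of $\varepsilon$ cancel by the trivial antisymmetry $XY-YX=0$, while the $\theta$-coefficients collapse (by a second identity of Desnanot--Jacobi type, or by direct computation) to $Z_{k-1}\,\varepsilon^{2p-2k+2}$ and $W_{k-1}\,\varepsilon^{2p-2k}$, respectively. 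Tracking the surviving powers of $\varepsilon$ then yields, as $\varepsilon\to 0^+$, the asymptotics $\tilde Z_{p+1}\to Z_{p+1}+\theta Z_p$ and $\tilde W_p\to W_p+\theta W_{p-1}$; $\tilde Z_k\to Z_k$ and $\tilde W_k\to W_k$ for $k<p$; and, for $k>p$, both $\tilde Z_{k+1}$ and $\tilde W_k$ diverge at matched rates, so that the ratios entering the position and amplitude formulas converge to the expressions stated in~\eqref{eq:Novikov-ghost-general} and~\eqref{eq:Novikov-N-peakons-renumbered}. The ghost amplitude $m_{N+1-p} = \sqrt{\tilde Z_{p+1}\tilde W_p}/(\tilde U_{p+1}\tilde U_p)$ tends to $0$ because the numerator remains bounded while $\tilde U_{p+1}^0$ diverges like $\sqrt{\theta}\,U_p^0\,\varepsilon^{-1/2}$. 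As in the CH proof, since the peakon ODEs are satisfied for all $\varepsilon>0$ and the reparametrized expressions extend meromorphically to $\varepsilon=0$, they continue to satisfy the ODEs in the limit.

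The main technical obstacle is the precise bookkeeping needed to verify that all the vanishing $\sqrt{\theta}$-contributions really do cancel, that the surviving $\theta$-coefficients agree exactly with $Z_{k-1}$ and $W_{k-1}$, and that the diverging factors in $\tilde Z$, $\tilde W$, and $\tilde U$ match across the sub-cases $k<p$, $k=p$, $k=p+1$, $k>p+1$ so as to yield finite, correct limits for the non-ghost positions and amplitudes, and a vanishing amplitude at the ghost site. This is a direct analogue of the $\varepsilon$-bookkeeping in the proof of Theorem~\ref{thm:DP-ghost}, made slightly more delicate by the square root in the amplitude formula and the resulting half-integer exponent in the substitution for $b_{N+1}(0)$.
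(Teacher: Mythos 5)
Your proposal is correct and follows essentially the same route as the paper's proof: the same reparametrization $\theta=\lambda_{N+1}^{2p-1}b_{N+1}(0)^2$ (your $\varepsilon=1/\lambda_{N+1}$ is simply the square of the paper's $\varepsilon=1/\sqrt{\lambda_{N+1}}$), the same expansion of $\tilde U_k^a$ and of the determinants $\tilde Z_k$, $\tilde W_k$, and the same case analysis $k<p$, $k=p$, $k>p$ with the ghost amplitude vanishing because $\tilde U_{p+1}$ diverges while the other factors stay finite. Only minor bookkeeping imprecisions differ from the paper (e.g.\ $\tilde Z_k\to Z_k$ in fact holds for all $k\le p$, which you need when treating $x_{N+1-k}$ at $k=p-1$, and the linear-in-$\sqrt{\theta}$ cross terms cancel only to leading order, leaving a harmless subleading term with a determinant coefficient, exactly as in the paper's displayed expansion).
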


\begin{proof} Consider the reparametrization
 \begin{gather} \label{eq:Novikov-reparametrization}
 \varepsilon = \frac{1}{\sqrt{\lambda_{N+1}}} ,\qquad \theta = \lambda_{N+1}^{2p-1} b_{N+1}(0)^2 ,
 \end{gather}
 which (for $\varepsilon>0$) is equivalent to
 \begin{gather} \label{eq:Novikov-alt-reparametrization}
 \lambda_{N+1} = \frac{1}{\varepsilon^2} ,\qquad b_{N+1}(0) = \varepsilon^{2p-1} \sqrt{\theta} .
 \end{gather}
 The same computation as in the proof of Theorem~\ref{thm:DP-ghost}, but with $\varepsilon^2$, $p-\frac12$ and $\sqrt{\theta}$ instead of~$\varepsilon$,~$p$ and~$\theta$, gives
 \begin{gather*}
 \tilde U_{k}^a = U_{k}^a + U_{k-1}^a \sqrt{\Theta} \varepsilon^{2(p-k-a)+1} ( 1 + \order{\varepsilon} ) \qquad (\text{as $\varepsilon \to 0$}) ,
 \end{gather*}
 where $\Theta = \Theta(t) = \lambda_{N+1}^{2p-1} b_{N+1}(t)^2 = \theta {\rm e}^{2 \varepsilon^2 t}$, and thus
 \begin{gather*}
 \tilde W_{k} =
 \begin{vmatrix} U_{k}^0 + U_{k-1}^0 \sqrt{\Theta} \varepsilon^{2(p-k)+1} ( 1 + \order{\varepsilon} ) & U_{k-1}^1 + U_{k-2}^1 \sqrt{\Theta} \varepsilon^{2(p-k)+1} ( 1 + \order{\varepsilon}) \vspace{1mm}\\ U_{k+1}^0 + U_{k}^0 \sqrt{\Theta} \varepsilon^{2(p-k)-1} ( 1 + \order{\varepsilon} ) & U_{k}^1 + U_{k-1}^1 \sqrt{\Theta} \varepsilon^{2(p-k)-1} ( 1 + \order{\varepsilon} ) \end{vmatrix}
 \\
 \hphantom{\tilde W_{k}}{} =
 W_k +
 \begin{vmatrix} U_{k-1}^0 & U_{k-2}^1 \\[1ex] U_{k+1}^0 & U_{k}^1 \end{vmatrix}
 \sqrt{\Theta} \varepsilon^{2(p-k)+1} ( 1 + \order{\varepsilon} )
 + W_{k-1} \Theta \varepsilon^{4(p-k)} ( 1 + \order{\varepsilon} )
 \\
\hphantom{\tilde W_{k}}{} =
 \begin{cases}
 W_{k} + \order{\varepsilon}, & k < p, \vspace{1mm}\\
 W_{p} + W_{p-1} \Theta + \order{\varepsilon}, & k = p, \vspace{1mm}\\
 \dfrac{W_{k-1} \Theta + \order{\varepsilon}}{\varepsilon^{4(k-p)}}, & k > p.
 \end{cases}
 \end{gather*}
 Similarly,
 \begin{gather*}
 \tilde Z_{k} =
 \begin{vmatrix} U_{k}^{-1} + U_{k-1}^{-1} \sqrt{\Theta} \varepsilon^{2(p-k)+3} ( 1 + \order{\varepsilon} ) & U_{k-1}^0 + U_{k-2}^0 \sqrt{\Theta} \varepsilon^{2(p-k)+3} ( 1 + \order{\varepsilon}) \vspace{1mm}\\ U_{k+1}^{-1} + U_{k}^{-1} \sqrt{\Theta} \varepsilon^{2(p-k)+1} ( 1 + \order{\varepsilon} ) & U_{k}^0 + U_{k-1}^0 \sqrt{\Theta} \varepsilon^{2(p-k)+1} ( 1 + \order{\varepsilon} ) \end{vmatrix} \\
 \hphantom{\tilde Z_{k}}{} =
 Z_k +
 \begin{vmatrix} U_{k-1}^{-1} & U_{k-2}^0 \vspace{1mm}\\ U_{k+1}^{-1} & U_{k}^0 \end{vmatrix}
 \sqrt{\Theta} \varepsilon^{2(p-k)+3} ( 1 + \order{\varepsilon} )
 + Z_{k-1} \Theta \varepsilon^{4(p-k)+4} ( 1 + \order{\varepsilon} )
 \\
\hphantom{\tilde Z_{k}}{} =
 \begin{cases}
 Z_{k} + \order{\varepsilon}, & k < p+1, \vspace{1mm}\\
 Z_{p+1} + Z_{p} \Theta + \order{\varepsilon}, & k = p+1, \vspace{1mm}\\
 \dfrac{Z_{k-1} \Theta + \order{\varepsilon}}{\varepsilon^{4(k-p-1)}}, & k > p+1.
 \end{cases}
 \end{gather*}
 Thus, the peakon solution formulas reduce to
 \begin{gather*}
 x_{N+1-k}(t) =
 \dfrac12 \ln\dfrac{\tilde Z_{k+1}}{\tilde W_{k}} =
 \begin{cases}
 \dfrac12 \ln\dfrac{Z_{k+1} + \order{\varepsilon}}{W_{k} + \order{\varepsilon}}, & k < p, \vspace{1mm}\\
 \dfrac12 \ln\dfrac{Z_{p+1}^0 + Z_{p}^0 \Theta + \order{\varepsilon}}{W_{p}^1 + W_{p-1}^1 \Theta + \order{\varepsilon}}, & k = p, \vspace{1mm}\\
 \dfrac12 \ln\dfrac{Z_{k}^0 \Theta + \order{\varepsilon}}{W_{k-1}^1 \Theta + \order{\varepsilon}}, & k > p,
 \end{cases}
 \end{gather*}
 and
 \begin{gather*}
 m_{N+1-k}(t) = \dfrac{\sqrt{\tilde Z_{k+1} \tilde W_{k}}}{\tilde U_{k+1} \tilde U_{k}} \\
\hphantom{m_{N+1-k}(t)}{} =
 \begin{cases}
 \dfrac{\sqrt{( Z_{k+1} + \order{\varepsilon}) ( W_{k} + \order{\varepsilon})}}{( U_{k+1} + \order{\varepsilon}) ( U_{k} + \order{\varepsilon} )}, & k < p, \vspace{1mm}\\
 \dfrac{\varepsilon \sqrt{\bigl( Z_{p+1} + Z_{p}^0 \sqrt{\Theta} + \order{\varepsilon} \bigr) \bigl( W_{p} + W_{p-1} \sqrt{\Theta} + \order{\varepsilon} \bigr)}}{\bigl( U_{p} \sqrt{\Theta} + \order{\varepsilon} \bigr) \bigl( U_{p} + \order{\varepsilon} \bigr)}, & k = p, \vspace{1mm}\\
 \dfrac{\sqrt{( Z_{k} \Theta + \order{\varepsilon}) ( W_{k-1} \Theta + \order{\varepsilon} )}}{\bigl( U_{k} \sqrt{\Theta} + \order{\varepsilon} \bigr) \bigl( U_{k-1} \sqrt{\Theta} + \order{\varepsilon} \bigr)}, & k > p.
 \end{cases}
 \end{gather*}
 It follows that $m_{N+1-p} \to 0$ as $\varepsilon \to 0$, while the other expressions tend to those given in \eqref{eq:Novikov-ghost-general} and~\eqref{eq:Novikov-N-peakons-renumbered}.
\end{proof}

\begin{Corollary} \label{cor:Novikov-add-a-ghostpeakon} For the Novikov pure $N$-peakon solution given by~\eqref{eq:Novikov-generalsolution},
 \begin{gather*}
 x_{N+1-k}(t) = \frac12 \ln\dfrac{Z_{k}}{W_{k-1}} , \qquad m_{N+1-k}(t) = \frac{\sqrt{Z_{k} W_{k-1}}}{U_{k} U_{k-1}} ,\qquad 1 \le k \le N,
 \end{gather*}
 the characteristic curves $x = \xi(t)$ in the $k$th interval from the right,
 \begin{gather*}
 x_{N-k}(t) < \xi(t) < x_{N+1-k}(t) , \qquad 0 \le k \le N ,
 \end{gather*}
 are given by
 \begin{gather}
 \label{eq:Novikov-add-a-ghostpeakon}
 \xi(t)
 = \frac12 \ln\frac{Z_{k+1} + \theta Z_{k}}{W_{k} + \theta W_{k-1}}
 ,\qquad
 \theta > 0.
 \end{gather}
\end{Corollary}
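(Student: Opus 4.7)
The plan is to deduce this corollary as an immediate consequence of Theorem~\ref{thm:Novikov-ghost} by a relabeling argument, in exact parallel with the way Corollary~\ref{cor:CH-add-a-ghostpeakon} is deduced from Theorem~\ref{thm:CH-ghost} and Corollary~\ref{cor:DP-add-a-ghostpeakon} from Theorem~\ref{thm:DP-ghost}. The crucial observation that makes this work for Novikov's equation is that the ODE $\dot x_k = u(x_k)^2$ appearing in \eqref{eq:Novikov-peakon-ODEs-shorthand} is precisely the characteristic equation $\dot \xi = u(\xi,t)^2$ for Novikov's equation, and moreover a peakon with vanishing amplitude $m_k \equiv 0$ contributes nothing to the wave $u(x,t)$. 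Hence any ghostpeakon trajectory is automatically a characteristic curve for the $N$-peakon solution carried by the remaining nonzero-amplitude peakons.

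Concretely, given the pure $N$-peakon solution in the statement of the corollary and a choice of interval index $k \in \{0,1,\dots,N\}$, I would invoke Theorem~\ref{thm:Novikov-ghost} with $p=k$ to insert a single ghostpeakon at site $N+1-p$, i.e., between $x_{N-k}$ and $x_{N+1-k}$. The theorem guarantees two things: first, that the $N$ nonzero-amplitude peakons of the augmented $(N+1)$-peakon configuration are given, after renumbering, by \eqref{eq:Novikov-N-peakons-renumbered} with the original spectral data $\{\lambda_i,b_i\}_{i=1}^N$, so they coincide with the peakons of the hypothesis; and second, that the ghostpeakon's position is given by \eqref{eq:Novikov-ghost-general}, which after the substitution $p=k$ and renaming $x_{N+1-p}(t)=\xi(t)$ is exactly \eqref{eq:Novikov-add-a-ghostpeakon}.

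It remains to check that as the parameter $\theta$ ranges over $(0,\infty)$, this one-parameter family sweeps out \emph{every} characteristic curve in the strip $x_{N-k}(t) < \xi(t) < x_{N+1-k}(t)$. For the endpoint behavior, the limits $\theta \to 0^+$ and $\theta \to \infty$ in \eqref{eq:Novikov-add-a-ghostpeakon} reduce to $\tfrac12\ln(Z_{k+1}/W_k) = x_{N-k}(t)$ and $\tfrac12\ln(Z_k/W_{k-1}) = x_{N+1-k}(t)$, respectively, by \eqref{eq:Novikov-generalsolution}. At any fixed $t$, the map $\theta \mapsto \xi(t;\theta)$ is a M\"obius-type transformation $\theta\mapsto \tfrac12 \ln\frac{A+\theta B}{C+\theta D}$, hence strictly monotone in $\theta$ whenever $AD\neq BC$; the required nondegeneracy follows from the strict ordering $x_{N-k}(t)<x_{N+1-k}(t)$. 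Since solutions of the locally Lipschitz ODE $\dot\xi = u(\xi,t)^2$ cannot cross within the open strip (which contains no peakons), the continuous monotone bijection from $(0,\infty)$ onto $(x_{N-k}(t),x_{N+1-k}(t))$ at time $t$ parametrizes every characteristic curve in the strip exactly once via its value at that instant.

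The main obstacle is essentially absent here, since all the analytical work has already been carried out in proving Theorem~\ref{thm:Novikov-ghost}; what remains is an exercise in reindexing and a routine check of the endpoint behavior of the explicit formula. The only mild care required is in observing that the spectral data for the $N$-peakon background are preserved when the ghostpeakon is adjoined, which is built into the statement of Theorem~\ref{thm:Novikov-ghost}.
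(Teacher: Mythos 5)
Your proposal is correct and follows essentially the same route as the paper: Corollary~\ref{cor:Novikov-add-a-ghostpeakon} is obtained from Theorem~\ref{thm:Novikov-ghost} by relabeling (taking $p=k$ and renaming the ghost position as $\xi$), using the fact, already noted in Section~\ref{sec:background}, that ghostpeakon trajectories solve $\dot\xi=u(\xi,t)^2$ and hence are characteristic curves. Your extra check that the family sweeps out the whole strip (endpoint limits as $\theta\to 0^+,\infty$ and monotonicity in $\theta$) matches the remark the paper makes in the Camassa--Holm case and is a harmless, correct addition.
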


\begin{Remark} \label{rem:Novikov-ghost-antipeakons} Pure antipeakon solutions are obtained from pure peakon solutions by changing~$u$ to~$-u$, which in terms of spectral data is effected by keeping the eigenvalues~$\lambda_k$ and changing the sign of every~$b_k$. In the proof of Theorem~\ref{thm:Novikov-ghost}, we see from~\eqref{eq:Novikov-alt-reparametrization} that we can change the sign of~$b_{N+1}$ by letting $\varepsilon \to 0^-$ instead of~$0^+$, i.e., we take $\varepsilon = -1/\sqrt{\lambda_{N+1}}$ in~\eqref{eq:Novikov-reparametrization}. Everything else is unchanged, including the positivity of~$\theta$, so the ghostpeakon formulas are valid without change for pure antipeakon solutions. Similarly it is seen that they are valid for (conservative) peakon--antipeakon solutions, in the \emph{generic} case where the (possibly complex) eigenvalues are \emph{simple} and have \emph{positive real part}; such solutions are still given by the usual formulas~\eqref{eq:Novikov-generalsolution}. For every such solution~$u$ there is a corresponding solution~$-u$ obtained by changing the signs of all~$b_k$, and the two are handled by taking $\varepsilon = \pm 1/\sqrt{\lambda_{N+1}}$. Unlike the DP case, there is never any need to use negative~$\theta$.

Novikov's equation also admits a great variety of non-generic peakon--antipeakon solutions, where some eigenvalues may have multiplicity greater than one and/or lie on the imaginary axis. Those solutions are described by separate formulas (see Kardell and Lundmark~\cite{kardell-lundmark:2016p:novikov-peakon-antipeakon}), where the quantities $U_k^a$ (and consequently $W_k$ etc.) are modified via limiting procedures that commute with the limit $\varepsilon \to 0$ considered here. It follows that the ghostpeakon formulas are valid also for solutions of this kind, provided only that all $W_k$ and~$Z_k$ are replaced with their suitably modified counterparts.
\end{Remark}

Next, we will prove formula~\eqref{eq:Novikov-ughost} for the value of~$u$ along a characteristic curve, which will allow us make parametric plots of multipeakon solutions,
\begin{gather*}
 (\theta,t) \mapsto (x,t,u) = \bigl( \xi(t,\theta), t, u(\xi(t,\theta),t) \bigr) .
\end{gather*}
We can use $u(\xi)=\dot \xi$ for this purpose in the CH and DP cases, but we need another formula for~$u(\xi)$ in the Novikov case, since $\dot \xi = u(\xi)^2$ gives us no information about the sign of~$u(\xi)$.

\begin{Theorem} \label{thm:Novikov-u-and-ughost} The Novikov $N$-peakon solution~\eqref{eq:Novikov-generalsolution} satisfies
 \begin{gather} \label{eq:Novikov-u}
 u(x_{N+1-k}) = \frac{Y_k}{\sqrt{Z_k W_{k-1}}} ,
 \end{gather}
 where
 \begin{gather*}
 Y_k = \begin{vmatrix}
 T_k & V_{k-2} \\
 T_{k+1} & V_{k-1}
 \end{vmatrix}
 =
 \begin{vmatrix}
 U_k^{-1} & U_{k-2}^1 \\
 U_{k+1}^{-1} & U_{k-1}^1
 \end{vmatrix} .
 \end{gather*}
 Moreover, if $x=\xi(t)$ is the characteristic curve~\eqref{eq:Novikov-add-a-ghostpeakon}, then
 \begin{gather} \label{eq:Novikov-ughost}
 u(\xi) = \frac{Y_{k+1} + \theta Y_k}{\sqrt{(Z_{k+1} + \theta Z_k)(W_k + \theta W_{k-1})}} .
 \end{gather}
\end{Theorem}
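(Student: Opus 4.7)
The plan is to establish formula~\eqref{eq:Novikov-u} first and then obtain formula~\eqref{eq:Novikov-ughost} by applying it to an $(N+1)$-peakon system carrying a single ghostpeakon and passing to the limit $\varepsilon\to 0$ in the spirit of Theorem~\ref{thm:Novikov-ghost}.

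For~\eqref{eq:Novikov-u}, I would start from the Novikov peakon ODE $\dot x_{N+1-k} = u(x_{N+1-k})^2$ together with the explicit formula $x_{N+1-k}=\tfrac12\ln(Z_k/W_{k-1})$ from~\eqref{eq:Novikov-generalsolution}. Logarithmic differentiation together with $\dot b_i = b_i/\lambda_i$ yields
\begin{gather*}
u(x_{N+1-k})^2 \;=\; \dot x_{N+1-k} \;=\; \frac{\dot Z_k\, W_{k-1} - Z_k\,\dot W_{k-1}}{2\, Z_k\, W_{k-1}},
\end{gather*}
so the claim reduces to the bilinear determinantal identity $\dot Z_k\, W_{k-1} - Z_k\, \dot W_{k-1} = 2 Y_k^2$. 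Since each $\dot U_k^a$ equals the determinant for $U_k^a$ with the column indices in the first column reduced by one (because $\dot A_a = A_{a-1}$ in the notation of~\eqref{eq:Delta-determinant}), both $\dot W_{k-1}$ and $\dot Z_k$ unfold into bilinears in the same family of $U_j^a$'s that enter $Y_k$. The identity is then verified by applying Desnanot--Jacobi (Lewis Carroll) contractions to appropriate $3\times 3$ (and, for the cross terms, $4\times 4$) minors of the underlying Hankel/modified-Cauchy matrix, in direct analogy with Remark~\ref{rem:Matsuno}. The sign in $u(x_k) = \pm Y_k/\sqrt{Z_k W_{k-1}}$ is pinned down by continuity: in the pure-peakon case both $u(x_k)$ and $Y_k$ are positive (the latter because the Lundmark--Szmigielski/Kardell--Lundmark combinatorial expansions used to prove positivity of $W_k$ and $Z_k$ produce, mutatis mutandis, a positive sum for $Y_k$ as well), and by analytic continuation in the spectral data the sign choice persists throughout the generic regime.

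For~\eqref{eq:Novikov-ughost} I would invoke exactly the construction used in the proof of Theorem~\ref{thm:Novikov-ghost}: augment the $N$-peakon spectral data with $\lambda_{N+1}=1/\varepsilon^2$ and $b_{N+1}(0)=\varepsilon^{2p-1}\sqrt{\theta}$ (taking $p=k$). Then, in the $(N+1)$-peakon system, the characteristic $x=\xi(t)$ is the ghost trajectory and applying~\eqref{eq:Novikov-u} at that site gives
\begin{gather*}
u(\xi) \;=\; \frac{\tilde Y_{k+1}}{\sqrt{\tilde Z_{k+1}\,\tilde W_{k}}}.
\end{gather*}
The expansions $\tilde Z_{k+1} = Z_{k+1} + \theta Z_k + \order{\varepsilon}$ and $\tilde W_k = W_k + \theta W_{k-1} + \order{\varepsilon}$ were already carried out in the proof of Theorem~\ref{thm:Novikov-ghost}. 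The only new ingredient is the analogous expansion $\tilde Y_{k+1} = Y_{k+1} + \theta Y_k + \order{\varepsilon}$, which I would obtain by substituting the expansions of $\tilde U_j^a$ into $\tilde Y_{k+1} = \tilde U_{k+1}^{-1}\tilde U_k^1 - \tilde U_{k-1}^1 \tilde U_{k+2}^{-1}$ and observing that the two leading $\sqrt{\Theta}\,\order{\varepsilon^{-1}}$ cross terms cancel exactly, leaving the $\varepsilon^0$ contributions $U_{k+1}^{-1}U_k^1 - U_{k-1}^1 U_{k+2}^{-1}$ and the $\Theta^1$ contributions $U_k^{-1}U_{k-1}^1 - U_{k-2}^1 U_{k+1}^{-1}$, which are precisely $Y_{k+1}$ and $Y_k$ respectively. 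Passing to $\varepsilon\to 0^+$ (so that $\Theta\to\theta$) then yields~\eqref{eq:Novikov-ughost}.

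The main obstacle will be the bilinear identity $\dot Z_k W_{k-1} - Z_k \dot W_{k-1} = 2 Y_k^2$ underlying~\eqref{eq:Novikov-u}. Unlike the CH case, where only the plain Hankel determinants $\Delta_k^a$ appear and the requisite identities are by now standard, the Novikov $U_k^a$ carry the extra $\Gamma$ factors and, more importantly, $Y_k^2$ is a square rather than a single determinant, so a one-step Dodgson condensation will not suffice. I expect the cleanest route will be to first rewrite the product $Y_k \sqrt{Z_k W_{k-1}}$ as a suitable larger determinant via a Plücker-type relation, and only then close the loop with a final Desnanot--Jacobi contraction; alternatively, one can mimic Matsuno's strategy (Remark~\ref{rem:Matsuno}) and verify the identity by a careful bilinear comparison after factoring out the time-dependence through $\dot b_i = b_i/\lambda_i$.
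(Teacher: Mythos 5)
Your treatment of the second claim~\eqref{eq:Novikov-ughost} coincides with the paper's: augment with a ghost via the reparametrization of Theorem~\ref{thm:Novikov-ghost}, apply~\eqref{eq:Novikov-u} at the ghost site, expand $\tilde Y_{k+1}=Y_{k+1}+\Theta Y_k+\order{\varepsilon}$ (the $\varepsilon^{-1}$ cross terms cancelling) and let $\varepsilon\to 0$. The problem is the first claim~\eqref{eq:Novikov-u}, where your route through $\dot x_{N+1-k}=u(x_{N+1-k})^2$ reduces everything to the bilinear identity $\dot Z_k W_{k-1}-Z_k\dot W_{k-1}=2Y_k^2$ \emph{plus} a sign determination, and neither of these is actually established. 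You yourself flag the identity as ``the main obstacle'' and offer only candidate strategies (``I expect the cleanest route will be\dots''), so the core of the claim is deferred rather than proved. Moreover, the one concrete mechanism you do invoke --- that $\dot U_k^a$ is the same determinant with the first-column indices lowered, because $\dot A_a=A_{a-1}$ --- is a fact about the CH Hankel determinants $\Delta_k^a$ of~\eqref{eq:Delta-determinant}; the Novikov quantities $U_k^a$ are not Hankel determinants in the moments $A_m$ (the $\Gamma$ factors destroy that structure), so this step does not carry over as stated. The sign argument is also incomplete: knowing only $u(x_{N+1-k})^2=Y_k^2/(Z_kW_{k-1})$, you need $Y_k>0$ in the pure peakon case, but the positivity results you cite (for $W_k$ and $Z_k$) concern $2\times 2$ determinants in $U$'s with \emph{adjacent} indices and superscripts, whereas $Y_k=T_kV_{k-1}-T_{k+1}V_{k-2}$ has gaps of two in both; its positivity is exactly as nontrivial as the formula itself and cannot be waved through ``mutatis mutandis''.

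For comparison, the paper proves~\eqref{eq:Novikov-u} without any differentiation or new determinant identities: one substitutes the solution formulas into $u(x_{N+1-k})=\sum_i m_{N+1-i}\,{\rm e}^{-\abs{x_{N+1-k}-x_{N+1-i}}}$, splits the sum at $i=k$, and observes that the two partial sums telescope because $W_{i-1}=U_{i-1}V_{i-1}-U_iV_{i-2}$ and $Z_i=T_iU_i-T_{i+1}U_{i-1}$; a short algebraic rearrangement of the resulting numerator, using the same two defining relations, produces $U_{k-1}U_kY_k$ and hence $Y_k/\sqrt{Z_kW_{k-1}}$ with the correct sign built in. If you want to salvage your route, you must supply a genuine proof of $\dot Z_kW_{k-1}-Z_k\dot W_{k-1}=2Y_k^2$ (which, being equivalent to the squared statement, is likely no easier than the direct computation) and an independent proof that $Y_k>0$ for pure peakons; as it stands, the proposal proves only the ghostpeakon half, conditional on~\eqref{eq:Novikov-u}.
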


\begin{proof} Equation~\eqref{eq:Novikov-u} is proved by direct computation (with empty sums interpreted as zero):
 \begin{gather*}
 u(x_{N+1-k}) = \sum_{i=1}^N m_{N+1-i} {\rm e}^{-\abs{x_{N+1-k} - x_{N+1-i}}} \\
 \hphantom{u(x_{N+1-k})}{} = \sum_{1 \le i < k} m_{N+1-i} \frac{{\rm e}^{x_{N+1-k}}}{{\rm e}^{x_{N+1-i}}}
 + m_{N+1-k} + \sum_{k < i \le N} m_{N+1-i} \frac{{\rm e}^{x_{N+1-i}}}{{\rm e}^{x_{N+1-k}}} \\
 \hphantom{u(x_{N+1-k})}{}
 = \sum_{1 \le i < k} \frac{\sqrt{Z_{i} W_{i-1}}}{U_{i} U_{i-1}} \frac{\sqrt{Z_k/W_{k-1}}}{\sqrt{Z_i/W_{i-1}}}
 + \frac{\sqrt{Z_{k} W_{k-1}}}{U_{k} U_{k-1}}
 + \sum_{k < i \le N} \frac{\sqrt{Z_{i} W_{i-1}}}{U_{i} U_{i-1}} \frac{\sqrt{Z_i/W_{i-1}}}{\sqrt{Z_k/W_{k-1}}}
 \\
 \hphantom{u(x_{N+1-k})}{} = \frac{\sqrt{Z_k}}{\sqrt{W_{k-1}}} \sum_{1 \le i < k} \frac{W_{i-1}}{U_{i} U_{i-1}}
 + \frac{\sqrt{Z_{k} W_{k-1}}}{U_{k} U_{k-1}}
 + \frac{\sqrt{W_{k-1}}}{\sqrt{Z_k}} \sum_{k < i \le N} \frac{Z_{i}}{U_{i} U_{i-1}}
 \\
\hphantom{u(x_{N+1-k})}{} = \frac{\sqrt{Z_k}}{\sqrt{W_{k-1}}} \sum_{1 \le i < k} \left( \frac{V_{i-1}}{U_{i}} - \frac{V_{i-2}}{U_{i-1}}\right)
 + \frac{\sqrt{Z_{k} W_{k-1}}}{U_{k} U_{k-1}}\\
\hphantom{u(x_{N+1-k})=}{}
 + \frac{\sqrt{W_{k-1}}}{\sqrt{Z_k}} \sum_{k < i \le N} \left( \frac{T_{i}}{U_{i-1}} - \frac{T_{i+1}}{U_{i}}\right) \\
 \hphantom{u(x_{N+1-k})}{} = \frac{\sqrt{Z_k}}{\sqrt{W_{k-1}}} \frac{V_{k-2}}{U_{k-1}} + \frac{\sqrt{Z_{k} W_{k-1}}}{U_{k} U_{k-1}} + \frac{\sqrt{W_{k-1}}}{\sqrt{Z_k}} \frac{T_{k+1}}{U_{k}}
 \qquad \text{(due to telescoping sums)} \\
 \hphantom{u(x_{N+1-k})}{} = \frac{Z_{k} U_{k} V_{k-2} + Z_{k} W_{k-1} + W_{k-1} U_{k-1} T_{k+1}}{\sqrt{Z_{k} W_{k-1}} U_{k} U_{k-1}} ,
 \end{gather*}
 where the numerator is
 \begin{gather*}
 Z_{k} (U_{k} V_{k-2} + W_{k-1}) + W_{k-1} U_{k-1} T_{k+1} = Z_{k} U_{k-1} V_{k-1} + W_{k-1} U_{k-1} T_{k+1} \\
 \qquad{} = U_{k-1} (Z_{k} V_{k-1} + W_{k-1} T_{k+1}) \\
\qquad{} = U_{k-1} \bigl( (T_{k} U_{k} - T_{k+1} U_{k-1} ) V_{k-1} + (U_{k-1} V_{k-1} - U_{k} V_{k-2}) T_{k+1} \bigr) \\
\qquad {} = U_{k-1} U_{k} (T_{k} V_{k-1} - T_{k+1} V_{k-2}) = U_{k-1} U_{k} Y_{k} ,
 \end{gather*}
 as desired.

 \begin{figure}[t] \centering

 \begin{tikzpicture}
 \node[anchor=south west,inner sep=0] at (0,0)
 {\includegraphics[width=127mm]{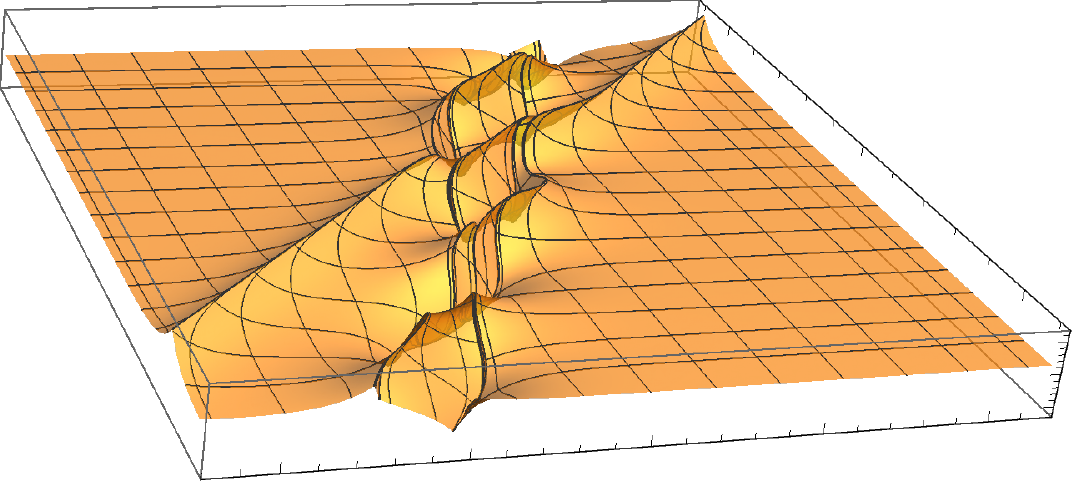}};

 \draw (7.65,0.3) node {$x$};
 \draw (5.5,0.2) node {\scriptsize $-5$};
 \draw (9.7,0.5) node {\scriptsize $5$};

 \draw (10.3,4.2) node {$t$};
 \draw (12.3,2.5) node {\scriptsize $-10$};
 \draw (8.8,5.55) node {\scriptsize $10$};

 \draw (12.75,1.4) node {$u$};
 \end{tikzpicture}

 \caption{A conservative peakon--antipeakon solution of Novikov's equation,
 with $N=5$ and spectral data \eqref{eq:Novikov-1p-4cluster-data},
 as described in Example~\ref{ex:Novikov-1p-4cluster}.
 A peakon with velocity~$1$ interacts with a breather-like cluster of four peakons/antipeakons
 which velocity~$1/2$.
 The peakon trajectories $x=x_k(t)$ are shown in Fig.~\ref{fig:Novikov-1p-4cluster}.
 The dimensions of the box are $\abs{x} \le 12$, $\abs{t} \le 12$, $\abs{u} \le 1.2$. } \label{fig:Novikov-1p-4cluster-wave}
\end{figure}

Regarding \eqref{eq:Novikov-ughost} we proceed as in the proof of Theorem~\ref{thm:Novikov-ghost}, i.e., we consider the $(N+1)$-peakon solution and make again the same substitutions as in that proof, so that the peakon at $x_{N+1-p}$ will turn into a ghostpeakon as $\varepsilon \to 0$. For the $(N+1)$-peakon solution, we have just showed that
 \begin{gather*}
 u(x_{N+1-p}) = u(x_{(N+1)+1-(p+1)}) = \frac{\tilde Y_{p+1}}{\sqrt{\tilde Z_{p+1} \tilde W_{p}}} ,
 \end{gather*}
where we know from the previous proof that
 \begin{gather*}
 \tilde Z_{p+1} = Z_{p+1} + Z_{p} \Theta + \order{\varepsilon} ,\qquad \tilde W_{p} = W_{p} + W_{p-1} \Theta + \order{\varepsilon} ,
 \end{gather*}
 and in the same way we derive
 \begin{gather*}
 \tilde Y_{p+1} = \begin{vmatrix}
 \tilde T_{p+1} & \tilde V_{p-1} \\
 \tilde T_{p+2} & \tilde V_{p}
 \end{vmatrix}
 =
 \begin{vmatrix}
 \tilde U_{p+1}^{-1} & \tilde U_{p-1}^1 \\
 \tilde U_{p+2}^{-1} & \tilde U_{p}^1
 \end{vmatrix} \\
 \hphantom{\tilde Y_{p+1}}{} =
 \begin{vmatrix}
 U_{p+1}^{-1} + U_{p}^{-1} \sqrt{\Theta} \varepsilon ( 1 + \order{\varepsilon} )
 &
 U_{p-1}^1 + U_{p-2}^1 \sqrt{\Theta} \varepsilon ( 1 + \order{\varepsilon} )
 \\[1ex]
 U_{p+2}^{-1} + U_{p+1}^{-1} \sqrt{\Theta} \varepsilon^{-1} ( 1 + \order{\varepsilon} )
 &
 U_{p}^1 + U_{p-1}^1 \sqrt{\Theta} \varepsilon^{-1} ( 1 + \order{\varepsilon} )
 \end{vmatrix}
 \\
 \hphantom{\tilde Y_{p+1}}{} = Y_{p+1} + \Theta Y_{p} + \order{\varepsilon} .
\end{gather*}
Letting $\varepsilon \to 0$ and relabelling as in Corollary~\ref{cor:Novikov-add-a-ghostpeakon}, we thus obtain~\eqref{eq:Novikov-ughost}.
\end{proof}

\begin{figure}[t] \centering

\includegraphics{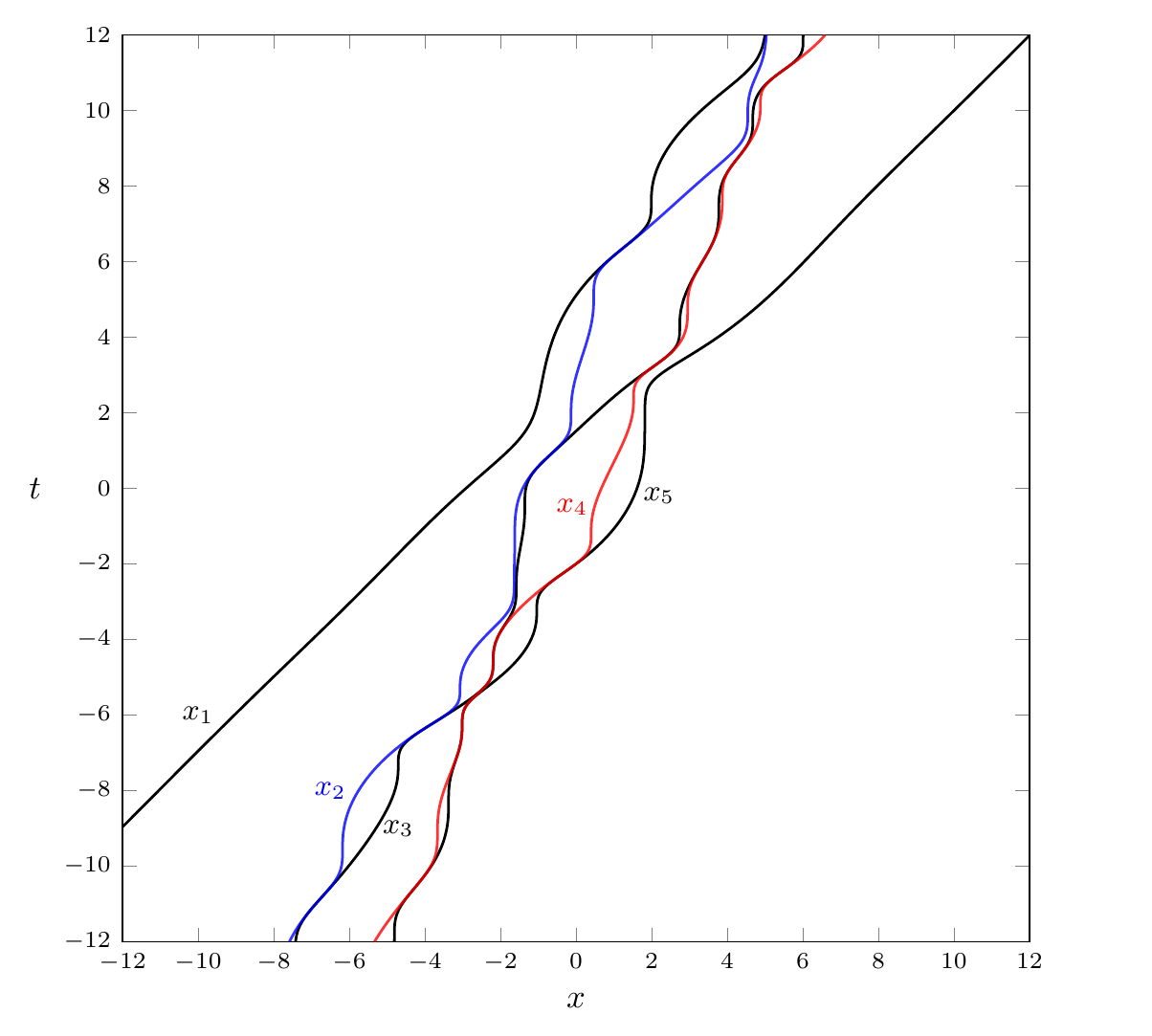}

\caption{Peakon trajectories $x=x_k(t)$ for the $N=5$ conservative peakon--antipeakon Novikov solution shown in Fig.~\ref{fig:Novikov-1p-4cluster-wave}. The curve $x=x_2(t)$ is shown in blue and $x=x_4(t)$ in red. The strict ordering $x_1 < x_2 < x_3 < x_4 < x_5$ holds for almost all~$t$, the exceptions being the isolated (but infinitely many) instants $t_c$ when a peakon--antipeakon collision occurs: $x_k(t_c) = x_{k+1}(t_c)$. Near a collision, the curves stay very close together, and therefore they appear to overlap in the figure; in fact $x_{k+1}(t)-x_k(t)$ is approximately a positive constant times $(t-t_c)^4$. See Example~\ref{ex:Novikov-1p-4cluster}.} \label{fig:Novikov-1p-4cluster}
\end{figure}

\begin{figure}[t] \centering

\includegraphics{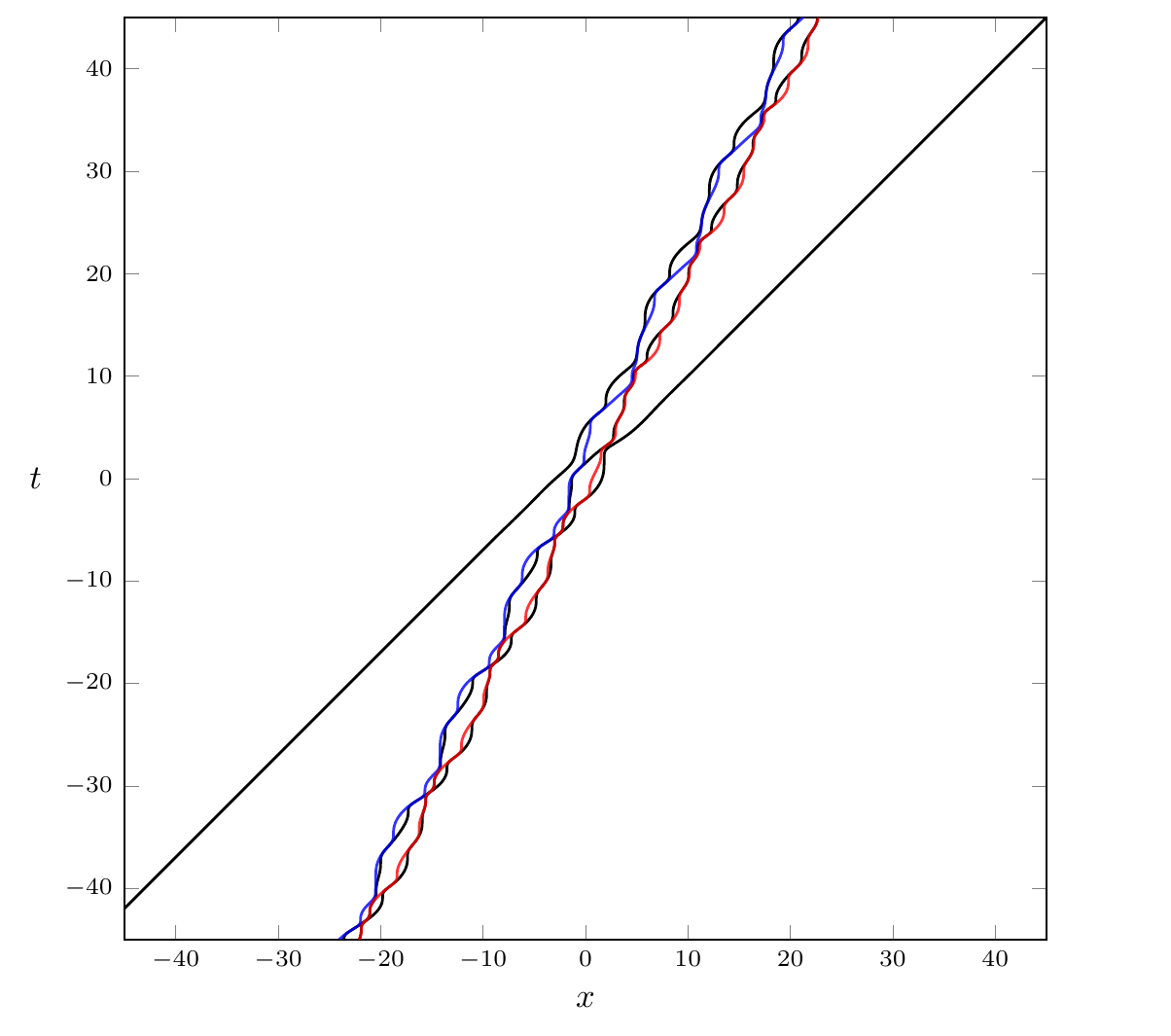}

\caption{The same as Fig.~\ref{fig:Novikov-1p-4cluster}, but on a larger domain. Note that the four-peakon cluster which emerges after the interaction with the single peakon is not merely a shifted version of the cluster seen before the interaction -- the pattern of oscillation within the cluster has also changed.} \label{fig:Novikov-1p-4cluster-zoomout}
\end{figure}

\begin{Example} \label{ex:Novikov-1p-4cluster} As an illustration, consider the solution of the $N=5$ Novikov peakon ODEs which is given by the explicit formulas \eqref{eq:Novikov-generalsolution} with the complex spectral data
\begin{gather} \label{eq:Novikov-1p-4cluster-data}
 \lambda_1 = \overline{\lambda_2} = \frac{1}{\tfrac12 + \tfrac12 {\rm i}} ,\qquad \lambda_3 = \overline{\lambda_4} = \frac{1}{\tfrac12 + {\rm i}} ,\qquad \lambda_5 = 1 ,\qquad b_k(0) = 1, \qquad 1 \le k \le 5 .
 \end{gather}
This is a peakon--antipeakon solution, where $m_k(t)$ and $m_{k+1}(t)$ blow up at those instants $t=t_c$ when a collision $x_k(t_c) = x_{k+1}(t_c)$ occurs. The function $u(x,t) = \sum\limits_{i=1}^N m_i(t) {\rm e}^{-|x-x_i(t)|}$ extends by continuity to a globally defined conservative weak solution, which is given by the formulas in Theorem~\ref{thm:Novikov-u-and-ughost} without any singularities, since the troublesome factor $U_{k-1} U_{k}$ was cancelled in the proof. The graph $u=u(x,t)$ is plotted from those formulas in Fig.~\ref{fig:Novikov-1p-4cluster-wave}, while the peakon trajectories $x=x_k(t)$ are shown in Fig.~\ref{fig:Novikov-1p-4cluster} and on a larger domain in Fig.~\ref{fig:Novikov-1p-4cluster-zoomout}.

For $1 \le k \le 4$, the reciprocal eigenvalues $1/\lambda_k$ all have real part~$1/2$, which means that as $t \to \pm \infty$ one will see a cluster of four peakons/antipeakons travelling together, with the positions $x_k(t)$ displaying oscillations superimposed on an overall drift with velocity~$1/2$. The peakons and antipeakons in the cluster interact in a breather-like manner, with infinitely many collisions taking place, in this case (asymptotically) periodically since the frequencies $\imag(1/\lambda_1)=\tfrac12$ and $\imag(1/\lambda_3)=1$ are commensurable, but quasi-periodic oscillations are also possible.

At collisions, the peakon and the antipeakon approach each other very closely; for the Novikov equation, the leading term in $x_{k+1}(t)-x_k(t)$ as $t \to t_c$ is a positive constant times $(t-t_c)^{4j}$, where $j=1$ (fourth power) is the generic case, although ``higher-order collisions'' with $j > 1$ are also possible. In contrast, for Camassa--Holm collisions the leading term is always a positive constant times the second power $(t-t_c)^2$, as shown by Beals, Sattinger and Szmigielski~\cite{beals-sattinger-szmigielski:2000:moment}.

Since $1/\lambda_5=1$, there will also be a lone peakon with asymptotic velocity~$1$. As seen in Fig.~\ref{fig:Novikov-1p-4cluster-zoomout}, the periodic pattern of the four-peakon cluster as $t \to +\infty$ (after interaction with the single peakon) does not look the same as when $t \to -\infty$ (before the interaction), so the peakon scattering process involves more than simply a phase shift in this case. Indeed, merely writing down the precise asymptotic behaviour of this five-peakon solution requires knowing the exact formulas for the four-peakon solution. We refer to Kardell and Lundmark~\cite{kardell-lundmark:2016p:novikov-peakon-antipeakon} for details.
\end{Example}

\subsection*{Acknowledgements}

This work has been in the making for a long period, during parts of which Hans Lundmark was supported by the Swedish Research Council (Veten\-skaps\-r{\aa}det, grant 2010-5822) and Budor Shuaib by the Libyan Higher Education Ministry. We are also grateful to the Department of Mathematics at Link\"oping University for financial support. Krzysztof Marciniak has been most helpful, providing many questions and comments which have improved the readability of the article tremendously. And finally we thank the referees for valuable input, especially regarding Remark~\ref{rem:Matsuno}.

\newpage
\pdfbookmark[1]{References}{ref}
\LastPageEnding


\begin{thebibliography}{99}
\footnotesize\itemsep=0pt

\bibitem{beals-sattinger-szmigielski:1999:stieltjes}
Beals R., Sattinger D.H., Szmigielski J., Multi-peakons and a theorem of
 {S}tieltjes, \href{https://doi.org/10.1088/0266-5611/15/1/001}{\textit{Inverse Problems}} \textbf{15} (1999), L1--L4,
 \href{https://arxiv.org/abs/solv-int/9903011}{arXiv:solv-int/9903011}.

\bibitem{beals-sattinger-szmigielski:2000:moment}
Beals R., Sattinger D.H., Szmigielski J., Multipeakons and the classical moment
 problem, \href{https://doi.org/10.1006/aima.1999.1883}{\textit{Adv. Math.}} \textbf{154} (2000), 229--257,
 \href{https://arxiv.org/abs/solv-int/9906001}{arXiv:solv-int/9906001}.

\bibitem{brandolese:2014:CH-local-in-space-blowup-criteria}
Brandolese L., Local-in-space criteria for blowup in shallow water and
 dispersive rod equations, \href{https://doi.org/10.1007/s00220-014-1958-4}{\textit{Comm. Math. Phys.}} \textbf{330} (2014),
 401--414, \href{https://arxiv.org/abs/1210.7782}{arXiv:1210.7782}.

\bibitem{bressan-chen-zhang:2015:CH-uniqueness-conservative-via-characteristics}
Bressan A., Chen G., Zhang Q., Uniqueness of conservative solutions to the
 {C}amassa--{H}olm equation via characteristics, \href{https://doi.org/10.3934/dcds.2015.35.25}{\textit{Discrete Contin. Dyn.
 Syst.}} \textbf{35} (2015), 25--42, \href{https://arxiv.org/abs/1401.0312}{arXiv:1401.0312}.

\bibitem{bressan-constantin:2007:global-conservative-CH}
Bressan A., Constantin A., Global conservative solutions of the
 {C}amassa--{H}olm equation, \href{https://doi.org/10.1007/s00205-006-0010-z}{\textit{Arch. Ration. Mech. Anal.}} \textbf{183}
 (2007), 215--239.

\bibitem{bressan-constantin:2007:global-dissipative-CH}
Bressan A., Constantin A., Global dissipative solutions of the
 {C}amassa--{H}olm equation, \href{https://doi.org/10.1142/S0219530507000857}{\textit{Anal. Appl. (Singap.)}} \textbf{5} (2007),
 1--27.

\bibitem{cai-chen-chen-shen:2018:novikov-lipschitz-metric}
Cai H., Chen G., Chen R.M., Shen Y., Lipschitz metric for the {N}ovikov
 equation, \href{https://doi.org/10.1007/s00205-018-1234-4}{\textit{Arch. Ration. Mech. Anal.}} \textbf{229} (2018), 1091--1137,
 \href{https://arxiv.org/abs/1611.08277}{arXiv:1611.08277}.

\bibitem{camassa-holm:1993:CH-orginal-paper}
Camassa R., Holm D.D., An integrable shallow water equation with peaked
 solitons, \href{https://doi.org/10.1103/PhysRevLett.71.1661}{\textit{Phys. Rev. Lett.}} \textbf{71} (1993), 1661--1664,
 \href{https://arxiv.org/abs/patt-sol/9305002}{arXiv:patt-sol/9305002}.

\bibitem{camassa-holm-hyman:1994:CH-new-integrable}
Camassa R., Holm D.D., Hyman J.M., A new integrable shallow water equation,
 \href{https://doi.org/10.1016/S0065-2156(08)70254-0}{\textit{Adv. Appl. Mech.}} \textbf{31} (1994), 1--33.

\bibitem{camassa-huang-lee:2005:CH-completely-integrable-numerical-scheme}
Camassa R., Huang J., Lee L., On a completely integrable numerical scheme for a
 nonlinear shallow-water wave equation, \href{https://doi.org/10.2991/jnmp.2005.12.s1.13}{\textit{J.~Nonlinear Math. Phys.}}
 \textbf{12} (2005), suppl.~1, 146--162.

\bibitem{camassa-huang-lee:2006:CH-integral-and-integrable-algorithms}
Camassa R., Huang J., Lee L., Integral and integrable algorithms for a
 nonlinear shallow-water wave equation, \href{https://doi.org/10.1016/j.jcp.2005.12.013}{\textit{J.~Comput. Phys.}} \textbf{216}
 (2006), 547--572.

\bibitem{chen-chen-liu:2018:novikov-global-conservative-weak-existence-uniqueness}
Chen G., Chen R.M., Liu Y.,
Existence and uniqueness of the global conservative weak solutions for the integrable {N}ovikov equation,
\href{https://doi.org/10.1512/iumj.2018.67.7510}{\textit{Indiana Univ. Math.~J.}} \textbf{67} (2018),
 2393--2433, \href{https://arxiv.org/abs/1509.08569}{arXiv:1509.08569}.

\bibitem{chen-guo-liu-qu:2016:DP-novikov-mCH-blowup}
Chen R.M., Guo F., Liu Y., Qu C., Analysis on the blow-up of solutions to a
 class of integrable peakon equations, \href{https://doi.org/10.1016/j.jfa.2016.01.017}{\textit{J.~Funct. Anal.}} \textbf{270}
 (2016), 2343--2374.

\bibitem{chertock-liu-pendleton:2012:CH-particle-method-convergence-analysis}
Chertock A., Liu J.G., Pendleton T., Convergence analysis of the particle
 method for the {C}amassa--{H}olm equation, in Hyperbolic Problems~-- Theory,
 Numerics and Applications, {V}ol.~2, \textit{Ser. Contemp. Appl. Math. CAM},
 Vol.~18, \href{https://doi.org/10.1142/9789814417099_0033}{World Sci. Publishing}, Singapore, 2012, 365--373.

\bibitem{chertock-liu-pendleton:2012:CH-DP-particle-method-global-weak}
Chertock A., Liu J.G., Pendleton T., Convergence of a particle method and
 global weak solutions of a family of evolutionary {PDE}s, \href{https://doi.org/10.1137/110831386}{\textit{SIAM~J.
 Numer. Anal.}} \textbf{50} (2012), 1--21.

\bibitem{chertock-liu-pendleton:2015:CH-elastic-peakon-collisions}
Chertock A., Liu J.G., Pendleton T., Elastic collisions among peakon solutions
 for the {C}amassa--{H}olm equation, \href{https://doi.org/10.1016/j.apnum.2014.01.001}{\textit{Appl. Numer. Math.}} \textbf{93}
 (2015), 30--46.

\bibitem{coclite-karlsen:2006:DPwellposedness}
Coclite G.M., Karlsen K.H., On the well-posedness of the
 {D}egasperis--{P}rocesi equation, \href{https://doi.org/10.1016/j.jfa.2005.07.008}{\textit{J.~Funct. Anal.}} \textbf{233}
 (2006), 60--91.

\bibitem{coclite-karlsen:2007:DPuniqueness}
Coclite G.M., Karlsen K.H., On the uniqueness of discontinuous solutions to the
 {D}egasperis--{P}rocesi equation, \href{https://doi.org/10.1016/j.jde.2006.11.008}{\textit{J.~Differential Equations}}
 \textbf{234} (2007), 142--160.

\bibitem{coclite-karlsen:2015:DPwellposedness-and-asymptotics}
Coclite G.M., Karlsen K.H., Periodic solutions of the {D}egasperis--{P}rocesi
 equation: well-posedness and asymptotics, \href{https://doi.org/10.1016/j.jfa.2014.11.008}{\textit{J.~Funct. Anal.}}
 \textbf{268} (2015), 1053--1077.

\bibitem{constantin:2000:CH-geometric-approach}
Constantin A., Existence of permanent and breaking waves for a shallow water
 equation: a geometric approach, \href{https://doi.org/10.5802/aif.1757}{\textit{Ann. Inst. Fourier (Grenoble)}}
 \textbf{50} (2000), 321--362.

\bibitem{constantin-escher:1998:CH-global-existence-blowup}
Constantin A., Escher J., Global existence and blow-up for a shallow water
 equation, \textit{Ann. Scuola Norm. Sup. Pisa Cl. Sci.~(4)} \textbf{26}
 (1998), 303--328.

\bibitem{constantin-escher:1998:CH-global-weak}
Constantin A., Escher J., Global weak solutions for a shallow water equation,
 \href{https://doi.org/10.1512/iumj.1998.47.1466}{\textit{Indiana Univ. Math.~J.}} \textbf{47} (1998), 1527--1545.

\bibitem{constantin-lannes:2009:hydrodynamical-CH-DP}
Constantin A., Lannes D., The hydrodynamical relevance of the {C}amassa--{H}olm
 and {D}egasperis--{P}rocesi equations, \href{https://doi.org/10.1007/s00205-008-0128-2}{\textit{Arch. Ration. Mech. Anal.}}
 \textbf{192} (2009), 165--186, \href{https://arxiv.org/abs/0709.0905}{arXiv:0709.0905}.

\bibitem{constantin-molinet:2000:CH-global-weak}
Constantin A., Molinet L., Global weak solutions for a shallow water equation,
 \href{https://doi.org/10.1007/s002200050801}{\textit{Comm. Math. Phys.}} \textbf{211} (2000), 45--61.

\bibitem{degasperis-holm-hone:2002:new-integrable-equation-DP}
Degasperis A., Holm D.D., Hone A.N.W., A new integrable equation with peakon
 solutions, \href{https://doi.org/10.1023/A:1021186408422}{\textit{Theoret. and Math. Phys.}} \textbf{133} (2002), 1463--1474,
 \href{https://arxiv.org/abs/nlin.SI/0205023}{arXiv:nlin.SI/0205023}.

\bibitem{degasperis-procesi:1999:asymptotic-integrability}
Degasperis A., Procesi M., Asymptotic integrability, in Symmetry and
 Perturbation Theory ({R}ome, 1998), Editors A.~Degasperis, G.~Gaeta, World
 Sci. Publ., River Edge, NJ, 1999, 23--37.

\bibitem{dullin-gottwald-holm:2001:integrable-shallow-water-linear-nonlinear-dispersion}
Dullin H.R., Gottwald G.A., Holm D.D., An integrable shallow water equation
 with linear and nonlinear dispersion, \href{https://doi.org/10.1103/PhysRevLett.87.194501}{\textit{Phys. Rev. Lett.}} \textbf{87}
 (2001), 194501, 4~pages, \href{https://arxiv.org/abs/nlin.CD/0104004}{arXiv:nlin.CD/0104004}.

\bibitem{dullin-gottwald-holm:2003:CH-KdV5-other-asymptotically-equivalent-shallow-water}
Dullin H.R., Gottwald G.A., Holm D.D., Camassa--{H}olm, {K}orteweg--de
 {V}ries-5 and other asymptotically equivalent equations for shallow water
 waves, \href{https://doi.org/10.1016/S0169-5983(03)00046-7}{\textit{Fluid Dynam. Res.}} \textbf{33} (2003), 73--95.

\bibitem{dullin-gottwald-holm:2004:asymptotically-equivalent-shallow-water}
Dullin H.R., Gottwald G.A., Holm D.D., On asymptotically equivalent shallow
 water wave equations, \href{https://doi.org/10.1016/j.physd.2003.11.004}{\textit{Phys.~D}} \textbf{190} (2004), 1--14,
 \href{https://arxiv.org/abs/nlin.PS/0307011}{arXiv:nlin.PS/0307011}.

\bibitem{escher-liu-yin:2006:DP-globalweak-blowup}
Escher J., Liu Y., Yin Z., Global weak solutions and blow-up structure for the
 {D}egasperis--{P}rocesi equation, \href{https://doi.org/10.1016/j.jfa.2006.03.022}{\textit{J.~Funct. Anal.}} \textbf{241}
 (2006), 457--485.

\bibitem{geng-xue:2009:GX-peakon-equation-cubic-nonlinearity}
Geng X., Xue B., An extension of integrable peakon equations with cubic
 nonlinearity, \href{https://doi.org/10.1088/0951-7715/22/8/004}{\textit{Nonlinearity}} \textbf{22} (2009), 1847--1856.

\bibitem{grunert-holden:2016:CH-peakon-antipeakon-alpha-dissipative}
Grunert K., Holden H., The general peakon-antipeakon solution for the
 {C}amassa--{H}olm equation, \href{https://doi.org/10.1142/S0219891616500119}{\textit{J.~Hyperbolic Differ. Equ.}} \textbf{13}
 (2016), 353--380, \href{https://arxiv.org/abs/1502.07686}{arXiv:1502.07686}.

\bibitem{grunert-holden-raynaud:2015:alpha-dissipative-2CH}
Grunert K., Holden H., Raynaud X., A continuous interpolation between
 conservative and dissipative solutions for the two-component
 {C}amassa--{H}olm system, \href{https://doi.org/10.1017/fms.2014.29}{\textit{Forum Math. Sigma}} \textbf{3} (2015), e1,
 73~pages, \href{https://arxiv.org/abs/1402.1060}{arXiv:1402.1060}.

\bibitem{himonas-holliman-kenig:2018:novikov-2-peakon-illposedness}
Himonas A.A., Holliman C., Kenig C., Construction of 2-peakon solutions and
 ill-posedness for the {N}ovikov equation, \href{https://doi.org/10.1137/17M1151201}{\textit{SIAM~J. Math. Anal.}}
 \textbf{50} (2018), 2968--3006, \href{https://arxiv.org/abs/1708.05759}{arXiv:1708.05759}.

\bibitem{holden-raynaud:2006:CH-convergent-scheme-based-on-multipeakons}
Holden H., Raynaud X., A convergent numerical scheme for the {C}amassa--{H}olm
 equation based on multipeakons, \href{https://doi.org/10.3934/dcds.2006.14.505}{\textit{Discrete Contin. Dyn. Syst.}}
 \textbf{14} (2006), 505--523.

\bibitem{holden-raynaud:2007:CH-global-conservative-multipeakon}
Holden H., Raynaud X., Global conservative multipeakon solutions of the
 {C}amassa--{H}olm equation, \href{https://doi.org/10.1142/S0219891607001045}{\textit{J.~Hyperbolic Differ. Equ.}} \textbf{4}
 (2007), 39--64.

\bibitem{holden-raynaud:2007:CH-global-conservative-Lagrangian}
Holden H., Raynaud X., Global conservative solutions of the {C}amassa--{H}olm
 equation~-- a~{L}agrangian point of view, \href{https://doi.org/10.1080/03605300601088674}{\textit{Comm. Partial Differential
 Equations}} \textbf{32} (2007), 1511--1549.

\bibitem{holden-raynaud:2008:CH-global-dissipative-multipeakon}
Holden H., Raynaud X., Global dissipative multipeakon solutions of the
 {C}amassa--{H}olm equation, \href{https://doi.org/10.1080/03605300802501715}{\textit{Comm. Partial Differential Equations}}
 \textbf{33} (2008), 2040--2063.

\bibitem{holden-raynaud:2008:CH-conservative-numerical-scheme-based-on-multipeakons}
Holden H., Raynaud X., A numerical scheme based on multipeakons for
 conservative solutions of the {C}amassa--{H}olm equation, in Hyperbolic
 Problems: Theory, Numerics, Applications, \href{https://doi.org/10.1007/978-3-540-75712-2_91}{Springer}, Berlin, 2008, 873--881.

\bibitem{holden-raynaud:2009:CH-dissipative-solutions}
Holden H., Raynaud X., Dissipative solutions for the {C}amassa--{H}olm
 equation, \href{https://doi.org/10.3934/dcds.2009.24.1047}{\textit{Discrete Contin. Dyn. Syst.}} \textbf{24} (2009),
 1047--1112.

\bibitem{hone-lundmark-szmigielski:2009:novikov}
Hone A.N.W., Lundmark H., Szmigielski J., Explicit multipeakon solutions of
 {N}ovikov's cubically nonlinear integrable {C}amassa--{H}olm type equation,
 \href{https://doi.org/10.4310/DPDE.2009.v6.n3.a3}{\textit{Dyn. Partial Differ. Equ.}} \textbf{6} (2009), 253--289,
 \href{https://arxiv.org/abs/0903.3663}{arXiv:0903.3663}.

\bibitem{hone-wang:2008:cubic-nonlinearity}
Hone A.N.W., Wang J.P., Integrable peakon equations with cubic nonlinearity,
 \href{https://doi.org/10.1088/1751-8113/41/37/372002}{\textit{J.~Phys.~A: Math. Theor.}} \textbf{41} (2008), 372002, 10~pages,
 \href{https://arxiv.org/abs/0805.4310}{arXiv:0805.4310}.

\bibitem{jiang-ni:2012:novikov-blowup}
Jiang Z., Ni L., Blow-up phenomenon for the integrable {N}ovikov equation,
 \href{https://doi.org/10.1016/j.jmaa.2011.06.067}{\textit{J.~Math. Anal. Appl.}} \textbf{385} (2012), 551--558.

\bibitem{jiang-ni-zhou:2012:CH-wave-breaking}
Jiang Z., Ni L., Zhou Y., Wave breaking of the {C}amassa--{H}olm equation,
 \href{https://doi.org/10.1007/s00332-011-9115-0}{\textit{J.~Nonlinear Sci.}} \textbf{22} (2012), 235--245.

\bibitem{johnson:2002:CH-KdV-water-waves}
Johnson R.S., Camassa--{H}olm, {K}orteweg--de {V}ries and related models for
 water waves, \href{https://doi.org/10.1017/S0022112001007224}{\textit{J.~Fluid Mech.}} \textbf{455} (2002), 63--82.

\bibitem{johnson:2003:classical-water-waves}
Johnson R.S., The classical problem of water waves: a reservoir of integrable
 and nearly-integrable equations, \href{https://doi.org/10.2991/jnmp.2003.10.s1.6}{\textit{J.~Nonlinear Math. Phys.}}
 \textbf{10} (2003), suppl.~1, 72--92.

\bibitem{kardell-lundmark:2016p:novikov-peakon-antipeakon}
Kardell M., Lundmark H., Peakon-antipeakon solutions of the {N}ovikov equation,
 {i}n preparation.

\bibitem{lai-li-wu:2013:novikov-global-solutions}
Lai S., Li N., Wu Y., The existence of global strong and weak solutions for the
 {N}ovikov equation, \href{https://doi.org/10.1016/j.jmaa.2012.10.048}{\textit{J.~Math. Anal. Appl.}} \textbf{399} (2013),
 682--691.

\bibitem{li-olver:2000:CH-wellposedness-blowup}
Li Y.A., Olver P.J., Well-posedness and blow-up solutions for an integrable
 nonlinearly dispersive model wave equation, \href{https://doi.org/10.1006/jdeq.1999.3683}{\textit{J~Differential Equations}}
 \textbf{162} (2000), 27--63.

\bibitem{liu-yin:2006:DP-global-existence-and-blowup}
Liu Y., Yin Z., Global existence and blow-up phenomena for the
 {D}egasperis--{P}rocesi equation, \href{https://doi.org/10.1007/s00220-006-0082-5}{\textit{Comm. Math. Phys.}} \textbf{267}
 (2006), 801--820.

\bibitem{liu-yin:2007:DP-blowup-phenomena}
Liu Y., Yin Z., On the blow-up phenomena for the {D}egasperis--{P}rocesi
 equation, \href{https://doi.org/10.1093/imrn/rnm117}{\textit{Int. Math. Res. Not.}} \textbf{2007} (2007), Art.~ID rnm117,
 22~pages.

\bibitem{lundmark:2007:shockpeakons}
Lundmark H., Formation and dynamics of shock waves in the
 {D}egasperis--{P}rocesi equation, \href{https://doi.org/10.1007/s00332-006-0803-3}{\textit{J.~Nonlinear Sci.}} \textbf{17}
 (2007), 169--198.

\bibitem{lundmark-szmigielski:2003:DPshort}
Lundmark H., Szmigielski J., Multi-peakon solutions of the
 {D}egasperis--{P}rocesi equation, \href{https://doi.org/10.1088/0266-5611/19/6/001}{\textit{Inverse Problems}} \textbf{19}
 (2003), 1241--1245, \href{https://arxiv.org/abs/nlin.SI/0503033}{arXiv:nlin.SI/0503033}.

\bibitem{lundmark-szmigielski:2005:DPlong}
Lundmark H., Szmigielski J., Degasperis--{P}rocesi peakons and the discrete
 cubic string, \href{https://doi.org/10.1155/IMRP.2005.53}{\textit{Int. Math. Res. Pap.}} \textbf{2005} (2005), 53--116,
 \href{https://arxiv.org/abs/nlin.SI/0503036}{arXiv:nlin.SI/0503036}.

\bibitem{lundmark-szmigielski:2016:GX-inverse-problem}
Lundmark H., Szmigielski J., An inverse spectral problem related to the
 {G}eng--{X}ue two-component peakon equation, \href{https://doi.org/10.1090/memo/1155}{\textit{Mem. Amer. Math. Soc.}}
 \textbf{244} (2016), viii+87~pages, \href{https://arxiv.org/abs/1304.0854}{arXiv:1304.0854}.

\bibitem{lundmark-szmigielski:2017:GX-dynamics-interlacing}
Lundmark H., Szmigielski J., Dynamics of interlacing peakons (and shockpeakons)
 in the {G}eng--{X}ue equation, \href{https://doi.org/10.1093/integr/xyw014}{\textit{J.~Integrable Syst.}} \textbf{2}
 (2017), xyw014, 65~pages, \href{https://arxiv.org/abs/1605.02805}{arXiv:1605.02805}.

\bibitem{matsuno:2007:CH-multisolitons-peakon-limit}
Matsuno Y., The peakon limit of the {$N$}-soliton solution of the {C}amassa--{H}olm equation,
\href{https://doi.org/10.1143/JPSJ.76.034003}{\textit{J.~Phys. Soc. Japan}} \textbf{76} (2007),
034003, 8~pages, \href{https://arxiv.org/abs/nlin.SI/0701051}{arXiv:nlin.SI/0701051}.

\bibitem{mckean:1998:CH-breakdown}
McKean H.P., Breakdown of a shallow water equation, \href{https://doi.org/10.4310/AJM.1998.v2.n4.a10}{\textit{Asian~J. Math.}}
 \textbf{2} (1998), 867--874.

\bibitem{mckean:2004:CH-breakdown2}
McKean H.P., Breakdown of the {C}amassa--{H}olm equation, \href{https://doi.org/10.1002/cpa.20003}{\textit{Comm. Pure
 Appl. Math.}} \textbf{57} (2004), 416--418.

\bibitem{molinet:2004:CH-wellposedness-survey}
Molinet L., On well-posedness results for {C}amassa--{H}olm equation on the
 line: a survey, \href{https://doi.org/10.2991/jnmp.2004.11.4.8}{\textit{J.~Nonlinear Math. Phys.}} \textbf{11} (2004),
 521--533.

\bibitem{ni-zhou:2011:novikov}
Ni L., Zhou Y., Well-posedness and persistence properties for the {N}ovikov
 equation, \href{https://doi.org/10.1016/j.jde.2011.01.030}{\textit{J.~Differential Equations}} \textbf{250} (2011), 3002--3021.

\bibitem{novikov:2009:generalizations-of-CH}
Novikov V., Generalizations of the {C}amassa--{H}olm equation,
 \href{https://doi.org/10.1088/1751-8113/42/34/342002}{\textit{J.~Phys.~A: Math. Theor.}} \textbf{42} (2009), 342002, 14~pages,
\href{https://arxiv.org/abs/0905.2219}{arXiv:0905.2219}.

\bibitem{rodriguezblanco:2001:CH-cauchy-problem}
Rodr\'{i}guez-Blanco G., On the {C}auchy problem for the {C}amassa--{H}olm
 equation, \href{https://doi.org/10.1016/S0362-546X(01)00791-X}{\textit{Nonlinear Anal.}} \textbf{46} (2001), 309--327.

\bibitem{shuaib-lundmark:2018p:GX-noninterlacing}
Shuaib B., Lundmark H., Non-interlacing peakon solutions of the {G}eng--{X}ue
 equation, \href{https://arxiv.org/abs/1812.09173}{arXiv:1812.09173}.

\bibitem{szmigielski-zhou:2013:DP-colliding-peakons-shock-formation}
Szmigielski J., Zhou L., Colliding peakons and the formation of shocks in the
 {D}egasperis--{P}rocesi equation, \href{https://doi.org/10.1098/rspa.2013.0379}{\textit{Proc.~R. Soc. Lond. Ser.~A Math.
 Phys. Eng. Sci.}} \textbf{469} (2013), 20130379, 19~pages, \href{https://arxiv.org/abs/1302.1377}{arXiv:1302.1377}.

\bibitem{szmigielski-zhou:2013:DP-peakon-antipeakon}
Szmigielski J., Zhou L., Peakon-antipeakon interactions in the
 {D}egasperis--{P}rocesi equation, in Algebraic and Geometric Aspects of
 Integrable Systems and Random Matrices, \textit{Contemp. Math.}, Vol.~593,
 \href{https://doi.org/10.1090/conm/593/11873}{Amer. Math. Soc.}, Providence, RI, 2013, 83--107, \href{https://arxiv.org/abs/1301.0171}{arXiv:1301.0171}.

\bibitem{tiglay:2011:novikov-periodic-cauchy-problem}
T{\i}\u{g}lay F., The periodic {C}auchy problem for {N}ovikov's equation,
 \href{https://doi.org/10.1093/imrn/rnq267}{\textit{Int. Math. Res. Not.}} \textbf{2011} (2011), 4633--4648,
 \href{https://arxiv.org/abs/1009.1820}{arXiv:1009.1820}.

\bibitem{wu-guo:2016:periodic-novikov-global-wellposedness}
Wu X., Guo B., Global well-posedness for the periodic {N}ovikov equation with
 cubic nonlinearity, \href{https://doi.org/10.1080/00036811.2015.1005611}{\textit{Appl. Anal.}} \textbf{95} (2016), 405--425.

\bibitem{wu-yin:2011:novikov-global-weak-solutions}
Wu X., Yin Z., Global weak solutions for the {N}ovikov equation,
 \href{https://doi.org/10.1088/1751-8113/44/5/055202}{\textit{J.~Phys.~A: Math. Theor.}} \textbf{44} (2011), 055202, 17~pages.

\bibitem{wu-yin:2012:novikov-wellposedness-global-existence}
Wu X., Yin Z., Well-posedness and global existence for the {N}ovikov equation,
 \href{https://doi.org/10.2422/2036-2145.201007_001}{\textit{Ann. Sc. Norm. Super. Pisa Cl. Sci.~(5)}} \textbf{11} (2012),
 707--727.

\bibitem{xin-zhang:2000:CH-weak-solutions-existence}
Xin Z., Zhang P., On the weak solutions to a shallow water equation,
 \href{https://doi.org/10.1002/1097-0312(200011)53:11<1411::AID-CPA4>3.0.CO;2-5}{\textit{Comm. Pure Appl. Math.}} \textbf{53} (2000), 1411--1433.

\bibitem{xin-zhang:2002:CH-weak-solutions-uniqueness-large-time-behaviour}
Xin Z., Zhang P., On the uniqueness and large time behavior of the weak
 solutions to a shallow water equation, \href{https://doi.org/10.1081/PDE-120016129}{\textit{Comm. Partial Differential
 Equations}} \textbf{27} (2002), 1815--1844.

\bibitem{yan-li-zhang:2012:novikov-cauchy-problem}
Yan W., Li Y., Zhang Y., The {C}auchy problem for the integrable {N}ovikov
 equation, \href{https://doi.org/10.1016/j.jde.2012.03.015}{\textit{J.~Differential Equations}} \textbf{253} (2012), 298--318.

\bibitem{yin:2003:DP-cauchy-problem:Illinois}
Yin Z., On the {C}auchy problem for an integrable equation with peakon
 solutions, \textit{Illinois~J. Math.} \textbf{47} (2003), 649--666.

\bibitem{zhou:2004:CH-wave-breaking}
Zhou Y., Wave breaking for a shallow water equation, \href{https://doi.org/10.1016/j.na.2004.02.004}{\textit{Nonlinear Anal.}}
 \textbf{57} (2004), 137--152.

\end{thebibliography}
\end{document}